\documentclass[11pt]{article}

\usepackage{hyperref}
\hypersetup{
    colorlinks=true,
    linkcolor=blue,
    filecolor=magenta,   
    urlcolor=blue,
}

\usepackage{latexsym}
\usepackage{amsmath,amssymb,amsthm}

\usepackage{setspace}
 \newcommand{\beqn}{\begin{eqnarray}}
 \newcommand{\eeqn}{\end{eqnarray}}
 \newcommand{\be}{\begin{equation}}
 \newcommand{\ee}{\end{equation}}
 \newcommand{\ba}{\begin{array}}
 \newcommand{\ea}{\end{array}}

 \newcommand{\pa}{\partial}

 \newcommand{\ds}{\displaystyle}
 \newcommand{\la}{\label}
 \newcommand{\rIm}{{\rm Im\5}}
 \newcommand{\rRe}{{\rm Re\5}}

\newcommand{\ov}{\overline}

\newcommand{\ti}{\tilde}
\newcommand{\cE}{{\cal E}}
\newcommand{\cH}{{\cal H}}
\newcommand{\bJ}{{\bf J}}
\newcommand{\ve}{\varepsilon}
\newcommand{\De}{\Delta}

\newcommand{\al}{\alpha}
\newcommand{\ga}{\gamma}

\newcommand{\si}{\sigma}
\newcommand{\om}{\omega}
\newcommand{\Om}{\Omega}
\newcommand{\na}{\nabla}
\newcommand{\lam}{\lambda}
\newcommand{\Lam}{\Lambda}
\newcommand{\5}{{\hspace{0.5mm}}}
\newcommand{\R}{\mathbb{R}}

\newcommand{\C}{\mathbb{C}}
\newcommand{\N}{\mathbb{N}}

\newtheorem{theorem}{Theorem}[section]

\newtheorem{defin}[theorem]{Definition}
\newtheorem{lemma}[theorem]{Lemma}
\newtheorem{remark}[theorem]{Remark}
\newtheorem{cor}[theorem]{Corollary}
\newtheorem{pro}[theorem]{Proposition}

\begin{document}
\begin{center}
{\Large On soliton asymptotics for  Maxwell--Lorentz 
 \\
 equations with rotating particle}
 \medskip 

E.A. Kopylova
 \medskip
 \\
{\it
Institute of Mathematics of BOKU University, Vienna, Austria\\
}
 elena.kopylova@boku.ac.at
\begin{abstract}
We consider the 2D Maxwell--Lorentz equations with an extended charged rotating particle. 
The system admits soliton solutions corresponding to a particle moving with a constant velocity and rotating with a constant angular velocity. 
Our main result is the asymptotic stability of  the solitons.
\end{abstract}

\end{center}
\setcounter{equation}{0}
\section{Introduction} \label{Int}
The Maxwell--Lorentz equations describe the motion of an extended charged particle with charge density $\rho(x)$
in the Maxwell field $(E(x, t), B(x, t))$.  We  express the Maxwell field in 
potentials $A(x,t)$,  and $\Phi(x,t)$:
\[
B(x,t)=\na\cdot JA(x,t),\qquad E(x,t)=-\dot A(x,t)-\na \Phi(x,t),\quad J=\begin{pmatrix} 0&1\\-1& 0\end{pmatrix},\quad \na\cdot A(x,t)=0.
\]
Then  the  2D Maxwell--Lorentz equations  with  a rotating particle read as follows \cite[(1.6), (3.2)]{KK2024}:
\be\la{mls3}
\left\{\ba{llll}
\dot A(x,t)=\Pi(x,t),\qquad
\dot \Pi(x,t)=\Delta A(x,t)-\dot\varphi J\varrho(x-q)+{\cal P}[\dot q\rho(x-q)],\\
m\dot q(t)=p-\langle A(x,t),\rho(x-q(t))\rangle,
\\
\dot p(t)=-\langle  A(x,t)\cdot \dot q(t),\nabla\rho(x-q(t))\rangle+\dot\varphi \langle \na\cdot JA(x,t),  \varrho(x-q(t))\rangle, 
\\
I\dot\varphi(t)=M(t)+\langle A(x,t),J\varrho(x-q(t))\rangle,\qquad  \dot M(t)=0.
\ea\right|,~x\in\R^2.
\ee
Here $q$ is the  location of the particle, $\varphi$  is the  angular coordinate,  $m >0$ is the mass of the particle,  $I>0 $ is  the moment of inertia,  
$\varrho(x)=x\rho(x)$,
and   the brackets $\langle,\rangle$ denote the inner product in the  Hilbert spaces $L^2(\R^2)$ or $L^2(\R^2)\otimes\R^2$. 
Further, $p=m\dot q+\langle A(x),\rho(x-q)\rangle$ is the  particle momentum,  $M=I\dot\varphi+\langle A(x),J\varrho(x-q)\rangle$ is the angular momentum, 
and ${\cal P}$  is the projection onto the space of divergence-free vector fields,
which in Fourier space is given by
\be\la{Pi-e}
\widehat{{\cal P}a}(k)=\hat a(k)-\frac{\hat a(k)\cdot k}{k^2}k=\frac{\hat a(k)\cdot Jk}{k^2}Jk.
\ee
We assume that the real-valued  charge density $\rho(x)$ satisfies the following conditions: 
\be\la{rosym}
\rho\in C_0^\infty(\R^2),\quad\rho(x)=\rho_{rad}(|x|),\quad\rho(x)\not\equiv 0,\quad \int_{\R^2} x^{\al}\rho(x) dx=0,\qquad  |\al|\le 2. 
\ee
These conditions imply that
\be\la{zero2}
\hat\rho(k)={\cal O}(|k|^4),\qquad k\to 0.
\ee
The system  admits  soliton solutions 
\be\la{solY}
Y_{a,v,\theta,\omega}(t)=(A_{v,\omega}(x-vt-a),\Pi_{v,\omega}(x-vt-a), vt+a,  p_{v,\omega}, {\omega} t+\theta, M_{v,\omega})
\ee
corresponding to a particle that  moves with constant  velocity $v\in\R^2$, $0\le |v|<1$ 
and rotates with constant angular velocity $\omega\in\R$.  

Our main result is as follows:
for finite energy solutions to \eqref{mls3} with  initial data  sufficiently close to  a soliton  with   velocity $v$ and angular velocity $\omega$,
the following  asymptotics hold:
\beqn\nonumber
&&\dot q(t)\to v_\pm\approx v,\quad   \dot\varphi(t)\to \omega_\pm\approx \omega,\\ 
\la{main-as}
&&\left(\ba{cc} A(x,t)\\\Pi(x,t)\ea\right)=\left(
\ba{cc} A_{v_{\pm},{\omega}_{\pm}}(x-v_{\pm}t-a_{\pm})\\ \Pi_{v_{\pm},{\omega}_{\pm}}(x-v_{\pm}t-a_{\pm})\ea\right)
+W_0(t)\Psi_{\pm}+r_{\pm}(x,t),\quad t\to\pm\infty.
\eeqn
Here $W_0(t)$ is the dynamical group of the free wave equation, 
$\Psi_{\pm}$ are the corresponding asymptotic scattering states, 
and the remainder $r_{\pm}(x,t)$ converges to zero in the global energy norm.

We prove these asymptotics  under a suitable  spectral condition \eqref{M-condition}.  
In particular, this condition holds for solitons with $v\ne 0$,  small $\omega$, and large $I$, and for solitons with
$v=0$ and small $\om$ under the  Wiener condition $\hat\rho(k)\ne 0$ for $k\ne 0$; see Appendix  B.

 \smallskip
Let us  now comment on previous results on the asymptotic stability of solitons  for wave-particle systems.
Soliton scattering asymptotics  of type \eqref{main-as} were first obtained  in \cite{IKV2006} for a moving relativistic  particle coupled to the 3D Klein--Gordon field. 
In \cite{KK2006},\cite {KKS2011}, \cite{IKV2012}, \cite{KKS2018} this result was extended to moving relativistic  and nonrelativistic particles 
coupled to the 3D Dirac, Schr\"odinger and wave equations. 
The asymptotic stability of moving solitons (with $\omega\equiv 0$)   was established in  \cite{IKS2011} for the  3D 
 Maxwell--Lorentz system. The techniques of \cite{IKS2011}  rely essentially on the strong Huygens' principle, 
and the corresponding resolvent is smooth at the threshold $\lam = 0$.   
 All these results were obtained under a Wiener-type condition on the charge density: $\hat \rho(k)\ne 0$ for $k\ne 0$.
For related surveys, see \cite{CKK2023, KK2020, KK2022}.

 In  \cite{KK2023},  the 2D Maxwell--Lorentz system  \eqref{mls3} with  a moving nonrelativistic  particle without rotation 
was considered (i.e., the first three equations of  \eqref{mls3} with $\dot\varphi=0$).
In this case, the absence of the strong Huygens' principle   required a new  approach based 
 on the Puiseux expansion of the corresponding  resolvent operator. 
Condition  \eqref{rosym} is imposed to provide  sufficient spatial decay of the solitons.  
Note that  in  \cite{KK2023} the  soliton scattering  asymptotics of type \eqref{main-as} were obtained  without the Wiener condition.

In the present paper, we extend the   result of \cite{KK2023}  to the system \eqref{mls3} with  a moving and rotating particle.  
In the case $\omega\ne 0$,  the functions $A_{v,\omega}$ and  $\Pi_{v,\omega}$  have   weaker decay.
Despite this, we   strengthen the result of \cite{KK2023} by reducing the exponent $\beta>7/2$ 
in our main Theorem \ref{main}  to  $\beta> 5/2$.  This becomes possible due to a suitable extension of  the domain of the corresponding resolvent.

We develop the  strategy of \cite{BKKS, BS, IKV2006} based on  symplectic geometry methods for  Hamiltonian systems in Hilbert spaces 
and on the spectral theory of nonselfadjoint operators.
The  key point in the proof  of asymptotics \eqref{main-as} is a dispersive decay for the  linearized dynamics $e^{L_{v,\om}t}$ on a soliton
(Theorem \ref{lindecay}). The decay holds in the direction   symplectically orthogonal  to the solitary manifold.

The orbital stability for the system \eqref{mls3} with  a rotating particle was obtained in \cite{KK2024}. 
The global attraction to the solitary manifold was proved in \cite{KK2023+}
 for the system \eqref{mls3}  with a rotating  particle  located at the origin, i.e.,  with $q(t)\equiv 0$. 

The paper is organized as follows.
In Section \ref{M-Res} we prove asymptotics \eqref{main-as} for solutions to the nonlinear equations \eqref{mls3}, relying on the dispersive decay of the linearized dynamics. This decay is established separately, in Section \ref{lin-dyn}.
\setcounter{equation}{0}
\section{Asymptotic behavior of solutions to the nonlinear equation \eqref{mls3}}\label{M-Res}
\subsection{Preliminary facts and formulation of the result}
First, we introduce a phase space for the system \eqref{mls3}.  Let  $L^2_\beta(\R^2)$, $\beta\in\R$, be  the  weighted Sobolev space with the norm
$\Vert f\Vert_{L^2_{\beta}(\R^2)}=\Vert (1+|x|)^{\beta} f(x)\Vert_{L^2(\R^2)}$, and let 
${\bf L}^2_{\beta}$ be  the space of divergence-free vector  fields  in  $L^2_{\beta}(\R^2)\otimes\R^2$.
Denote by ${\cal F}_{\beta}$ the space  of  divergence-free vector  fields $(A,\Pi)$ and by ${\cal E}_{\beta}$  the space 
of  states $Y=(A,\Pi,q,p,\varphi, M)$ with the norms
\[
\Vert (A,\Pi)\Vert_{{\cal F}_\beta}=\Vert \na A\Vert_{{\bf L}^2_{\beta}}+\Vert \Pi\Vert_{{\bf L}^2_{\beta}},\qquad
\Vert Y\Vert_{\beta}=\Vert \na A\Vert_{{\bf L}^2_{\beta}}+\Vert \Pi\Vert_{{\bf L}^2_{\beta}}+|q|+|p|+|\varphi|+|M|.
\]
Denote ${\cal F}={\cal F}_0$, ${\cal E}={\cal E}_0$.
The system (\ref{mls3}) can be written  as the Hamiltonian system
\be\la{canH+}
\dot Y={\bf J}{\cal D}\cH(Y),~~ Y=(A,\Pi, q, p,\varphi, M)\in {\cal E},~~ 
{\bf J}=\begin{pmatrix}
0 & I_2 & 0 & 0 & 0  \\
-I_2 & 0 & 0 & 0& 0 \\
0 & 0 & 0 & I_2& 0  \\
0 & 0 & -I_2 & 0 & 0 \\
0 & 0 & 0 & 0& J  
\end{pmatrix},
 ~~ I_2=\begin{pmatrix} 1& 0\\ 0&1\end{pmatrix},
\ee
where ${\cal D}{\cal H}$ is the Fr\'echet derivative of the Hamiltonian functional
$$
\cH(Y)=\ds\frac 12\int[|\Pi(x)|^2+|\na A(x)|^2]dx+\frac 1{2}m\dot q^2(t)+\frac 1{2} I\dot\varphi^2(t).
$$
Here $\dot q=\frac{1}{m}\big(p-\langle A,\rho(x-q)\rangle\big)$, $\dot\varphi=\frac{1}{I}\big(M+\langle A,J\varrho(x-q)\rangle\big)$ 
 in agreement with  (\ref{mls3}).
\begin{pro} 
Let \eqref{rosym} hold, and let $Y_0=(A_0, \Pi_0, q_0, p_0,\varphi_0, M_0)\in {\cal E}$. Then\\
(i) there exists a unique solution $Y(t)\in C(\R, {\cal E})$ to the Cauchy problem for \eqref{mls3};\\
(ii) the energy is conserved: $\cH(Y(t))=\cH(Y_0)$ for  $t\in \R$.
\end{pro}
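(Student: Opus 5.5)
Local well-posedness will come from a contraction argument in Duhamel form, and global existence from energy conservation, which controls the $\cal E$-norm. We write \eqref{mls3} as a perturbation of the free divergence-free wave equation coupled to an ODE. Concretely, for the field component we use
\[
(A(t),\Pi(t))=W_0(t)(A_0,\Pi_0)+\int_0^t W_0(t-s)\big(0,\,-\dot\varphi J\varrho(x-q)+{\cal P}[\dot q\rho(x-q)]\big)(s)\,ds,
\]
where $W_0(t)$ is the (energy-unitary) group of $\dot A=\Pi$, $\dot\Pi=\Delta A$ on divergence-free fields. The mechanical variables $(q,p,\varphi,M)$ solve an ODE whose right-hand side depends on $A$ only through the pairings $\langle A,\rho(x-q)\rangle$, $\langle A,J\varrho(x-q)\rangle$, $\langle A\cdot\dot q,\na\rho(x-q)\rangle$ and $\langle \na\cdot JA,\varrho(x-q)\rangle$.

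\textbf{Step 1: Lipschitz estimates.} Since $\rho,\varrho\in C_0^\infty$, all these pairings are locally Lipschitz in $(A,q)$ with respect to the $\cal E$-norm. The only subtle point is that $\|\na A\|_{L^2}$ in 2D does not control $\|A\|_{L^2}$. To handle this, use \eqref{zero2}: writing $\langle A,\rho(x-q)\rangle$ in Fourier variables and using $\hat\rho(k)={\cal O}(|k|^4)$, we get
\[
|\langle A,\rho(\cdot-q)\rangle|\le \|\,|k|\hat A\|_{L^2}\,\|\hat\rho/|k|\|_{L^2}\le C\|\na A\|_{L^2}.
\]
The same bound holds with $\rho$ replaced by $\varrho$, since $\hat\varrho=i\na\hat\rho={\cal O}(|k|^3)$, and likewise for $\na\rho$. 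Differentiation in $q$ only produces further compactly supported smooth kernels with the same vanishing moments, which gives the Lipschitz dependence on $q$. The sources $\dot\varphi J\varrho(x-q)$ and ${\cal P}[\dot q\rho(x-q)]$ are Lipschitz into $L^2$, because ${\cal P}$ is bounded on $L^2$. One caveat: $\dot q$ appears on the right of the $\dot p$ equation, but it is an explicit function of $(p,A,q)$ through the second equation, so there is no implicitness. With these estimates, the Duhamel map is a contraction on $C([0,T],{\cal E})$ for small $T$ depending on $\|Y_0\|_{\cal E}$. This gives local existence and uniqueness, and continuity in time follows from the strong continuity of $W_0$.

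\textbf{Step 2: energy conservation.} This is the step I expect to need the most care. First take smooth, rapidly decaying data (dense in $\cal E$). For such data the solution is regular enough to differentiate $\cH(Y(t))$. We have $\frac{d}{dt}\frac12\int(|\Pi|^2+|\na A|^2)=\langle \Pi,-\dot\varphi J\varrho+{\cal P}[\dot q\rho]\rangle$, and the projection ${\cal P}$ can be dropped because $\Pi$ is divergence-free. This term should cancel exactly against $\frac{d}{dt}\big(\frac12 m\dot q^2+\frac12 I\dot\varphi^2\big)$ computed from the particle equations. Checking that cancellation uses the identity $\frac{d}{dt}\langle A,\rho(x-q)\rangle=\langle\Pi,\rho(x-q)\rangle-\langle A\cdot$ shifted gradient$\rangle$ together with the Lorentz-force terms. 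Equivalently, and more cleanly, it follows from the Hamiltonian form \eqref{canH+}: since $\bf J$ is skew, $\frac{d}{dt}\cH=\langle{\cal D}\cH,{\bf J}{\cal D}\cH\rangle=0$. Conservation for general data then follows by density, using the continuous dependence on data from Step 1.

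\textbf{Step 3: global existence.} Energy controls the norm. We have $\cH\ge\frac12\|(A,\Pi)\|_{\cal F}^2$, and $\cH$ also bounds $|\dot q|$ and $|\dot\varphi|$. Because $M$ is constant, and $p=m\dot q+\langle A,\rho(x-q)\rangle$ is bounded via Step 1, $|p|$ stays bounded. Then $|q(t)|\le|q_0|+Ct$ and $|\varphi(t)|\le|\varphi_0|+Ct$ grow at most linearly. So $\|Y(t)\|_{\cal E}$ cannot blow up in finite time, the local solution extends to all $t\in\R$, and this proves (i) and (ii).
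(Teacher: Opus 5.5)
Your proposal is correct and is essentially the standard argument the paper invokes by citing \cite[Lemma A.2]{KS2000}. That argument is a contraction for the Duhamel/ODE formulation with all couplings controlled by $\Vert\na A\Vert_{L^2}$ thanks to the vanishing moments of $\rho$, energy conservation from the Hamiltonian form \eqref{canH+} for regular data extended by density, and global continuation from the energy bound. The cancellation you leave unchecked in Step~2 does hold: for radial $\rho$ one has $J\na\big((u\cdot x)\rho\big)=(u\cdot\na)(Jx\rho)$, which is exactly what makes \eqref{mls3} Hamiltonian with respect to $\cH$, and it also makes $J\varrho$ divergence-free, so the field stays in the divergence-free class.
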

The proof is similar to that of \cite[Lemma A.2]{KS2000}.
Solitons of the system (\ref{mls3}) are solutions of the form \eqref{solY} with 
\be\la{solvom}
p_{v,\omega}=mv+\langle A_{v,\omega},\rho\rangle,\quad M_{v,\omega} =I{\omega}  -\langle A_{v,\omega},J\varrho\rangle.
\ee
Substituting \eqref{solY} into the first two equations of \eqref{mls3}, we obtain
\be\la{Hs5}
\Pi_{v,\omega}(y)=-(v\cdot\na)A_{v,\omega}(y), \qquad 
-(v\cdot\na)\Pi_{v,\omega}(y)=\De A_{v,\omega}(y)-{\omega} J\varrho(y)+{\cal P}[v\rho(y)].
\ee
Hence,
$\Delta  A_{v,\omega}(y)-(v\cdot\na)^2 A_{v,\omega}(y)={\omega} J\varrho(y)-{\cal P}[v\rho(y)]$.
For $|v|<1$, in the  Fourier representation, we get 
\be\la{solit3}
\hat A_{v,\omega}(k)=
\frac{(v-\frac{(v\cdot k)k}{k^2})\hat \rho(k)}{\hat D_0}-\frac{{\omega} J\hat\varrho(k)}{\hat D_0},~~{\rm where}~~~\hat D_0:=k^2-(v\cdot k)^2.
\ee
By \eqref{rosym}, $\na A_{v,\omega}(y),\Pi_{v,\omega}(y)\sim |y|^{-4}$ as $|y|\to \infty$ (cf.  \cite[Lemma 3.1]{KK2023}).
Hence, 
$\na A_{v,\omega},\Pi_{v,\omega}\in {\bf L}^2_{\beta}$  with $\beta<3$.
Denote   ${\cal V}=\{v\in\R^2:|v|<1\}$.
\begin{defin}
$S(\sigma):=(A_{v,\omega}(x-b),\Pi_{v,\omega}(x-b),b, p_{v,\om},\ga, M_{v,\omega})$, where
$\sigma:=(b,v,\ga, \omega)$ with $b\in\R^2$, $v\in {\cal V}$, and $\ga,{\omega}\in\R$.
\end{defin}
The soliton solution \eqref{solY} can be represented as  $S(\si(t))$ with
$\si(t)=(vt+a,v,{\omega} t+\theta, {\omega})$.
\begin{defin}
The solitary manifold is the set ${\cal S}:=\{S(\si):\si\in\Sigma:=\R^2\times {\cal V}\times\R\times\R\}$.
\end{defin}
Our main result is the following theorem:
\begin{theorem}\la{main}
Let condition \eqref{rosym} and the spectral condition \eqref{M-condition} hold, and let $5/2<\beta<3$.
Suppose that $Y_0\in {\cal E}_{\beta}$ is sufficiently close to the solitary manifold in the following sense:
\be\la{close}
Y_0=S(\si_0)+Z_0,\qquad d_\beta=\Vert Z_0\Vert_{\beta}\ll 1.
\ee
Let $Y(t)\in C(\R, {\cal E})$ be the solution to \eqref{mls3} with initial data $Y_0$. Then:  
\beqn\la{qq-as}
&&\dot q(t)= v_{\pm}+{\cal O}(|t|^{-3/2}),\quad  q(t)=v_{\pm}t+a_{\pm}+{\cal O}(|t|^{-1}),\quad t\to\pm\infty,\\
\la{wphi-as}
&&\dot\varphi(t)={\omega}_{\pm}+{\cal O}(|t|^{-3/2}),\quad \varphi(t)={\omega}_{\pm}t+\theta_{\pm}+ {\cal O}(|t|^{-1}),\quad t\to\pm\infty,\\
\la{AP-as}
&&\left(\ba{cc} A(x,t)\\\Pi(x,t)\ea\right)=\left(
\ba{cc} A_{v_{\pm},{\omega}_{\pm}}(x-v_{\pm}t-a_{\pm})\\ \Pi_{v_{\pm},{\omega}_{\pm}}(x-v_{\pm}t-a_{\pm})\ea\right)
+W_0(t)\Psi_{\pm}+r_{\pm}(x,t),\quad t\to\pm\infty,
\eeqn
and the remainder $r_{\pm}(x,t)$ converges to zero in the global energy norm:
\be\la{r-as}
\Vert r_{\pm}(t)\Vert_{\cal F}={\cal O}(|t|^{-1/2}),\quad t\to\pm\infty.
\ee
\end{theorem}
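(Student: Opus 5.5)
The proof follows the Buslaev--Perelman / Imaykin--Komech--Vainberg strategy of \cite{BKKS, BS, IKV2006, KK2023}: a modulation (symplectic projection) onto the solitary manifold ${\cal S}$, combined with dispersive decay for the linearized dynamics in the symplectically orthogonal direction.

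\emph{Step 1: symplectic decomposition.} Let $\Omega$ be the symplectic form defined by ${\bf J}$. The tangent space $T_{S(\si)}{\cal S}$ is spanned by $\pa_{b_j}S,\pa_{v_j}S,\pa_\ga S,\pa_\om S$. I first check that $\Omega$ restricted to this six-dimensional space is nondegenerate. Its matrix has entries of the form $m+\langle\cdot\rangle$, $I+\langle\cdot\rangle$, and by the decay $\na A_{v,\om},\Pi_{v,\om}\sim|y|^{-4}$ these are finite. By the implicit function theorem, every $Y$ in a small ${\cal E}_\beta$-neighbourhood of ${\cal S}$ then decomposes uniquely as
\[
Y=S(\si)+Z,\qquad Z\ \Omega\text{-orthogonal to } T_{S(\si)}{\cal S}.
\]
Writing $Y(t)=S(\si(t))+Z(t)$ with $\si(t)=(b(t),v(t),\ga(t),\om(t))$, I derive the modulation equations
\[
\dot b=v+{\cal O}(\Vert Z\Vert^2),\quad \dot v={\cal O}(\Vert Z\Vert^2),\quad \dot\ga=\om+{\cal O}(\Vert Z\Vert^2),\quad \dot\om={\cal O}(\Vert Z\Vert^2),
\]
where the norms are local weighted norms $\Vert Z\Vert_{-\beta}$, since $\rho$ has compact support. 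For $Z$ itself, in the comoving frame $y=x-b(t)$, I obtain
\[
\dot Z=L_{v(t),\om(t)}Z+N(t),\qquad \Vert N\Vert_\beta\lesssim \Vert Z\Vert_{-\beta}^2 .
\]

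\emph{Step 2: freeze parameters and use linear decay.} Following \cite{BS}, I fix $v_1=v(t_1)$, $\om_1=\om(t_1)$ on a finite interval $[0,t_1]$ and write
\[
\dot Z=L_{v_1,\om_1}Z+\big[(L_{v(t),\om(t)}-L_{v_1,\om_1})Z+N\big].
\]
I then apply the dispersive decay of Theorem~\ref{lindecay} to the symplectic orthogonal projection of $Z$:
\[
\Vert e^{Lt}PZ\Vert_{-\beta}\lesssim (1+|t|)^{-3/2}\Vert Z\Vert_\beta .
\]
This is where $\beta>5/2$ and the spectral condition \eqref{M-condition} enter. The projection $P$ differs from the time-dependent orthogonality condition only by terms ${\cal O}(|v-v_1|+|\om-\om_1|)\,\Vert Z\Vert$, so these are perturbative. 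Applying Duhamel and estimating
\[
|v(t)-v_1|+|\om(t)-\om_1|\le \int_t^{t_1}{\cal O}(\Vert Z\Vert^2)\,ds,
\]
I close a bootstrap for the majorant $\mathbb M(t_1)=\sup_{0\le t\le t_1}(1+t)^{3/2}\Vert Z(t)\Vert_{-\beta}$, obtaining $\mathbb M\lesssim d_\beta+\mathbb M^2$. Hence $\mathbb M\lesssim d_\beta$ uniformly in $t_1$.

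\emph{Step 3: conclusions.} From the bootstrap, $\dot v,\dot\om={\cal O}(t^{-3})$. Integrating gives limits $v_\pm,\om_\pm$ with $v-v_\pm={\cal O}(t^{-2})$. Since $\dot q=v+{\cal O}(\Vert Z\Vert_{-\beta})$ and $\dot\varphi=\om+{\cal O}(\Vert Z\Vert_{-\beta})$, I get $\dot q=v_\pm+{\cal O}(t^{-3/2})$. One more integration gives $q=v_\pm t+a_\pm+{\cal O}(t^{-1})$, and similarly for $\varphi$; this yields \eqref{qq-as} and \eqref{wphi-as}. For the field, the difference $(A,\Pi)-(A_{v_\pm,\om_\pm},\Pi_{v_\pm,\om_\pm})(x-v_\pm t-a_\pm)$ satisfies the free wave equation with sources. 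These are localized near $q(t)$, namely $\dot\varphi J\varrho$ and ${\cal P}[\dot q\rho]$ minus their soliton counterparts, and are of size ${\cal O}(t^{-3/2})$ in ${\cal F}$, together with the soliton mismatch. By a Cook-type argument,
\[
\Psi_\pm=W_0(-t)(\cdots)+\int_t^{\pm\infty}W_0(-s)f(s)\,ds
\]
converges in ${\cal F}$. The tail integral is ${\cal O}(t^{-1/2})$, which gives \eqref{r-as}.

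\emph{Main obstacle.} The hard step is Step 2. The linear decay is only available in weighted norms and only for a frozen operator, and the solitons with $\om\ne0$ decay only like $|y|^{-4}$, which forces $\beta<3$. Controlling the commutator terms $(L_{v(t),\om(t)}-L_{v_1,\om_1})Z$ and the projection mismatch requires the integrable decay rate $t^{-3/2}$. I would first try the approach of \cite{KK2023}: keep the Puiseux-expansion resolvent bounds with the extended resolvent domain, so that $\beta>5/2$ suffices in the convolution estimate
\[
\int_0^t (1+t-s)^{-3/2}(1+s)^{-3}\,ds\lesssim (1+t)^{-3/2}.
\]
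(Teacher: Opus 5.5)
\emph{Verdict.} Your architecture (symplectic modulation, frozen operator, Theorem~\ref{lindecay}, majorant bootstrap) is the paper's, but Step~2 does not close as you state it, and Step~3 has a further flaw.

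\textbf{Step 2.} The exact linearization is $\dot Z=L_{v(t),u(t),\om(t)}Z+T+N$ with $u=\dot b$, and $L_{v,u,\om}$ contains the transport operator $u\cdot\na$ acting on the field components $(\Lam,\Psi)$. So your remainder $N$ secretly contains $(\dot b-v)\cdot\na(\Lam,\Psi)$. Likewise, the frozen difference $L_{v(t),\om(t)}-L_{v_1,\om_1}$ contains $(v(t)-v_1)\cdot\na(\Lam,\Psi)$. These terms lose a derivative and do not map ${\cal E}_{-\beta}$ into ${\cal E}_{\beta}$. Their ${\cal E}_\beta$-norm involves $\Vert\na^2\Lam\Vert_{L^2_\beta}+\Vert\na\Psi\Vert_{L^2_\beta}$, which is not controlled by $\Vert Z\Vert_{-\beta}$ and is not even finite for finite-energy data. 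Hence neither $\Vert N\Vert_\beta\lesssim\Vert Z\Vert_{-\beta}^2$ nor $\Vert(L_{v(t),\om(t)}-L_{v_1,\om_1})Z\Vert_\beta\lesssim(|v-v_1|+|\om-\om_1|)\Vert Z\Vert_{-\beta}$ holds, and Theorem~\ref{lindecay} cannot be applied inside the Duhamel integral.

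The missing idea is the paper's change of variables $y_1=y+\ell_1(t)$ with $\ell_1(t)=\int_{t_1}^t(u(s)-v(t_1))\,ds$, i.e.\ passing to the frame moving with the constant velocity $v(t_1)$. This removes all the terms $(u(t)-v(t_1))\cdot\na$ at once. What is left of $\mathfrak{L}(t,t_1)$ is built only from $\rho,\varrho$ and satisfies \eqref{bB-est}. The price is an extra bootstrap condition $\sup_{0\le t\le t_1}|\ell_1(t)|\le 1$, needed so that $\Vert\check Z\Vert_\alpha\le C\Vert Z\Vert_\alpha$. It is closed via $|\ell_1(t)|\le\int_t^{t_1}\bigl(|\dot c(s)|+\int_s^{t_1}|\dot v(\tau)|\,d\tau\bigr)ds\le CM^2(t_1)$, and this double integral is bounded only because $\dot c,\dot v={\cal O}(t^{-3})$. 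Relatedly, the projection ${\bf\Pi}$ must be defined on a neighbourhood of ${\cal S}$ in ${\cal E}_{-\beta}$ (Lemma~\ref{skewpro} with $\alpha=-\beta$), not in ${\cal E}_\beta$, because only $\Vert Z(t)\Vert_{-\beta}$ stays small.

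\textbf{Step 3.} If you subtract the limiting soliton centred at $v_\pm t+a_\pm$, the source difference contains $\om_\pm J[\varrho(x-q(t))-\varrho(x-v_\pm t-a_\pm)]$ and the analogous ${\cal P}[v_\pm\rho]$ term. Their $L^2$-norm is of order $|q(t)-v_\pm t-a_\pm|$, which is in general only ${\cal O}(t^{-1})$, not ${\cal O}(t^{-3/2})$. The Cook integral is then not absolutely convergent and does not give \eqref{r-as}.

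The paper instead subtracts the accompanying soliton $(A_{\dot q(t),\dot\varphi(t)},\Pi_{\dot q(t),\dot\varphi(t)})(x-q(t))$. For it the sources cancel exactly, and the only forcing is $R(t)=(\ddot q\cdot\na_{\mathrm v}+\ddot\varphi\,\pa_{\mathrm w})F_{\mathrm{v},\mathrm{w}}(x-q(t))$. The paper shows $\ddot q,\ddot\varphi={\cal O}(t^{-3/2})$, using an exact cancellation of the soliton contributions to $m\ddot q$, and obtains $\Vert r_\pm(t)\Vert_{\cal F}={\cal O}(t^{-1/2})$. Only afterwards does it replace the accompanying soliton by the limiting one, at cost ${\cal O}(t^{-1})$ in ${\cal F}$. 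Subtracting the field part of the modulated soliton $S(\si(t))$ would work equally well, since $q-b=r$, $\dot q-v$ and $\dot\varphi-\om$ are all ${\cal O}(t^{-3/2})$.

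\textbf{Smaller points.} First, $q=v_\pm t+a_\pm+{\cal O}(t^{-1})$ does not follow by integrating $\dot q=v_\pm+{\cal O}(t^{-3/2})$; that gives only ${\cal O}(t^{-1/2})$. You need $q=b+r$ with $b(t)=c(t)+\int_0^t v(s)\,ds$, $c-c_\pm,\ v-v_\pm={\cal O}(t^{-2})$ and $r={\cal O}(t^{-3/2})$, and likewise $\varphi=\ga+\psi$. Second, finiteness of the entries of ${\bf\Om}$ says nothing about nondegeneracy. For $v\ne0\ne\om$ the couplings $\Om_{16},\Om_{35}$ are present. The paper needs the factorization $\det{\bf\Om}=\Om_{24}^2(\Om_{13}\Om_{56}+\Om_{16}\Om_{35})^2$ together with the Cauchy--Schwarz bound $|\Om_{16}\Om_{35}|<\Om_{13}\Om_{56}$.
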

\subsection{Symplectic structure}\label{SymS}
Here we introduce  a symplectic structure on the phase space ${\cal E}$.
We define  a symplectic form $\Om$ on ${\cal E}$  as follows:
\be\la{OmJ}
\Om(Y_1,Y_2)=\langle Y_1,\bJ Y_2\rangle,\,\,\,Y_1,Y_2\in {\cal E},
\ee
where
$\langle Y_1,Y_2\rangle:=\langle A_1, A_2\rangle+\langle\Pi_1,\Pi_2\rangle+q_1 \cdot q_2+p_1\cdot  p_2+ \varphi_1 \varphi_2+M_1  M_2$.
\begin{defin}
i) We write $Y_1\nmid Y_2$  if   $Y_1$ is symplectically orthogonal to $Y_2$, i.e.,  if $\Om(Y_1,Y_2)=0$.
\\
ii) A projection operator ${\bf P}:{\cal E}\to{\cal E}$ is  symplectically orthogonal if $Y_1\nmid Y_2$ for all $Y_1\in\mbox{\rm Ker}\5{\bf P}$ and
all $Y_2\in\mbox {\rm Im}\5{\bf P}$.
\end{defin}
Denote by  ${\cal T}_{\si}{\cal S}$  the tangent space to the manifold ${\cal S}$ at the point $S(\si)$.
The vectors 
\be\la{inb}
\left.\ba{rclrclrclrclrclrcl}
\tau_j(v,\om):=\pa_{b_j}S(\si)=&(-\pa_j A_{v,\om}(y),&-\pa_j\Pi_{v,\om}(y),&e_j,& 0,& 0,& 0)
\\
\tau_{j+2}(v,\om):=\pa_{v_j}S(\si)=&(\pa_{v_j}A_{v,\om}(y),&\pa_{v_j}\Pi_{v,\om}(y),& 0,& \pa_{v_j}p_{v,\om}, &  0, & \pa_{v_j}M_{v,\om})
\\
\tau_{5}(v,\om):=\pa_{\gamma}S(\si)=&( 0,& 0,& 0,& 0,& 1,& 0)
\\
\tau_{6}(v,\om):=\pa_{\om}S(\si)=&(\pa_{\om}A_{v,\om}(y),& \pa_{\om}\Pi_{v,\om}(y),& 0,& 0,& 0,&\pa_{\om}M_{v,\om})
\ea\right.
\ee
form a basis in ${\cal T}_{\si}{\cal S}$. Here  $j=1,2$,  $e_1=(1,0)$,  $e_2=(0,1)$.
\begin{lemma}\la{Ome}
The matrix ${\bf\Om}={\bf\Om}(v,\om)$ with entries  $\Om(\tau_j(v, \om),\tau_l(v,\om))$ is non-degenerate for any $v\in {\cal V}$ and ${\om}\in\R$.
\end{lemma}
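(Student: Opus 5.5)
We compute the $6\times 6$ matrix $\Om(\tau_j,\tau_l)$ explicitly and show that its determinant is nonzero. By \eqref{OmJ}, for $Y_i=(A_i,\Pi_i,q_i,p_i,\varphi_i,M_i)$ we have
\[
\Om(Y_1,Y_2)=\langle A_1,\Pi_2\rangle-\langle \Pi_1,A_2\rangle+q_1\cdot p_2-p_1\cdot q_2+\varphi_1M_2-M_1\varphi_2 .
\]
The pairings $\langle A,\Pi\rangle$ between $A$-components (which decay only like $|y|^{-3}$) and $\Pi$-components (which decay like $|y|^{-4}$) are finite. We will evaluate them in Fourier space via \eqref{solit3}, using \eqref{Hs5} in the form $\hat\Pi_{v,\om}=-i(v\cdot k)\hat A_{v,\om}$.

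The easy entries come first.

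\emph{Translations.} $\Om(\tau_j,\tau_l)=\langle \pa_jA,\pa_l\Pi\rangle-\langle\pa_j\Pi,\pa_lA\rangle$ for $j,l\le 2$. In Fourier space this equals $\int k_jk_l(\overline{\hat A}\hat\Pi-\overline{\hat\Pi}\hat A)\,dk$. Since $\hat\Pi=-i(v\cdot k)\hat A$ and $k_jk_l(v\cdot k)$ is odd in $k$ while $|\hat A|^2$ is even, this vanishes. (Real-valuedness of the functions forces the imaginary part to drop.)

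\emph{Rotation.} $\tau_5$ has only a $\varphi$-component, so $\Om(\tau_5,\tau_j)=0$ for $j\ne 6$ except through $M$. We get $\Om(\tau_5,\tau_{j+2})=\pa_{v_j}M_{v,\om}$ and $\Om(\tau_5,\tau_6)=\pa_\om M_{v,\om}$.

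\emph{Translation against the rest.} Integration by parts gives
\[
\Om(\tau_j,\tau_{l+2})=-\langle\pa_jA,\pa_{v_l}\Pi\rangle+\langle\pa_j\Pi,\pa_{v_l}A\rangle+\pa_{v_l}p_j ,
\]
and similarly $\Om(\tau_j,\tau_6)$ has the field term plus nothing from $p$.

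The key structural step is to exploit parity in $k$. By \eqref{solit3}, $\hat A_{v,\om}$ splits into a $v$-part, $\hat\rho\,(v-\frac{(v\cdot k)k}{k^2})/\hat D_0$, which is even in $k$, and an $\om$-part, $\om J\hat\varrho/\hat D_0$, with $\hat\varrho=i\nabla\hat\rho$, which is odd in $k$ and purely imaginary. Tracking parity shows that many cross pairings vanish. In particular I expect $\Om(\tau_j,\tau_6)=0$ and $\pa_{v_j}M_{v,\om}=\pa_{v_j}\langle A_{v,\om},J\varrho\rangle=0$; the latter because the $v$-part pairs with $J\varrho$ to zero by parity, and the $\om$-part gives $\om\int |k|^{-2}(1-(v\cdot\hat k)^2)^{-1}\,\overline{J\hat\varrho}\cdot J\hat\varrho\,dk$, which depends on $v$. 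I must check this carefully: if it does depend on $v$, the entry is nonzero but can be removed by a column operation.

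If these vanishings hold, the matrix is block-structured:
\[
{\bf\Om}=\begin{pmatrix}0 & \Om_{12} & 0 & 0\\ -\Om_{12}^T & \Om_{22} & * & *\\ 0 & * & 0 & \pa_\om M\\ 0& * & -\pa_\om M & 0\end{pmatrix}.
\]
Its determinant is $(\det\Om_{12})^2(\pa_\om M_{v,\om})^2$, up to the $*$ terms, which do not affect it after elimination using the zero blocks.

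It then remains to prove two things.

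First, $\pa_\om M_{v,\om}=I-\langle \pa_\om A_{v,\om},J\varrho\rangle=I+\int \frac{|\hat\varrho(k)|^2}{\hat D_0}\,dk\ge I>0$. Here $J\hat\varrho\cdot\overline{J\hat\varrho}=|\hat\varrho|^2$, $\hat D_0>0$ for $|v|<1$, and the sign must be double-checked.

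Second, $\Om_{12}$, which is essentially $\pa_v P(v,\om)$ with the total momentum $P=mv+\langle A_{v,\om},\rho\rangle$ minus the field momentum term, is nondegenerate. The translation--velocity block should reduce to
\[
\Om(\tau_j,\tau_{l+2})=\pa_{v_l}\Big(mv_j+\langle A_v,\rho\rangle_j+\langle \pa_jA_v,\Pi_v\rangle\Big).
\]
This equals $m\delta_{jl}+\pa_{v_l}\int \frac{|\hat\rho|^2}{\hat D_0}\big(\cdots\big)\,dk$, exactly as in \cite{KK2023} for $\om=0$, because the $\om$-part drops out of the momentum by parity. That reference shows this matrix is $m\,I_2$ plus a positive definite matrix (the Hessian of a convex function of $v$), hence invertible.

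The main obstacle is this Fourier bookkeeping: verifying that all $\om$-dependent cross terms cancel by parity, and that the remaining momentum Jacobian is positive definite. I would first reproduce the $\om=0$ computation of \cite{KK2023} and then check term by term that the odd $\om$-part contributes nothing to the blocks $\Om_{12}$ and $\Om(\tau_j,\tau_6)$.
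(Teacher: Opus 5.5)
There is a genuine gap. The parity argument does not give the vanishings you need, and the coupling it misses is exactly what makes the lemma nontrivial when $v\ne0$ and $\om\ne0$.

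\emph{Where parity fails.} The reflection $k\mapsto-k$ only kills pairings of the even $v$-part $\Gamma$ of $\hat A_{v,\om}$ with the odd $\om$-part $i\Upsilon$, where $\Upsilon=\om\hat D_0^{-1}J\na\hat\rho$. A pairing of two $\om$-parts is even and survives. In your own terms, $\Om(\tau_j,\tau_6)=-\pa_\om\langle\pa_jA_{v,\om},\Pi_{v,\om}\rangle$ is the $\om$-derivative of the field momentum, and that momentum contains $\om^2\int k_j(v\cdot k)|\na\hat\rho|^2\hat D_0^{-2}\,dk$. For $v=(|v|,0)$ this gives
\[
\Om_{16}=2|v|\om\int\frac{k_1^2|\na\hat\rho|^2}{\hat D_0^2}\,dk,\qquad
\Om_{35}=-\pa_{v_1}M_{v,\om}=-2|v|\om\int\frac{k_1^2|\na\hat\rho|^2}{\hat D_0^2}\,dk .
\]
For the same reason the translation--velocity block is not the $\om=0$ block of \cite{KK2023}. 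The field momentum is quadratic in $A_{v,\om}$, so $\Om_{13}$ and $\Om_{24}$ pick up $\om^2$-terms. For example, $\Om_{13}$ contains $4v^2\om^2\int k_1^4|\na\hat\rho|^2\hat D_0^{-3}\,dk+\om^2\int k_1^2|\na\hat\rho|^2\hat D_0^{-2}\,dk$, and you must check that these terms keep $\Om_{13},\Om_{24}>0$.

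\emph{Consequence for the determinant.} With the (translation, $\tau_6$) entry nonzero, the $*$ entries do enter the determinant. A column operation does not help: it leaves the determinant unchanged and only moves the coupling. The other off-block entries $\Om_{14},\Om_{23},\Om_{26},\Om_{36},\Om_{45}$ do vanish, but via $k_2\mapsto-k_2$ after rotating to $v=(|v|,0)$, not via $k\mapsto-k$. One obtains
\[
\det{\bf\Om}=\Om_{24}^2\bigl(\Om_{13}\Om_{56}+\Om_{16}\Om_{35}\bigr)^2 ,
\]
not $\Om_{13}^2\Om_{24}^2\Om_{56}^2$. Since $\Om_{16}\Om_{35}<0$, you must rule out a cancellation in the Pfaffian of the $(\tau_1,\tau_3,\tau_5,\tau_6)$ block.

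\emph{The missing step.} It is the Cauchy--Schwarz inequality
\[
\Bigl(\int\frac{k_1^2|\na\hat\rho|^2}{\hat D_0^2}\,dk\Bigr)^2\le\int\frac{k_1^4|\na\hat\rho|^2}{\hat D_0^3}\,dk\cdot\int\frac{|\na\hat\rho|^2}{\hat D_0}\,dk .
\]
It gives $|\Om_{16}\Om_{35}|\le 4v^2\om^2\int k_1^4|\na\hat\rho|^2\hat D_0^{-3}\,dk\cdot(\Om_{56}-I)<\Om_{13}\Om_{56}$. This holds because that $\om^2$-term is one of the nonnegative summands of $\Om_{13}$, the remaining summands include $m>0$, and $\Om_{56}=\pa_\om M_{v,\om}=I+\int|\na\hat\rho|^2\hat D_0^{-1}\,dk$ with $I>0$.

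Your formula for $\pa_\om M_{v,\om}$ and your treatment of the translation--translation block are correct. Your determinant formula is also right when $v=0$ or $\om=0$. For general $(v,\om)$, however, the argument does not close without this estimate.
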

\begin{proof}
Let us compute the entries $\Om_{j,l}:=\Om(\tau_j(v,\om),\tau_l(v,\om))$  of the  matrix ${\bf\Om}(v,\om)$.
We write  $\hat A=\hat A_{v,\om}$, $\hat \Pi=\hat\Pi_{v,\om}$, $p=p_{v,\om}$, $M=M_{v,\om}$, $(vk)=v\cdot k$.
By \eqref{solit3},  $\hat A(k)=\Gamma(k)+i\Upsilon(k)$, where 
\be\la{APi}
\Gamma(k)=\Big[\frac1{\hat D_0}v -\frac{(vk)}{k^2\hat D_0}k\Big]\hat \rho(k),\qquad 
\Upsilon(k)=\frac{\om}{\hat D_0}J\na\hat\rho(k).
\ee
Note that  $\Gamma$ is even, and $\Upsilon$ is  odd.  For $ j,l=1,2$, formulas \eqref{OmJ}--\eqref{inb} imply 
\beqn\nonumber
\Om_{jl}&\!\!\!=\!\!\!&-2i\!\int\! k_jk_l(vk)\big[\Gamma+i\Upsilon\big]\cdot\big[\Gamma-i\Upsilon\big]dk
=-2i\!\int \!k_jk_l(vk)\big[|\Gamma|^2+|\Upsilon|^2\big]dk=0,\\
\nonumber
\Om_{j+2,j+2}
&\!\!\!=\!\!\!&-2i\!\int\!\Big[(vk)\big[|\pa_{v_j}\Gamma|^2+|\pa_{v_j}\Upsilon|^2\big]
+k_j\big[\Gamma\!\cdot \pa_{v_j}\Gamma\!+\Upsilon\!\cdot \pa_{v_j}\Upsilon\big]\Big] dk=0.
\eeqn
Obviously, $\Om_{15}=\Om_{25}=0$. Further, \eqref{solvom} and  \eqref{APi}  imply
\beqn\la{paA2}
\pa_{v_j}\Gamma&=&\frac{1}{\hat D_0}\Big[e_j+\frac{2(vk)k_j}{\hat D_0}v-\frac{k_j(k^2+(vk)^2)}{k^2\hat D_0}k\Big]\hat\rho,\qquad
\pa_{v_j}\Upsilon=\frac{2{\omega}(vk)k_j J\na\hat\rho}{\hat D_0^2},\\
\la{paP}
\pa_{v_j}p&=& e_j\big[m+\int\frac{|\hat\rho|^2dk}{\hat D_0}\big]+\int\frac{2|\hat\rho|^2(vk)k_j}{\hat D_0^2}vdk
-\int\frac{|\hat\rho|^2(k^2+(vk)^2)k_j}{k^2\hat D_0^2}kdk,\\
\la{Mvw}
\pa_{v_j}M&=&\om\int\frac{2k_j(vk)|\na\hat\rho|^2}{\hat D_0^2} dk,\qquad
\pa_{\om}M=I+\int\frac{|\na\hat\rho|^2}{\hat D_0} dk.
\eeqn
By rotational symmetry, we may choose $v=(|v|,0)$. In this case,  \eqref{APi} and  \eqref{paA2} give
\be\la{solit32+} 
\Gamma=\frac{|v|k_2}{k^2\hat D_0}\hat\rho Jk,\quad \Upsilon=\frac{\om}{\hat D_0} J\na\hat\rho, \quad
|\Gamma|^2=\frac{v^2k_2^2}{k^2\hat D_0^2}|\hat\rho|^2,\quad |\Upsilon|^2=\frac{\om^2}{\hat D_0^2}|\na\hat\rho|^2,
\ee
\vspace{-8mm}
\beqn\la{A1A1}
\pa_{v_1}\Gamma&\!\!\!=\!\!\!&\frac{k_2(k^2+v^2k_1^2)}{k^2\hat D_0^2}\hat\rho Jk,\quad 
\pa_{v_2}\Gamma=\frac{k_1(2v^2k^2-v^2k_1^2-k^2)}{k^2\hat D_0^2}\hat\rho Jk,\\
\la{A1A1++}
\pa_{v_1}\Upsilon&\!\!\!=\!\!\!&
\frac{2{\omega}|v|k_1^2}{\hat D_0^2}J\na\hat\rho,\qquad  \pa_{v_2}\Upsilon=\frac{2{\omega} |v|k_1k_2}{\hat D_0^2}J\na\hat\rho,\qquad
 \pa_{\omega}\Upsilon=\frac{1}{\hat D_0}J\na\hat\rho.
\eeqn
Now formulas \eqref{inb}, \eqref{paP} and \eqref{solit32+}--\eqref{A1A1++} imply  that
$\Om_{14}=\Om_{23}=\Om_{26}=\Om_{36}=\Om_{45}=0$.
Moreover,  $\Om_{jl}=-\Om_{lj}$ for $j\ne l$. Therefore,
\be\la{Om-fin}
{\rm det}\,{\bf \Om}(v,\omega)={\rm det}\begin{bmatrix}
0&0& \Om_{13} &0 &0&\Om_{16}\\
0&0& 0 & \Om_{24} &0&0\\
\Om_{31} & 0 &0&\Om_{34}& \Om_{35}& 0\\
0 & \Om_{42} &\Om_{43}& 0& 0& \Om_{46}\\
0&0& \Om_{53} & 0 &0&\Om_{56}\\
\Om_{61}&0&0 &\Om_{64}  &\Om_{65}&0
\end{bmatrix}
=\Om_{24}^2\,\bigl(\Om_{13}\Om_{56}+\Om_{16}\Om_{35}\bigr)^2.
\ee
By \eqref{paP} and \eqref{solit32+}--\eqref{A1A1++},
\beqn\nonumber
 \Om_{24}&=&\langle k_2\hat A, |v|k_1\pa_{v_2}\hat A+k_2\hat A \rangle+|v|\langle k_1k_2\hat A, \pa_{v_2}\hat A\rangle+e_2\cdot \pa_{v_2}p\\
\la{24} 
  &=&m+\int X_{24}|\hat\rho|^2dk+{\omega}^2\int \Big(\frac{k_2^2}{\hat D_0^2}+\frac{4v^2k_2^2k_1^2}{\hat D_0^3}\Big)|\na\hat\rho|^2dk>0
 \eeqn
 since
 \beqn\nonumber
X_{24}&\!\!\!:=\!\!\!&\frac{v^2k_2^2(k^2-k_1^2)}{k^2\hat D_0^2}+\frac{2v^2k_1^2k_2^2(2v^2k^2-k^2-v^2k_1^2)}{k^2\hat D_0^3}
-\frac{k_2^2}{\hat D_0^2}-\frac{v^2k_1^2k_2^2}{k^2\hat D_0^2}+\frac{1}{\hat D_0}\\
\nonumber
&\!\!\!=\!\!\!&\frac{k_2^2}{\hat D_0^3}\Big[v^2(k^2-k_1^2)+2v^4k_1^2+(v^2k_1-k_1)^2-k_1^2\Big]+\frac{1}{\hat D_0^3}\Big[k^2k_1^2+v^4k_1^4-2v^2k_1^2k^2\Big]\\
\nonumber
&\!\!\!\ge\!\!\!&\frac{1}{\hat D_0^3}\Big[v^4k_2^4+2v^4k_1^2k_2^2+v^4k_1^4+k_1^4-2v^2k_1^2k^2\Big]
=\frac{(v^2k^2-k_1^2)^2}{\hat D_0^3}\ge 0.
\eeqn
Further, \eqref{paP}, \eqref{solit32+}--\eqref{A1A1++}, and  \eqref{Mvw} imply that  
$\Om_{56}=\pa_{\omega} M=I+\int\frac{|\na\hat\rho|^2}{\hat D_0} dk>0$,
\beqn\nonumber
\Om_{13}&\!\!=\!\!&\langle k_1\hat A, |v|k_1\pa_{v_1}\hat A+k_1\hat A \rangle+|v|\langle k_1^2\hat A, \pa_{v_1}\hat A\rangle+e_1\cdot \pa_{v_1}p
=m+4v^2\om^2\!\int\!\frac{k_1^4|\na\hat\rho|^2}{\hat D_0^3}dk\\
\nonumber
&\!\!+\!\!&\om^2\!\int\!\frac{k_1^2|\na\hat\rho|^2}{\hat D_0^2}dk
+\int\!\!\Big(\frac{v^2k_1^2k_2^2}{k^2\hat D_0^2}+\frac{2v^2k_1^2k_2^2(k^2\!+\!v^2k_1^2)}{k^2\hat D_0^3}
+\frac{k^2k_2^2\!+\!v^2k_1^2k_2^2}{k^2\hat D_0^2}\Big) |\hat\rho|^2dk>0,\\
\nonumber
\Om_{35}&\!\!=\!\!&-\pa_{v_1}M=-2|v|\om\!\int\!\frac{k_1^2|\na\hat\rho|^2}{\hat D_0^2} dk,\qquad
 {\Om}_{16}=2\langle k_1\hat A,|v|k_1\pa_{\om}\hat A\rangle =2|v|\om\!\int\!\frac{k_1^2|\na\hat\rho|^2}{\hat D_0^2} dk.
\eeqn
Note that
$$
|\Om_{35} {\Om}_{16}|=4v^2\om^2\big(\int\frac{k_1^2|\na\hat\rho|^2}{\hat D_0^2} dk\big)^2
\le 4v^2\om^2 \!\int\!\frac{k_1^4|\na\hat\rho|^2}{\hat D_0^3}dk\cdot \int\frac{|\na\hat\rho|^2}{\hat D_0} dk< \Om_{13}\Om_{56}.
$$
Hence ${\rm det}\,{\bf \Om}(v,\omega)>0$ by \eqref{Om-fin} and \eqref{24}.
\end{proof}
Now we show that  a ``symplectically orthogonal projection''
onto ${\cal S}$ is well defined in a small neighborhood of  ${\cal S}$. 
Denote $v(Y)=\frac 1m(p-\langle A(x),\rho(x-q)\rangle)$ and  ${\omega} (Y)=\frac 1I(M+\langle A,J\varrho(x-q)\rangle)$. 
\begin{defin}
 For any $\al\in\R$, $0<\ov v<1$, and $\ov\om>0$,  we denote  
$\cE_\al(\ov v,\ov\om)=\{Y\in\cE_\al:|v(Y)|\le\ov v, ~|\om(Y)|\le\ov\om\}$ 
and  $\Sigma(\ov v,\ov\om)=\{\si=(b,v,\gamma,\om): b\in\R^2,~\gamma\in\R,~ |v|\le \ov v, ~|\om|\le \ov\om\}$.
\end{defin}
Let  $T_{a,\theta}$ be the translation:
\[
 T_{a,\theta}:(A(\cdot),\Pi(\cdot),q,p,\varphi,M)\mapsto(A(\cdot-a),\Pi(\cdot-a),q+a,p,\varphi+\theta,M), \quad a\in\R^2,\,\,\theta\in\R.
 \]
 \begin{lemma}\la{skewpro}
Let \eqref{rosym} hold, $\al<3$, $0<\ov v<1$,  and $\ov\omega>0$.
Then
\\
i) there exists a neighborhood ${\cal O}_\al({\cal S})$ of ${\cal S}$ in ${\cal E}_\al$ and a
map ${\bf \Pi}:{\cal O}_\al({\cal S})\to{\cal S}$  such that ${\bf \Pi}$ is uniformly
continuous on ${\cal O}_\al({\cal S})\cap {\cal E}_\al(\ov v,\ov\om)$ in the metric of ${\cal E}_\al$,
\be\la{proj}
{\bf \Pi} Y=Y~~\mbox{for}~~ Y\in{\cal S}, ~~~~~\mbox{and}~~~~~Y-S \nmid {\cal T}_S{\cal S},~~\mbox{where}~~S={\bf \Pi} Y.
\ee
ii) ${\cal O}_\al({\cal S})$ is invariant with respect to the translations
 $T_{a,\theta}$, and
 ${\bf \Pi }T_{a,\theta}Y=T_{a,\theta}{\bf \Pi} Y$ for $Y\in{\cal O}_{\al}({\cal S})$.
\\
iii)  there exist $0<\ov v_1<1$   and $\ov\omega_1>0$, depending on $\ov v$ and $\ov\om$, such that   ${\bf \Pi} Y=S(\si)$ with  $\si\in \Sigma(\ov v_1,\ov\om_1)$ for
 $Y\in \cE_\al(\ov v,\ov\om)$.\\
iv) there exists  $z_\al(\ov v,\ov {\omega})>0$ such that  
$S(\si)+Z\in{\cal O}_\al({\cal S})$ if $\si\in\Sigma(\ov v_1,\ov\omega_1)$
and $\Vert Z\Vert_\al< z_\al(\ov v,\ov\om)$.
\end{lemma}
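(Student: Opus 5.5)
The plan is the standard implicit function theorem argument, as in \cite{IKV2006, IKS2011, KK2023}. The projection is defined through the equations
\[
F_l(\si,Y):=\Om\big(Y-S(\si),\tau_l(v,\om)\big)=0,\qquad l=1,\dots,6,
\]
where $\si=(b,v,\ga,\om)$. These equations are well defined on ${\cal E}_\al$ for $\al<3$. The reason is that $\Om(Y_1,Y_2)$ pairs $A_1$ with $\Pi_2$ and $\Pi_1$ with $A_2$, and by \eqref{inb} the field components of $\tau_l$ are $\pa_jA_{v,\om}$, $\pa_j\Pi_{v,\om}$, $\pa_{v_j}A_{v,\om}$, $\pa_{v_j}\Pi_{v,\om}$, $\pa_\om A_{v,\om}$ and $\pa_\om\Pi_{v,\om}$. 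By \eqref{solit3} and \eqref{rosym}, these decay like $|y|^{-3}$ or faster, with the same decay for all $v,\om$-derivatives, locally uniformly in $v\in{\cal V}$, $\om\in\R$.

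The pairing $\langle A,\pi\rangle$ with $\nabla A\in{\bf L}^2_\al$ and the divergence-free field $\pi$ is handled by writing $\pi$ (locally) as a derivative of a decaying field, or by a Hardy-type bound in 2D. Here one uses $\hat\rho={\cal O}(|k|^4)$ from \eqref{zero2}, which makes $\hat\pi(k)/|k|$ square integrable. This gives $|F_l(\si,Y)|\le C(\ov v,\ov\om)\Vert Y-S(\si)\Vert_\al$ together with smoothness (at least $C^1$) of $F$ in $\si$.

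Next I compute the Jacobian at $Y=S(\si_0)$. Since $F(\si_0,S(\si_0))=0$, we get
\[
\pa_{\si_j}F_l(\si_0,S(\si_0))=-\Om(\tau_j,\tau_l),
\]
which is the matrix ${\bf\Om}(v,\om)$. By Lemma \ref{Ome} this matrix is nondegenerate. The implicit function theorem then gives a unique smooth solution $\si=\si(Y)$ near $\si_0$ for $Y$ in a neighborhood of $S(\si_0)$ in ${\cal E}_\al$. This proves ${\bf\Pi}S=S$ and the orthogonality property \eqref{proj}.

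To get a global neighborhood with uniform size, I exploit the translation invariance. The vectors $\tau_l$ depend only on $(v,\om)$, and the shift $T_{a,\theta}$ preserves $\Om$. Hence
\[
F(\si+(a,0,\theta,0),\,T_{a,\theta}Y)=F(\si,Y).
\]
It therefore suffices to work at $b=0$, $\ga=0$ with $(v,\om)$ ranging over the compact set $|v|\le\ov v_1$, $|\om|\le\ov\om_1$. On this set ${\bf\Om}^{-1}$ and the second derivatives of $F$ in $\si$ are uniformly bounded. A quantitative implicit function theorem (contraction mapping in $\si$) then yields radii $z_\al(\ov v,\ov\om)$ and $\delta>0$ that are independent of the base point. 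I define ${\cal O}_\al({\cal S})$ as the union of the balls of radius $z_\al$ around $S(\si)$. Translation covariance of uniqueness gives ii), and the bounds give iv) and uniform continuity on ${\cal O}_\al({\cal S})\cap{\cal E}_\al(\ov v,\ov\om)$.

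For iii), note that $v(Y)$ and $\om(Y)$ are continuous in $Y$, and that $v(S(\si))=v$, $\om(S(\si))=\om$. Because $|\si(Y)-\si_0|\le C\Vert Y-S(\si_0)\Vert_\al$, choosing $\ov v_1$ slightly larger than $\ov v$ and $\ov\om_1$ slightly larger than $\ov\om$ (shrinking the neighborhood if necessary) keeps ${\bf\Pi}Y$ in $\Sigma(\ov v_1,\ov\om_1)$.

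The main obstacle is the uniformity, and I expect it in two places. First, the pairings must be controlled in ${\cal E}_\al$ using only $\nabla A\in{\bf L}^2_\al$, since $A$ itself need not lie in $L^2$ in 2D. Second, one must check that the Lipschitz constants and $\Vert{\bf\Om}^{-1}\Vert$ stay bounded uniformly over the compact range of $(v,\om)$. Well-definedness of the pairings rests on the decay $|y|^{-3}$ of the $A$-type components of the $\tau_l$ together with \eqref{zero2}. Uniqueness of the projection, needed for ii), requires restricting $\si$ to a fixed-size neighborhood in the $(v,\om)$ variables and arguing modulo translations.
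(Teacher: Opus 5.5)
You follow the route the paper intends. The paper's only argument is the reference to Lemma~3.4 of \cite{IKV2012}: the implicit function theorem for $\Omega(Y-S(\sigma),\tau_l)=0$, whose $\sigma$-Jacobian is (up to sign) ${\bf\Omega}(v,\omega)$, invertible by Lemma~\ref{Ome}, together with translation covariance and compactness of the $(v,\omega)$-range. The implicit-function part of your sketch is fine. However, one inference is wrong: you conclude that the $F_l$ are well defined and Lipschitz on ${\cal E}_\alpha$ for all $\alpha<3$ because the $\tau_l$ decay like $|y|^{-3}$. This fails exactly in the range where the lemma is later used.

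The bound $\alpha<3$ is what puts the solitons themselves into ${\cal E}_\alpha$, since $\nabla A_{v,\omega},\Pi_{v,\omega}\sim|y|^{-4}$; it says nothing about the pairings. By \eqref{Hs5}, \eqref{d2}, \eqref{d-om}, the $\Pi$-components of all $\tau_l$ are exact derivatives. So every field pairing in $\Omega(Z,\tau_l)$ reduces to $\Psi$ or $\nabla\Lambda$ against one of $\partial_jA_{v,\omega}$, $\partial_{v_j}A_{v,\omega}$, $\partial_\omega A_{v,\omega}$.

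Now take $v\neq0$. By \eqref{zero2}, $\widehat{\partial_\omega A_{v,\omega}}=-J\hat\varrho/\hat D_0$ has leading part $c\,Jk\,|k|^2/\hat D_0$, which is homogeneous of degree one and not a polynomial. So in general $\partial_\omega A_{v,\omega}$ decays only like $|y|^{-3}$; the same holds for $\partial_{v_j}A_{v,\omega}$ when $\omega\neq0$, cf.\ \eqref{paA2}. Such a field lies in $L^2_{-\alpha}$ only for $\alpha>-2$. For $\alpha<-2$ the functional $\Psi\mapsto\langle\Psi,\partial_\omega A_{v,\omega}\rangle$ is not bounded on ${\bf L}^2_\alpha$: take $\Psi=J\nabla\eta_R$ with $\eta_R$ of size $R^2$ supported in $R<|y|<2R$ and a suitable angular profile. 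The pairing is then of order one, while $\Vert\Psi\Vert_{{\bf L}^2_\alpha}\sim R^{2+\alpha}\to0$.

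Hence your argument proves the lemma for $-2<\alpha<3$, but not for $\alpha=-\beta$ with $5/2<\beta<3$, which is how it is applied in Section~\ref{SymDec}. The paper, which only cites \cite{IKV2012}, does not address this point either, but you should not claim the full range.

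Separately, $\Vert\cdot\Vert_\alpha$ is not translation invariant for $\alpha\neq0$. Your reduction to $b=0$, the $b$-independent radius $z_\alpha$, and the invariance in ii) are therefore valid only if $Y-S(\sigma)$ is measured in the co-moving variable $y=x-b$, as in the paper's decomposition \eqref{dec}. The balls forming ${\cal O}_\alpha({\cal S})$ must be defined that way.
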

The proof is similar to that of  Lemma 3.4 in \cite{IKV2012}. 
\begin{cor}
We may assume $Y_0=S+Z_0$ where $S=S(\si_0)={\bf \Pi} Y_0$ and $Z_0$ satisfies   \eqref{close}. 
\end{cor}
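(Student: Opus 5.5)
The corollary is a reparametrization argument based on Lemma \ref{skewpro}. We start from the hypothesis \eqref{close}, i.e. $Y_0=S(\si_0)+Z_0$ with $\Vert Z_0\Vert_\beta=d_\beta\ll1$ for some $\si_0=(b_0,v_0,\ga_0,\om_0)\in\Sigma$. Choose $\ov v\in(|v_0|,1)$ and $\ov\om>|\om_0|$ with some margin, so that $S(\si_0)\in{\cal E}_\beta(\ov v/2+|v_0|/2,\ov\om)$ in a clear sense.

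Since $v(Y)$ and $\om(Y)$ depend continuously on $Y\in{\cal E}_\beta$ (they are built from $p,M$ and the pairings $\langle A,\rho(x-q)\rangle$, $\langle A,J\varrho(x-q)\rangle$, which are bounded because $\rho\in C_0^\infty$ and $\hat\rho={\cal O}(|k|^4)$ controls $A$ through $\na A$), smallness of $Z_0$ gives $|v(Y_0)|\le\ov v$ and $|\om(Y_0)|\le\ov\om$. We may also assume $|v_0|\le\ov v_1$ and $|\om_0|\le\ov\om_1$ by enlarging $\ov v,\ov\om$ if necessary; this uses that $\ov v_1\ge\ov v$, which holds after possibly shrinking the neighbourhood. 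By Lemma \ref{skewpro}(iv) with $\al=\beta<3$, $Y_0=S(\si_0)+Z_0\in{\cal O}_\beta({\cal S})$ once $d_\beta<z_\beta(\ov v,\ov\om)$. Hence ${\bf\Pi}Y_0=S(\si_0')$ is defined, with $\si_0'\in\Sigma(\ov v_1,\ov\om_1)$ by (iii), and $Z_0':=Y_0-S(\si_0')$ is symplectically orthogonal to ${\cal T}_{\si_0'}{\cal S}$ by \eqref{proj}.

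It remains to show that $\Vert Z_0'\Vert_\beta$ is small, of order $d_\beta$. Because ${\bf\Pi}S(\si_0)=S(\si_0)$ and ${\bf\Pi}$ is uniformly continuous on ${\cal O}_\beta({\cal S})\cap{\cal E}_\beta(\ov v,\ov\om)$, we get $\Vert S(\si_0')-S(\si_0)\Vert_\beta\to0$ as $d_\beta\to0$. Then
\[
\Vert Z_0'\Vert_\beta\le d_\beta+\Vert S(\si_0')-S(\si_0)\Vert_\beta ,
\]
which is small. Here $S(\si)-S(\si')$ is measured in ${\cal E}_\beta$; this is finite because $\na A_{v,\om},\Pi_{v,\om}\in{\bf L}^2_\beta$ for $\beta<3$.

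If one wants the quantitative bound $\Vert Z_0'\Vert_\beta={\cal O}(d_\beta)$, it follows from the implicit function theorem underlying the construction of ${\bf\Pi}$. The defining equations $\Om(Y-S(\si),\tau_j(\si))=0$ have a nondegenerate Jacobian by Lemma \ref{Ome}, so the map $Y\mapsto\si$ is Lipschitz near ${\cal S}$, and $\si\mapsto S(\si)$ is Lipschitz in ${\cal E}_\beta$ on bounded $(v,\om)$-sets. Translation invariance (ii) makes these bounds uniform in $b$ and $\ga$.

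The main point needing care is exactly this uniformity. The nontrivial ingredient is uniform continuity or Lipschitz control of ${\bf\Pi}$ in the weighted norm, and this is supplied by Lemma \ref{skewpro}. With it, we may rename $\si_0'$ as $\si_0$ and $Z_0'$ as $Z_0$, and assume $Y_0=S+Z_0$ with $S={\bf\Pi}Y_0$ and $Z_0$ satisfying \eqref{close}.
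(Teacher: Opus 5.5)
Your argument is correct and is essentially the paper's own reasoning. The paper gives no separate proof and treats the corollary as an immediate consequence of Lemma~\ref{skewpro}: part (iv) places $Y_0$ in ${\cal O}_\beta({\cal S})$, and part (i), namely uniform continuity of ${\bf\Pi}$ together with ${\bf\Pi}S(\si_0)=S(\si_0)$, makes $\Vert Y_0-{\bf\Pi}Y_0\Vert_\beta$ small. One small correction: $\ov v_1\ge\ov v$ and $\ov\om_1\ge\ov\om$ need no shrinking of the neighbourhood, since they follow from (i) and (iii) applied to $Y=S(\si)$ with $|v|\le\ov v$, $|\om|\le\ov\om$, which ${\bf\Pi}$ leaves fixed.
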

\subsection{Linearization on the solitary manifold}\label{LSM}
As the first step in the proof of  Theorem \ref{main}, let us linearize the nonlinear system \eqref{mls3} at a solitary wave $S(\si(t))$.
We decompose  a solution to  \eqref{mls3} as  the sum
\be\la{dec}
Y(t)=S(\si(t))+Z(t),
\ee
where $\si(t)=(b(t),v(t),\gamma(t), \om(t))\in\Sigma$ is an arbitrary smooth function of $t\in\R$.
In more detail, denote $Y=(A,\Pi,q,p,\varphi, M)$ and $Z=(\Lambda,\Psi,r,\pi,\psi,\chi)$.
Then \eqref{dec} means that
\beqn\nonumber
A(x,t)&\!\!\!=\!\!\!&A_{v(t),\om(t)}(x\!-\!b(t))+\Lambda(x\!-\!b(t),t),\quad  \Pi(x,t)=\Pi_{v(t),\om(t)}(x\!-\!b(t))+\Psi(x\!-\!b(t),t),\\
\nonumber
q(t)&\!\!\!=\!\!\!&b(t)+r(t),\quad p(t)=p_{v(t),\om(t)}+\pi(t),\quad \varphi(t)=\gamma(t)+\psi(t),\quad M(t)=M_{v(t),\om(t)}+\chi(t).
\eeqn
Substituting this decomposition  into \eqref{mls3} and setting $y=x-b(t)$, we obtain 
\beqn\la{dot-A}
\!\!\!\!\!\!\!\!\!\!\!\!\!\!\!\!\!\!\!\!\!\!\!\!&&\dot A=(\dot v\cdot \!\na_v +\dot\om\pa_{\om})A_{v,\om}(y)-\dot b\cdot \na \big[A_{v,\om}(y)+\Lambda (y,t)\big]+\dot\Lambda(y,t)
=\Pi_{v,\om}(y)+\Psi(y,t),\\
\nonumber
\!\!\!\!\!\!\!\!\!\!\!\!\!\!\!\!\!\!\!\!\!\!\!\!&&\dot\Pi=(\dot v\cdot \na_v+\dot\om \pa_{\om})\Pi_{v,\om}(y)-\dot b \cdot\na\big[\Pi_{v,\om}(y)+\Psi(y,t)\big]+\dot\Psi(y,t)\\
\la{dot-Pi}
\!\!\!\!\!\!\!\!\!\!\!\!\!\!\!\!\!\!\!\!\!\!\!\!&&~~~~~~~=\De A_{v,\om}(y)+\De\Lambda(y,t)-\dot\varphi J\varrho(y-r)+{\cal P}[\dot q \rho(y-r)],\\
\la{dot-q}
\!\!\!\!\!\!\!\!\!\!\!\!\!\!\!\!\!\!\!\!\!\!\!\!&&m\dot q=m(\dot b+\dot r)=p_{v,\om}+\pi-\langle A_{v,\om}+\Lambda,\rho(y-r)\rangle,\\
\la{dot-p}
\!\!\!\!\!\!\!\!\!\!\!\!&&\!\!\!\!\!\dot p=(\dot v\cdot \!\na_v +\dot\om\pa_{\om})p_{v,\om}+\dot \pi(t)
=\dot\varphi\langle\na\cdot J(A_{v,\om}+\Lambda),\varrho(y\!-\!r)\rangle-\langle (A_{v,\om}+\Lambda)\cdot\!\dot q,\na\rho(y\!-\!r)\rangle,\\
\la{dot-phi}
\!\!\!\!\!\!\!\!\!\!\!\!&&\!\!\!\!\!I\dot \varphi(t)=I(\dot \gamma+\dot\psi)=M_{v,\om}+\chi+\langle A_{v,\om}+\Lambda, J\varrho(y-r)\rangle,\\
\la{dot-M}
\!\!\!\!\!\!\!\!\!\!\!\!&&\!\!\!\!\!\dot M=(\dot v\cdot \na_v +\dot \om\pa_{\om})M_{v,\om}+\dot\chi=0.
\eeqn
Equations  \eqref{Hs5}, \eqref{dot-A}  and \eqref{dot-M} imply
\beqn\nonumber
\dot \Lambda(y,t)&=&\Psi(y,t)+\dot b\cdot \na \Lambda(y,t)+
(\dot b-v)\cdot \na A_{v,\omega}(y)-\dot v\cdot \na_v A_{v,\om}(y)-\dot {\omega}\pa_{\omega} A_{v,\omega}(y),\\
\la{chi-eq}
 \dot\chi&=&-\dot v\cdot \na_v M_{v,\omega}-\dot {\omega}\pa_{\omega}M_{v,\omega}.
\eeqn
Further, equations \eqref{solvom}, \eqref{Hs5}, \eqref{dot-Pi}   and \eqref{dot-p}--\eqref{dot-M}  imply
\beqn\nonumber
\dot \Psi(y,t)&=&\De\Lambda(y,t)+\dot b\cdot \!\na\Psi(y,t)-\dot v\cdot \na_v \Pi_{v,\omega}(y)-\dot {\omega}\pa_{\omega} \Pi_{v,\omega}(y)+(\dot b\!-\!v)\!\cdot\! \na\Pi_{v,\omega}(y)\\
\nonumber
&-&\dot\varphi J\varrho(y-r)+{\omega} J\varrho(y)+{\cal P}[\dot q \rho(y-r)-v\rho(y)],\\
\nonumber
m\dot r&=&-m\dot b+mv+\langle A_{v,\om},\rho\rangle+\pi-\langle A_{v,\om}+\Lambda,\rho(y-r)\rangle,\\
\nonumber
\dot \pi&=&-\dot v\cdot \na_v p_{v,\omega}-\dot {\omega}\pa_{\omega}p_{v,\omega}-\langle (A_{v,\omega}+\Lambda)\cdot\dot q,\na\rho(y-r)\rangle
+\langle A_{v,\omega}\cdot v,\na\rho\rangle,\\
\nonumber
&+&\dot\varphi\langle\na\cdot J(A_{v,\omega}+\Lambda),\varrho(y-r)\rangle-{\omega}\langle \nabla\cdot JA_{v,\omega},\varrho\rangle,\\
\la{dot-psi}
I\dot\psi&=&-I(\dot\gamma-{\omega})  -\langle A_{v,\omega},J\varrho\rangle+\chi+\langle A_{v,\omega}+\Lambda, J\varrho(y-r)\rangle.
\eeqn
By \eqref{solit3},
\beqn\nonumber
&&\!\!\!\!\!\!\!\!\!\!\langle A_{v,\omega}, r\cdot\na\rho\rangle=-\om JPr,\quad \langle A_{v,\om}, (r\cdot\na) J\varrho\rangle=r\cdot PJv,\quad 
\langle A_{v,\omega}\cdot\pi,\na\rho\rangle=\om PJ\pi,\\
\nonumber
&&\!\!\!\!\!\!\!\!\!\!
\langle A_{v,\omega}\cdot\langle\Lambda,\rho\rangle,\na\rho\rangle={\omega} PJ\langle \Lam,\rho\rangle,\quad
\langle A_{v,\omega}\cdot  JPr,\na\rho\rangle ={\omega} P^2r,\\
\la{JAro}
&&\!\!\!\!\!\!\!\!\!\!\langle\na\cdot JA_{v,\omega},\varrho\rangle=PJ v,\quad 
\langle\na\cdot JA_{v,\omega},r\cdot\na\varrho\rangle={\omega} Fr,\quad 
\langle \na(A_{v,\omega}\cdot v), r\cdot\na\rho\rangle=Qr,
\eeqn
where $P$, $F$ and  $Q$ are symmetric  matrices  with the  entries
 \be\la{c-ro}
 P_{jl}= \int\! \frac{k_l\na_j\hat\rho(k)}{\hat D_0}\hat\rho(k) dk,\quad F_{jl}=\int\! \frac{k_jk_l|\na\hat\rho|^2}{\hat D_0} dk,
 \quad Q_{jl}=\int\! \big(v^2k^2-(v\cdot k)^2\big)\frac{k_lk_j|\hat\rho(k)|^2dk}{k^2\hat D_0}.
\ee
Using  \eqref{JAro},  we rewrite  \eqref{dot-psi} as
\beqn\nonumber
\dot \Psi&\!\!=\!\!&\De\Lambda+\dot b\cdot \!\na\Psi\!+(\dot b\!-\!v)\!\cdot\! \na\Pi_{v,\om}\!
-\dot v\cdot \na_v \Pi_{v,\om}-\dot\om\pa_{\om} \Pi_{v,\om}+\om (r \cdot\na) J\varrho\\
\nonumber
&\!\!+\!\!&{\cal P}\big[\frac{\rho}m\big(\pi-\langle\Lam,\rho\rangle-\om JPr\big)-vr \cdot\na\rho\big]
-\frac {J\varrho}{I}\big[\chi+\langle\Lam,J\varrho\rangle-r\cdot PJv\big]+N_2(v, {\omega}, Z),\\ 
\nonumber
m\dot r&=&m(v-\dot b)+\pi-\langle \Lambda,\rho\rangle-\om JPr+N_3(v, {\omega}, Z),\\
\nonumber
\dot \pi&\!\!=\!\!&-\dot v\cdot \na_v p_{v,\omega}-\dot {\omega}\pa_{\omega}p_{v,\omega}
-\langle v\cdot\Lambda,\na\rho\rangle- \frac{\om}m PJ(\pi-\langle\Lam,\rho\rangle)+{\omega}\langle \na\cdot J\Lam,\varrho\rangle\\
\nonumber  
&\!\!-\!\!&(\frac {\omega^2 }m P^2+Q+{\omega}^2 F)r-\frac 1{I}(r\cdot PJv) PJv+ \frac{1}{I}(\chi+\langle\Lam,J\varrho\rangle) PJv+N_4(v, {\omega}, Z),\\
\la{psi-eq}
I\dot \psi
 &\!\!=\!\!&-I(\dot\gamma-\om)+\chi+\langle\Lam,J\varrho\rangle-r\cdot PJv+N_5(v, \om, Z), 
\eeqn
where $\Vert N_{j}(v, {\omega}, Z)\Vert_\beta\le C(\ov v,\ov\om) \Vert Z\Vert_{-\beta}^2$, $j=2,\dots, 5$, 
uniformly in $v$, $\omega$, and $Z$ with   $|v|\le \ov v<1$,  $|\om|\le\ov\om$, and  $\Vert Z\Vert_{-\beta}< z_{-\beta}(\ov v,\ov\om)$.
Combining equations \eqref{chi-eq}  and \eqref{psi-eq}, we obtain
\be\la{lin}
\dot Z(t)=L_{v(t), \dot b(t),{\omega}(t)}Z(t)+T(\sigma(t))+N(v(t),{\omega}(t),Z(t)),\,\,\,t\in\R, \quad Z=(\Lam,\Psi,r,\pi,\psi,\chi).
\ee
Here 
\be\la{AA}
L_{v,u,{\omega}}:=\left(
\ba{cccccc}
u \cdot\na & 1 & 0 & 0 &0 &0\\
\De -{\cal P}[\frac{\langle\cdot,\rho\rangle\rho}{m}] -\frac{\langle \cdot,J\varrho\rangle J\varrho}{I}& u \cdot \na & {B}
&{\cal P} [\frac{\rho}m\cdot] & 0&-\frac{J\varrho}I\\
-\frac 1m\langle\cdot,\rho\rangle & 0 & -\frac{\omega}{m}JP & \frac 1m&0&0\\
  {B}^* & 0 &-G-(\cdot PJv) \frac{PJv}{I} & -\frac{\omega}mPJ& 0&\frac{1}I PJv \\
\frac{1}{I}\langle \cdot,J\varrho\rangle &0 &-\frac{1}I (PJv\cdot)&0&0&\frac{1}{I}\\
0&0&0&0&0&0
\ea\right),
\ee
\be\la{T}
T(\sigma)=\begin{pmatrix}
(u-v)\cdot \na A_{v,\omega}-\dot v\cdot \na_v A_{v,\omega}-\dot {\omega}\pa_{\om} A_{v,\omega}\\
(u-v)\!\cdot\na\Pi_{v,\omega}-\dot v\cdot \na_v \Pi_{v,\omega}-\dot {\omega}\pa_{\omega} \Pi_{v,\omega}\\
v-u \\
-\dot v\cdot \na_v p_{v,\omega}-\dot {\omega}\pa_{\omega}p_{v,\omega}\\
-(\dot\gamma-\om)\\
-\dot v\cdot \na_v M_{v,\omega}-\dot {\omega}\pa_{\omega}M_{v,\omega}
\end{pmatrix}\!,\quad
N(v, {\omega},Z)=\begin{pmatrix}
0 \\ N_2(v, {\omega},Z)  \\\ N_3(v, {\omega},Z) \\ N_4 (v, {\omega},Z)\\  N_5 (v, {\omega},Z)\\0
\end{pmatrix}.
\ee
The operators $B=B(v,{\omega})$, $B^*=B^*(v,{\omega})$ and $G=G(v,{\omega})$  act as follows
\beqn \la{B-op}
 {B}r&\!\!=\!\!&{\cal P}[-\frac{\omega}{m}\rho JPr-v(r\cdot \na\rho)]+{\omega}(r\cdot \na)J\varrho+\frac{1}{I}(r\cdot PJv)J\varrho,\\
 \la{B1-op}
  {B}^*\Lam&\!\!=\!\!&\frac {\omega}m PJ\langle \Lam,\rho\rangle-\langle v\cdot\Lam,\na\rho\rangle+{\omega}\langle \na\cdot J\Lam,\varrho\rangle
  +\frac{1}{I}PJv\langle\Lam,J\varrho\rangle,\\
 \la{j-al}
Gr&\!\!=\!\!&(\frac {\omega^2 }m P^2+Q+{\omega}^2 F)r.
\eeqn
Furthermore,   
\be\la{N-est}
\Vert N(v, {\omega}, Z)\Vert_{\beta}\le C(\ov v,\ov\om) \Vert Z\Vert_{-\beta}^2,
\ee
uniformly in $v$, $\omega$, and $Z$ with  $|v|\le \ov v<1$, $|\om|\le\ov\om$, and $\Vert Z\Vert_{-\beta}< z_{-\beta}(\ov v,\ov\om)$. 
\subsection{The linearized equation}\label{LinEq}
Here we collect some properties  of the linearized equation
\be\la{line}
\dot X(t)=L_{v,u,\omega}X(t),\quad v\in {\cal V}, \quad u\in \R^2,\quad \omega\in\R,\quad t\in\R.
\ee
\begin{lemma} \la{haml}
i) Equation \eqref{line}  can be written as the Hamiltonian system
\be\la{lineh}
\dot X(t)= \bJ{\cal D}{\cal H}_{v,u,\omega}(X(t)),\quad X=(\Lam,\Psi, r,\pi,\psi,\chi),\quad t\in\R,
\ee
where 
\beqn\nonumber
{\cal H}_{v,u,\omega}(X)&\!\!=\!\!&\frac12\int\Big[|\Psi|^2+|\na\Lambda|^2\Big]dy+\int\Psi (u\cdot\na)\Lambda dy
+\frac{1}{2m}\langle\Lambda,\rho\rangle^2+\frac {1}{2I}\langle \Lambda,J\varrho\rangle^2\\
\nonumber
&\!\!+\!\!&\int (\frac{\omega}{m}JP r-\frac{\pi}m)\cdot\Lambda\rho dy+\frac{1}{I}(\chi-r\cdot PJv)\int \Lambda\cdot J\varrho dy+\int(\na\rho\cdot r)v\cdot \Lambda dy\\
\nonumber
&\!\!-\!\!&{\omega}\!\int\!(r\!\cdot\!\na)\Lam\cdot\! J\varrho dy +\frac{\chi^2}{2I}+\frac{\pi^2}{2m} -\frac{\omega}{m} JPr\! \cdot\pi
+\frac 12 r\cdot\! Gr+\frac{1}{I}(r\!\cdot\! PJv)^2 +\frac{\chi}I r\cdot\! PJv.
\la{H0}
\eeqn
ii) The energy conservation law holds for  solutions $X(t)\in C^1(\R,\cE)$:
\be\la{enec}
{\cal H}_{v,u,\omega}(X(t))={\rm const},~~~~~t\in\R.
\ee
iii) The skew-symmetry relation holds:
\be\la{com}
\Omega(L_{v,u,\omega}X_1,X_2)=-\Omega(X_1,L_{v,u,\omega}X_2), ~~~~~~~~X_1,X_2\in \cE.
\ee
\end{lemma}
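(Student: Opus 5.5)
Part i) is a direct computation: differentiate ${\cal H}_{v,u,\omega}$ componentwise and check that $\bJ{\cal D}{\cal H}_{v,u,\omega}(X)$ reproduces the rows of $L_{v,u,\omega}X$ in \eqref{AA}. With $X=(\Lam,\Psi,r,\pi,\psi,\chi)$ and the block form of $\bJ$ in \eqref{canH+}, the equations read
\[
\dot\Lam={\cal D}_\Psi{\cal H},\quad \dot\Psi=-{\cal D}_\Lam{\cal H},\quad \dot r={\cal D}_\pi{\cal H},\quad \dot\pi=-{\cal D}_r{\cal H},\quad (\dot\psi,\dot\chi)=J({\cal D}_\psi{\cal H},{\cal D}_\chi{\cal H}).
\]
The $\psi,\chi$ block needs care with the sign convention for $J$. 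Since ${\cal H}$ does not depend on $\psi$, one obtains $\dot\chi=0$ and $\dot\psi={\cal D}_\chi{\cal H}=\frac1I(\chi+\langle\Lam,J\varrho\rangle-r\cdot PJv)$, matching the fifth row.

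The routine derivatives are:
\begin{itemize}
\item ${\cal D}_\Psi{\cal H}=\Psi+u\cdot\na\Lam$, giving the first row.
\item ${\cal D}_\pi{\cal H}=\frac1m(\pi-\langle\Lam,\rho\rangle-\om JPr)$, giving the third row.
\end{itemize}

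The substantive checks are ${\cal D}_\Lam{\cal H}$ and ${\cal D}_r{\cal H}$. All $\Lam$-variations must be taken within divergence-free fields, so every gradient in $\Lam$ is followed by ${\cal P}$. This is why ${\cal P}$ appears on $\rho$-terms; the $J\varrho$ terms are already divergence free, since $J\varrho=Jx\rho_{rad}(|x|)$ is a rotational field. The term $-\om\int(r\cdot\na)\Lam\cdot J\varrho$ integrates by parts to $\om\langle\Lam,(r\cdot\na)J\varrho\rangle$. Together with $\int(\na\rho\cdot r)v\cdot\Lam$, $\frac{\om}{m}\int JPr\cdot\Lam\rho$ and $-\frac1I(r\cdot PJv)\langle\Lam,J\varrho\rangle$, these terms form exactly $-\langle\Lam,Br\rangle$ with $B$ from \eqref{B-op}. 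The term $\int\Psi(u\cdot\na)\Lam$ contributes $-u\cdot\na\Psi$ to ${\cal D}_\Lam{\cal H}$, hence $+u\cdot\na\Psi$ in the second row.

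Differentiating the same bilinear term in $r$ gives $-B^*\Lam$, which should be checked against \eqref{B1-op} using the definition of $B^*$. The remaining $r$-derivatives give $\frac{\om}{m}(JP)^T\pi$, $Gr$ (using that $G$ is symmetric), and the $PJv$ terms. Here one uses $(JP)^T=-PJ$, valid since $P$ is symmetric and $J^T=-J$; this produces the $-\frac{\om}{m}PJ\pi$ entry.

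I expect the main obstacle to be this bookkeeping of signs and of the coefficient of $(r\cdot PJv)^2$. The fourth row has $-\frac1I(r\cdot PJv)PJv$, so the quadratic part of ${\cal H}$ in $r\cdot PJv$ must effectively be $\frac1{2I}(r\cdot PJv)^2$. I would recheck this coefficient; the displayed $\frac1I$ may absorb the cross term with $-\frac1I(r\cdot PJv)\langle\Lam,J\varrho\rangle$ only after regrouping with the $\frac{1}{2I}\langle\Lam,J\varrho\rangle^2$ and $\chi$-terms. Concretely, the $I$-terms should combine into $\frac1{2I}(\chi+\langle\Lam,J\varrho\rangle-r\cdot PJv)^2$, and I would verify the claim in that form.

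For ii), ${\cal H}_{v,u,\omega}$ is a quadratic form ${\cal H}(X)=\frac12\langle X,HX\rangle$ with $H$ symmetric, and $L=\bJ H$. For $X\in C^1(\R,\cE)$ this gives $\frac{d}{dt}{\cal H}=\langle HX,\bJ HX\rangle=0$ by skew-symmetry of $\bJ$. Terms like $\int\Psi\,u\cdot\na\Lam$ are justified for finite-energy $X$ by density and continuity of the form on $\cE$.

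For iii), using $\bJ^T=-\bJ$ and $\bJ^2=-1$ blockwise,
\[
\Om(LX_1,X_2)=\langle \bJ HX_1,\bJ X_2\rangle=\langle HX_1,X_2\rangle,\qquad
\Om(X_1,LX_2)=\langle X_1,\bJ\bJ HX_2\rangle=-\langle X_1,HX_2\rangle.
\]
Symmetry of $H$ then gives \eqref{com}. Both ii) and iii) thus follow immediately once i) is established.
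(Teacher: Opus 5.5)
Your plan is the paper's proof. Part (i) is direct differentiation of ${\cal H}_{v,u,\om}$ matched against the rows of \eqref{AA}. Parts (ii) and (iii) follow from $L_{v,u,\om}=\bJ H$, where $H={\cal D}{\cal H}_{v,u,\om}$ is symmetric.

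You are right to distrust the displayed $I$-terms. The $\psi$-row of \eqref{AA} forces ${\cal D}_\chi{\cal H}=\frac1I(\chi+\langle\Lam,J\varrho\rangle-r\cdot PJv)$. So besides halving the coefficient of $(r\cdot PJv)^2$, the term $\frac{\chi}{I}\,r\cdot PJv$ must also change sign. Your completed square $\frac1{2I}(\chi+\langle\Lam,J\varrho\rangle-r\cdot PJv)^2$ fixes both, and it is the form the paper itself uses in the proof of Lemma \ref{ceig2}.

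One step of your bookkeeping is wrong, though: the $\om$ cross term does not match. You integrate by parts correctly: $-\om\int(r\cdot\na)\Lam\cdot J\varrho\,dy=+\om\langle\Lam,(r\cdot\na)J\varrho\rangle$. But $Br$ contains $+\om(r\cdot\na)J\varrho$, so $-\langle\Lam,Br\rangle$ contains $-\om\langle\Lam,(r\cdot\na)J\varrho\rangle$. Hence the displayed cross terms are not exactly $-\langle\Lam,Br\rangle$. With the sign as displayed, two rows fail:
\begin{itemize}
\item $\dot\Psi=-{\cal D}_\Lam{\cal H}$ would contain $-\om(r\cdot\na)J\varrho$, contradicting \eqref{B-op}. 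That term comes from linearizing $-\dot\varphi J\varrho(y-r)$.
\item The $\pi$-row would contain $-\om\langle\na\cdot J\Lam,\varrho\rangle$, instead of the $+$ sign in \eqref{B1-op}.
\end{itemize}
The Hamiltonian that generates $L_{v,u,\om}$ must contain $+\om\int(r\cdot\na)\Lam\cdot J\varrho\,dy$. With that sign both rows check. The $\pi$-row uses the identity $\langle\pa_j\Lam,J\varrho\rangle=-\langle\na\cdot J\Lam,\varrho_j\rangle$, which follows by integrating by parts and using that $\rho$ is radial.

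So this term needs the same treatment you gave the $I$-terms: verify against a corrected formula rather than assert agreement with the displayed one. Parts (ii) and (iii) are unaffected, since they only use that $L_{v,u,\om}$ is $\bJ$ composed with a symmetric operator.
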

\begin{proof}
 i) Equality \eqref{lineh} is easily verified by differentiation.
\\
ii) The energy conservation law
follows from  \eqref{lineh} and the chain rule   for Fr\'echet derivatives:  
\[
\frac d{dt}\cH_{v,u,\om}
(X(t))=\langle {\cal D}\cH_{v,u,\omega}(X(t)),\dot X(t)\rangle=
\langle {\cal D}\cH_{v,u,\omega}(X(t)),\bJ {\cal D}\cH_{v,u,\omega}(X(t))\rangle=0,~~~~~~t\in\R.
\]
iii) The skew-symmetry holds
since $L_{v,u,\omega}X=\bJ{\cal D}\cH_{v,u,\omega}(X)$, and the linear operator\\
$X\mapsto {\cal D}\cH_{v,u,\omega}(X)$ is symmetric, being the Fr\'echet derivative of a quadratic form.
\end{proof}
\begin{lemma} \la{ceig1}
The operator $L_{v,u,\omega}$ acts on the tangent vectors $\tau_j(v,\omega)$  as follows:
\beqn\la{Atanform}
L_{v,u,\omega}[\tau_j(v,\omega)]&=&(u-v)\cdot\na\tau_j(v,{\omega}),\qquad j=1,2,\\
\la{Atanform2}
L_{v,u,\omega}[\tau_{j+2}(v,\om)]&=&(u-v)\cdot\na\tau_{j+2}^{-}(v,{\omega})+\tau_j(v,{\omega}),\qquad j=1,2,\\
\la{Atanform1}
L_{v,u,\omega}[\tau_5(v,\omega)]&=&0,
\quad L_{v,u,\omega}[\tau_{6}(v,\omega)]=(u-v)\cdot\na\tau_{6}^{-}(v,{\omega})+\tau_5(v,\omega).
\eeqn
Here 
$\tau_{j+2}^{-}(v,{\omega})=(\pa_{v_j}A_{v,{\omega}},\,\pa_{v_j}\Pi_{v,\omega},\, 0,\, \pa_{v_j}p_{v,\omega},\,   0,\,  0)$ and 
$\tau_{6}^{-}(v,{\omega})=(\pa_{\omega}A_{v,\omega},\, \pa_{\omega}\Pi_{v,\omega},\, 0,\, 0,\, 0,\, 0)$.
\end{lemma}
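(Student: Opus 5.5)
The natural route is to differentiate the soliton identities rather than compute $L_{v,u,\omega}\tau_j$ directly. The family $S(\si)$ consists of (time-dependent) solutions of the nonlinear system: for fixed $(v,\omega)$, $Y(t)=S(b+vt,v,\gamma+\omega t,\omega)$ solves \eqref{mls3}. In the moving frame $y=x-b(t)$, the linearization \eqref{lin} was derived exactly by substituting $Y=S(\si(t))+Z$. So the strategy is: take $Z$ equal to an infinitesimal variation of the soliton parameters, and read off $L_{v,u,\omega}Z$ from the fact that the remainder terms $T$ and $N$ must then cancel. Concretely, I would use the following consequence of \eqref{lin}. Let $\si(t)$ be an arbitrary soliton trajectory and let $Y(t)=S(\si'(t))$ be another soliton trajectory, so that $Z=S(\si')-S(\si)$ in the $y$-frame. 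Differentiating with respect to the parameters at $\si'=\si$ gives a linear identity whose linear part is $L_{v,u,\omega}$ applied to $\pa_{\si}S$.

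A cleaner, direct alternative is to verify the formulas component by component using \eqref{AA} and the stationary soliton equations. The key identities are obtained by differentiating \eqref{Hs5}, namely
\[
\Pi_{v,\om}=-(v\cdot\na)A_{v,\om},\qquad \De A_{v,\om}+(v\cdot\na)^2A_{v,\om}\ \text{-type relation}=\om J\varrho-{\cal P}[v\rho],
\]
together with \eqref{solvom}, with respect to $y_j$, $v_j$ and $\om$.

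\emph{Case $\tau_j$, $j=1,2$.} Differentiating in $y_j$ gives
\[
-\pa_j\Pi=(v\cdot\na)\pa_jA,\qquad \De(-\pa_jA)-(v\cdot\na)(-\pa_j\Pi)=-\om\pa_jJ\varrho+{\cal P}[v\pa_j\rho].
\]
Plugging $\tau_j$ into \eqref{AA}, the field rows give $u\cdot\na(-\pa_jA)-\pa_j\Pi=(u-v)\cdot\na(-\pa_jA)$. Similarly, the $\Pi$-row reduces, after using the above identity and the $B e_j$ terms, to $(u-v)\cdot\na(-\pa_j\Pi)$. The finite-dimensional rows vanish, because the inner products such as $\langle\pa_jA,\rho\rangle$ are matched by the $-\frac{\om}{m}JPe_j$ terms via \eqref{JAro}. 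The $\tau_5$ case is trivial: the $\varphi$-column of \eqref{AA} is zero.

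\emph{Case $\tau_{j+2}$ and $\tau_6$.} Differentiating \eqref{Hs5} in $v_j$ gives
\[
\pa_{v_j}\Pi=-(v\cdot\na)\pa_{v_j}A-\pa_jA,
\]
and this extra $-\pa_jA$ is what produces the $\tau_j$ term. The second equation produces ${\cal P}[e_j\rho]$ in the $\Pi$-row, which is cancelled by ${\cal P}[\frac{\rho}{m}\pa_{v_j}p]$ combined with the $-{\cal P}[\frac{\langle\pa_{v_j}A,\rho\rangle\rho}{m}]$ term, using $\pa_{v_j}p=me_j+\langle\pa_{v_j}A,\rho\rangle$. The $J\varrho$ terms cancel via $\pa_{v_j}M=-\langle\pa_{v_j}A,J\varrho\rangle$. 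The $r$-row gives $\frac1m(\pa_{v_j}p-\langle\pa_{v_j}A,\rho\rangle)=e_j$, which is the $q$-component of $\tau_j$. The $\pi$-row must vanish, so the identity
\[
B^*\pa_{v_j}A-\frac{\om}{m}PJ\,\pa_{v_j}p+\frac1I PJv\,\pa_{v_j}M=0
\]
needs to be checked. For $\tau_6$ the argument is analogous: $\pa_\om$ of the equations yields $-J\varrho$, which cancels against the $\chi$-column, and the $\psi$-row yields $\frac1I(\pa_\om M+\langle\pa_\om A,J\varrho\rangle)=1$, i.e.\ $\tau_5$.

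The main obstacle is the bookkeeping in the $\pi$-row, i.e.\ the fourth component, for $\tau_j$, $\tau_{j+2}$ and $\tau_6$. There one must show that $B^*$ applied to derivatives of $A_{v,\om}$ reproduces exactly $Ge_j$, $\frac{\om}{m}PJ\pa p$, and so on. I would handle this via the Fourier formulas \eqref{solit3}, \eqref{JAro}, \eqref{c-ro}, using the parity of $\Gamma$ (even) and $\Upsilon$ (odd) to kill cross terms. Alternatively, and more robustly, I would derive it from differentiating the $\dot p$ equation of \eqref{mls3} along the soliton family, where $\dot p\equiv0$ holds identically in the parameters.
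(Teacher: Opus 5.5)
Your direct route (the second one) is essentially the paper's proof. You differentiate \eqref{Hs5} in $y_j$, $v_j$, $\omega$, substitute the tangent vectors into \eqref{AA} row by row, use \eqref{solvom} for the $r$-, $\psi$- and $J\varrho$-terms, and close the $\pi$-row with \eqref{JAro}, \eqref{c-ro}. Your idea of differentiating the soliton identity $\langle A_{v,\omega}\cdot v,\na\rho\rangle=\omega\langle\na\cdot JA_{v,\omega},\varrho\rangle$ (i.e.\ $\dot p\equiv 0$) together with $\langle A_j,\na\rho\rangle=\omega PJe_j$ is in substance how the paper obtains \eqref{e25}, via $\pa_{v_j}(PJv)$.

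One small slip: in the $\tau_j$ case the differentiated identity should read $\De(-\pa_jA)+(v\cdot\na)(-\pa_j\Pi)=-\om\pa_jJ\varrho+{\cal P}[v\pa_j\rho]$, with a plus sign on the transport term. With your sign, the $\Pi$-row would produce $(u+v)\cdot\na$ instead of $(u-v)\cdot\na$.
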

\begin{proof}
We will  write  $A=A_{v,\omega}$, $\Pi=\Pi_{v,\omega}$, etc.  
Differentiating  \eqref{Hs5} with respect to $b_j$ and $v_j$, we obtain 
\beqn\la{d1}
\!\!\!\!\!\!\!\!-(v\cdot\na)\pa_j A&=&\pa_j\Pi,\qquad
-(v\cdot\na)\pa_j\Pi=\Delta\pa_j A+{\cal P}[v\pa_j\rho] -{\omega}\pa_j J\varrho,\\
\la{d2}
\!\!\!\!\!\!\!\!-\pa_j A-(v\cdot\na)\pa_{v_j} A&=&\pa_{v_j}\Pi,\qquad
-\pa_j\Pi- (v\cdot\na)\pa_{v_j}\Pi=\De\pa_{v_j} A+{\cal P}[e_j\rho],\\
\la{d-om}
\!\!\!\!\!\!\!\!-(v\cdot\na)\pa_\om A&=&\pa_{\om}\Pi,\qquad-(v\cdot\na)\pa_\om \Pi=\De \pa_{\om }A-J\varrho.
\eeqn
{\it Step i)}  First, we prove  \eqref {Atanform}.
By definition \eqref{AA} of  $L_{v,u,\omega}$,   it is sufficient to check that
\beqn\la{e1}
&&\!\!\!\!\!\!\!\!\!\!\!\!\!\!\!\!\!\!\!\!\!\!\!\!\!\pa_j \Pi+(u\cdot\nabla) \pa_j A=((u-v)\cdot\na) \pa_j A,\\
\nonumber
&&\!\!\!\!\!\!\!\!\!\!\!\!\!\!\!\!\!\!\!\!\!\!\!\!\!\Delta \pa_j A\!-\!{\cal P}\big[\frac{\langle\pa_j A,\rho\rangle\rho}{m}-\frac{{\omega}\rho}{m}JPe_j\!+\!v\pa_j\rho\big]\!
-\!\big[\frac{\langle\pa_j A,J\varrho\rangle}{I}+\!\frac{e_j\!\cdot \!PJv}{I}+{\omega}\pa_j\big]J\varrho+(v\cdot\na)\pa_j\Pi\\
\la{e2}
&&=((u-v)\cdot\na) \pa_j \Pi,\\
\la{e3}
&&\!\!\!\!\!\!\!\!\!\!\!\!\!\!\!\!\!\!\!\!\!\!\!\!\!\langle \pa_j A,\rho\rangle-{\omega} JPe_j= 0,\qquad\quad\langle \pa_j A,J\varrho\rangle+(PJv)_j=0,\\
\la{e4}
&&\!\!\!\!\!\!\!\!\!\!\!\!\!\!\!\!\!\!\!\!\!\!\!\!\!\langle v\cdot\pa_j A,\na\rho\rangle-\frac {\omega}m\langle  PJ\pa_j A,\rho\rangle
-{\omega}\langle \na\cdot J\pa_j A,\varrho\rangle -\frac{1}{I}PJv\langle\pa_j A,J\varrho\rangle +Ge_j=0.
\eeqn
Equation \eqref{e1}  follows directly from \eqref{d1}.
Further,  \eqref{solit3} and  \eqref{c-ro} imply  that
\be\la{d4}
\langle \pa_j A,\rho\rangle={\omega}\langle \frac{k_jJ\na\hat\rho}{\hat D_0},\hat\rho\rangle = {\omega} JPe_j,\quad
\langle \pa_j A,J\varrho\rangle=\langle \frac{k_j\hat\rho}{\hat D_0}v,J\na\hat\rho\rangle=-e_j\cdot PJv.
\ee
Then \eqref{e3}  follows.
 Moreover,  \eqref{d1} and \eqref{d4} imply \eqref{e2}.
It remains to prove  \eqref{e4}. By \eqref{solit3} and \eqref{c-ro},
\[
 \langle PJ\pa_j A,\rho\rangle=-{\omega} P^2e_j,\quad  
\langle v\cdot\pa_j A,\na\rho\rangle= Qe_j,\quad 
\langle \na\cdot J\pa_j A,\varrho\rangle=-{\omega} Fe_j.
\]
Hence, \eqref{e4} follows from \eqref{e3} and definition \eqref{j-al} of the operator $G$. 
 \smallskip\\
{\it Step ii)}  Now we prove \eqref {Atanform2}. By \eqref{AA}, it is sufficient to check that 
\beqn\la{e21}
&&\!\!\!\!\!\!\!\!\!\!\!\!\!\!\!\!\!\!\!\!\pa_{v_j}\Pi+(u\cdot\na) \pa_{v_j} A+\pa_j A=((u-v) \cdot\na) \pa_{v_j} A,\\
\nonumber
&&\!\!\!\!\!\!\!\!\!\!\!\!\!\!\!\!\!\!\!\!\De\pa_{v_j} A-{\cal P}\big[\frac{\rho}{m}\big(\langle\pa_{v_j} A,\rho\rangle-\pa_{v_j}p\big)\big]
-\frac{1}{I}\big[\langle \pa_{v_j }A,J\varrho\rangle+\pa_{v_j}M\big]J\varrho+(u\cdot\na) \pa_{v_j} \Pi+\pa_j \Pi\\
\la{e22}
&&=((u-v)\cdot\na) \pa_{v_j} \Pi,\\
\la{e23}
&&\!\!\!\!\!\!\!\!\!\!\!\!\!\!\!\!\!\!\!\!-\pa_{v_j}\langle A,\rho\rangle+\pa_{v_j}p=me_j,\qquad\quad\langle \pa_{v_j} A,J\varrho\rangle+\pa_{v_j}M =0,\\
\la{e24}
&&\!\!\!\!\!\!\!\!\!\!\!\!\!\!\!\!\!\!\!\!v\cdot\!\langle \pa_{v_j} A,\!\na\rho\rangle\! -\!\frac {\omega}m PJ\big[\langle \pa_{v_j}A,\rho\rangle\!-\!\pa_{v_j}p\big]
\!+\!{\omega}\langle\na\!\cdot\! J\pa_{v_j}A,\varrho\rangle\!-\!\frac{1}{I}PJv\big[\pa_{v_j}M\!+\!\langle\pa_{v_j}A,J\varrho\rangle\!\big]\!=\!0.
\eeqn
Equations \eqref{e21} and  \eqref{e23}  follow directly from \eqref{d2}, \eqref{solY}, and \eqref{solvom}. 
Equation \eqref{e22} follows from \eqref{d2} and  \eqref{e23}.
It remains to prove  \eqref{e24}. Using  \eqref{e23}, we rewrite \eqref{e24} as
\be\la{e25}
v\cdot\pa_{v_j}\langle  A,\!\na\rho\rangle-{\omega}\pa_{v_j}\langle\na\!\cdot\! JA,\varrho\rangle+{\omega} PJe_j=0.
\ee
By \eqref{JAro},
$\pa_{v_j}\langle\na\cdot JA,\varrho\rangle=\pa_{v_j} (PJv)=PJe_j+(\pa_{v_j} P)Jv$.
Further, \eqref{c-ro} implies that
\be\la{ee5}
\langle A_{\ell},\na\rho\rangle=-{\omega}\langle \frac{(J\na\hat\rho)_{\ell}}{\hat D_0},k\hat\rho\rangle={\omega}PJe_{\ell}.
\ee
Hence, $v\cdot\pa_{v_j} \langle A,\na\rho\rangle={\omega} (\pa_{v_j} P)Jv$, and   \eqref{e25} follows. 
 \smallskip\\
{\it Step iii)}  It remains to prove \eqref{Atanform1}. 
The first equation  of \eqref{Atanform1} is trivial  by \eqref{inb} and \eqref{AA}. 
Due to \eqref{AA}, the second   equation  of \eqref{Atanform1}  is equivalent to the following  system 
\beqn\la{e31}
&&\!\!\!\!\!\!\!\!\!\!\!\!\!\!\!\!\!\!\!\!\!\!\!\pa_{\omega}\Pi+(u\cdot\na)\pa_{\omega}A=((u-v)\cdot\na)\pa_{\omega}A,\\
\la{e32}
&&\!\!\!\!\!\!\!\!\!\!\!\!\!\!\!\!\!\!\!\!\!\!\!\!\De\pa_{\omega}A -{\cal P}\big[\frac{\langle\pa_{\omega} A,\rho\rangle\rho}{m}\big]
\frac {1}{I}\big[\langle \pa_{\omega}A, J\varrho\rangle+\pa_{\omega}M\big] J\varrho
+(u\cdot\na)\pa_{\omega}\Pi= ((u-v)\cdot\na)\pa_{\omega}\Pi,\\
\la{e33}
&&\!\!\!\!\!\!\!\!\!\!\!\!\!\!\!\!\!\!\!\!\!\!\!\!\langle \pa_{\omega} A,\rho\rangle =0,\qquad -\langle \pa_{\omega}A, J\varrho\rangle+ \pa_{\omega}M=I,\\
\la{e34}
&&\!\!\!\!\!\!\!\!\!\!\!\!\!\!\!\!\!\!\!\!\!\!\!\!-v\cdot\langle \pa_{\omega}A,\na\rho\rangle+ \frac{\omega}{m}PJ\langle\pa_{\omega}A,\rho\rangle +{\omega}\langle\na\cdot J\pa_{\omega}A,\varrho\rangle
+\frac{1}{I}PJv\big[\langle \pa_{\omega}A, J\varrho\rangle+\pa_{\omega}M)\big]=0.
\eeqn
Equation  \eqref{e31} follows directly from the first equation of \eqref{d-om}.  
Equations \eqref{e33} follow from \eqref{solvom} and \eqref{solit3}, and  
\eqref{e32} follows from \eqref{d-om} using  \eqref{e33}. 
Finally,  \eqref{e34}  follows from  \eqref{solit3}, \eqref{ee5}, and  \eqref{e33}.
\end{proof}
We will apply Lemmas \ref{haml} and  \ref{ceig1} mainly to the operator $L_{v,\om}:=L_{v,v,\omega}$
corresponding to $u=v$.
In that case, the linearized equation has the following additional essential features.
\begin{lemma}\la{ceig2}
{\it i)} The tangent vectors $\tau_1(v,\omega)$, $\tau_2(v,\omega)$, and  $\tau_5(v,\omega)$ are eigenvectors,
and $\tau_{3}(v,\omega)$, $\tau_{4}(v,\omega)$, and $\tau_6(v,\omega)$ are root vectors of the
operator $L_{v,\omega}$, corresponding to the  zero eigenvalue, i.e.,
\beqn\la{Atanformv}
L_{v,\omega}[\tau_j(v,{\omega})]&=&0, \quad   L_{v,{\omega}}[\tau_{j+2}(v,{\omega})]=\tau_j(v,{\omega}),\quad j=1,2\\
\la{Atanformv1}
 L_{v,\omega}[\tau_5(v,{\omega})]&=&0,\quad  L_{v,\omega}[\tau_6(v,{\omega})]=\tau_5(v,\omega).
\eeqn
{\it ii)} The Hamiltonian  functional ${\cal H}_{v,\omega}(X):={\cal H}_{v,v,\omega}(X)$ is nonnegative definite,
\be\la{H+}
{\cal H}_{v,\omega}(X)\ge 0,\quad X\in {\cal E}.
\ee
 \end{lemma}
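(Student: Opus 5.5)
Part (i) follows at once from Lemma~\ref{ceig1}. Setting $u=v$ in \eqref{Atanform}--\eqref{Atanform1}, every term containing the factor $(u-v)\cdot\na$ vanishes. This leaves $L_{v,\om}\tau_j=0$ and $L_{v,\om}\tau_{j+2}=\tau_j$ for $j=1,2$, together with $L_{v,\om}\tau_5=0$ and $L_{v,\om}\tau_6=\tau_5$. These are exactly \eqref{Atanformv}--\eqref{Atanformv1}. Since the $\tau_j$ are linearly independent (Lemma~\ref{Ome}), $\tau_1,\tau_2,\tau_5$ are eigenvectors and $\tau_3,\tau_4,\tau_6$ are root vectors of height two for the zero eigenvalue.

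For (ii), the plan is to write ${\cal H}_{v,\om}$ from Lemma~\ref{haml} with $u=v$ as a sum of manifestly nonnegative squares. The key idea is that ${\cal H}_{v,\om}$ should be the second variation of the conserved quantity $\cH-v\cdot p-\om M$ at the soliton, which is a critical point of this quantity.

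First, complete the squares in the particle variables. Set $\tilde\pi:=\pi-\langle\Lam,\rho\rangle-\om JPr$ and $\tilde\chi:=\chi+\langle\Lam,J\varrho\rangle-r\cdot PJv$. These are exactly $m\dot r$ and $I\dot\psi$ at linear order, cf.\ \eqref{psi-eq}. The terms $\frac{\pi^2}{2m}$, $-\frac1m\pi\cdot\langle\Lam,\rho\rangle$, $\frac1{2m}\langle\Lam,\rho\rangle^2$ and $-\frac\om m JPr\cdot\pi$ combine into $\frac{|\tilde\pi|^2}{2m}$. The leftover $-\frac{\om^2}{2m}|JPr|^2$ and the cross term $\frac{\om}{m}JPr\cdot\langle\Lam,\rho\rangle$ are absorbed by the $\frac{\om^2}{2m}P^2$ part of $G$. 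The same is done for $\frac{\tilde\chi^2}{2I}$ with the $\chi$-, $\langle\Lam,J\varrho\rangle$- and $r\cdot PJv$-terms; the extra $\frac1I(r\cdot PJv)^2$ in ${\cal H}$ is what this completion requires.

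After these completions, what remains is
\[
\frac12\int\big[|\Psi+(v\cdot\na)\Lam|^2+|\na\Lam|^2-|(v\cdot\na)\Lam|^2\big]dy+\int(\na\rho\cdot r)\,v\cdot\Lam\,dy-\om\int(r\cdot\na)\Lam\cdot J\varrho\,dy+\frac12 r\cdot(Q+\om^2F)r.
\]
Because $|v|<1$, the quadratic form $\frac12\int(|\na\Lam|^2-|(v\cdot\na)\Lam|^2)$ equals $\frac12\int\hat D_0|\hat\Lam|^2dk$ and is positive. The terms linear in $\Lam$ are of the form $\langle\Lam,f_r\rangle$ with
\[
f_r=v(r\cdot\na\rho)-\om(r\cdot\na)J\varrho .
\]
Only the divergence-free part ${\cal P}f_r$ pairs with $\Lam$, so I complete the square in Fourier space:
\[
\frac12\int\hat D_0\Big|\hat\Lam+\frac{\widehat{{\cal P}f_r}}{\hat D_0}\Big|^2dk-\frac12\int\frac{|\widehat{{\cal P}f_r}|^2}{\hat D_0}dk .
\]

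The remaining task is to show that
\[
\frac12\int\frac{|\widehat{{\cal P}f_r}|^2}{\hat D_0}dk=\frac12 r\cdot(Q+\om^2F)r,
\]
which would give ${\cal H}_{v,\om}\ge0$, indeed a sum of squares. I would compute both sides using $\widehat{r\cdot\na\rho}=i(r\cdot k)\hat\rho$ and $\widehat{(r\cdot\na)J\varrho}=-(r\cdot k)J\na\hat\rho$, together with \eqref{Pi-e}, which projects onto $Jk$ and replaces $v$ by $\frac{v\cdot Jk}{|k|^2}Jk$.

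The $v$-part then gives $\int\frac{(r\cdot k)^2(v\cdot Jk)^2|\hat\rho|^2}{k^2\hat D_0}dk$. Since $(v\cdot Jk)^2=v^2k^2-(v\cdot k)^2$, this is exactly $r\cdot Qr$ by \eqref{c-ro}.

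The $\om$-part uses radiality of $\rho$: $\na\hat\rho$ is parallel to $k$, so $J\na\hat\rho$ is already divergence-free, and this part gives $\om^2 r\cdot Fr$.

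The cross term involves $\Re\big(i\,\hat\rho\,\overline{\na\hat\rho}\big)$ with $\hat\rho$ real. It vanishes because one factor is real and the other purely imaginary, or alternatively by parity of the integrand.

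The main obstacle is the bookkeeping of the particle completions. One must check that after forming $\tilde\pi$ and $\tilde\chi$ no residual $\om^2P^2$, $PJv$ or $\langle\Lam,\cdot\rangle$ cross terms survive. I expect these residual terms to cancel exactly, using the symmetry of $P$ and the identities \eqref{JAro}. If they do not cancel, the leftover must instead be folded into the $\Lam$-square.
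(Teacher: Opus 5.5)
Your proof is correct and follows the paper's argument. Part (i) is Lemma~\ref{ceig1} at $u=v$. In part (ii) the paper likewise splits off $\frac12\Vert\Psi+(v\cdot\na)\Lam\Vert^2$, $\frac1{2I}\tilde\chi^2$ and $\frac1{2m}|\tilde\pi|^2$, then completes the square in $\hat\Lam$ separately for $\mathrm{Im}\,\hat\Lam$ (absorbing $r\cdot Qr$) and $\mathrm{Re}\,\hat\Lam$ (absorbing $\om^2 r\cdot Fr$), which is exactly your single complex completion with vanishing cross term. The particle bookkeeping you left open does close exactly if the coefficients are read off from \eqref{AA} via $\dot X=\bJ{\cal D}\cH_{v,u,\om}(X)$: rows four and five of $L_{v,u,\om}$ generate precisely $\frac1{2I}(r\cdot PJv)^2-\frac{\chi}{I}\,r\cdot PJv$, which is what $\frac1{2I}\tilde\chi^2$ requires (so the coefficient is $\frac1{2I}$, not $\frac1I$ as you wrote).
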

\begin{proof}
Statement {\it i)}  follows from \eqref{Atanform}-\eqref{Atanform1} with $u = v$. Let us prove  statement {\it ii)}. 
We assume that $v = (|v|, 0)$ and represent  ${\cal H}_{v,\omega}(X)$ as
\[
{\cal H}_{v,\omega}(X)=\frac12 \Vert\Psi+(v\cdot\na)\Lam\Vert_{L^2(\R^2)}^2+\frac {1}{2I}(\chi-r\cdot PJv+\langle \Lambda,J\varrho\rangle)^2
+\frac{1}{2m}(\pi\!-\!\langle\Lam,\rho\rangle\!-\!{\omega} JP r)^2+h(X),
\]
\[
h(X)=\frac 12\langle(-\Delta+(v\cdot\na)^2)\Lam,\Lam\rangle
+\!\int\!\big((\na\rho\cdot r)v\cdot \Lambda-{\omega}(r\cdot\na)\Lam\cdot J\varrho\big) dy
+\sum\limits_{j}\frac {v^2Q_{jj}+{\omega}^2 F_{jj}}{2}r_j^2.
\]
It suffices to prove that $h(X)\ge 0$. In the  Fourier representation, 
\beqn\nonumber
h(X)&=&\frac 12\int\big(\hat D_0|\rIm\hat\Lam|^2+2(k\cdot r)\hat\rho |v|\rIm\hat\Lam_1\big)dk+\frac{v^2}2\sum\limits_{j}Q_{jj}r_j^2\\
\nonumber
&+&\frac 12\int\big(\hat D_0|\rRe\hat\Lam|^2-2{\omega}(k\cdot r)(\rRe\hat\Lam\cdot J\na\rho) \big)dk
+\frac{{\omega}^2}2 \sum\limits_{j}F_{jj}r_j^2=\frac 12(h_1(X)+h_2(X)).
\eeqn
Using  \eqref{c-ro}, we obtain
\beqn\nonumber
h_1(X)&\!\!\!=\!\!\!&\int\Big[\hat D_0|\rIm\hat\Lam|^2+2(k\cdot r)\hat\rho |v|(e_1-\frac{k_1k}{k^2})\cdot\rIm\hat\Lam
+v^2\frac{(k\cdot r)^2|\hat\rho|^2}{\hat D_0}(e_1-\frac{k_1k}{k^2})^2\Big]dk\\
\nonumber
&\!\!\!=\!\!\!&\int \Big[\hat D_0^{1/2}\rIm\hat\Lam+\frac{(k\cdot r)|v|\hat\rho}{\hat D_0^{1/2}}(e_1-\frac{k_1k}{k^2})\Big]^2dk\ge 0,\\
\nonumber
h_2(X)&\!\!\!=\!\!\!&\int\Big[\hat D_0|\rRe\hat\Lam|^2\!-{2\omega} (k\cdot r)(\rRe\hat\Lam_1\na_2\hat\rho-\rRe\hat\Lam_2\na_1\hat\rho)
+{\omega}^2\frac{(k\cdot r)^2}{\hat D_0}\big((\na_1\hat\rho)^2+(\na_2\hat\rho)^2\big)\Big]dk\\
\nonumber
&\!\!\!=\!\!\!&\int\Big[\hat D_0^{1/2}\rRe\hat\Lam_1-\frac{{\omega}(k\cdot r)\na_2\hat\rho}{\hat D_0^{1/2}}\Big]^2dk+
\int\Big[\hat D_0^{1/2}\rRe\hat\Lam_2+\frac{{\omega}(k\cdot r)\na_1\hat\rho}{\hat D_0^{1/2}}\Big]^2dk\ge 0.
\eeqn
\end{proof}
\subsection{Symplectic decomposition of the dynamics}\label{SymDec}
Here we decompose the dynamics into two components: along the manifold ${\cal S}$ and in transversal directions. 
The equation \eqref{lin} is obtained without any assumption on $\sigma(t)$ in \eqref{dec}. 
Now we are going to choose $S(\si(t))={\bf \Pi} Y(t)$. This is possible for $t=0$ by \eqref{close},  
so $S(\si(0))={\bf\Pi} Y(0)$ and  $Z(0)=Y(0)-S(\si(0))$ are well defined.
Moreover, Lemma \ref{skewpro}  implies  that $S(\si(t))={\bf\Pi} Y(t)$ and  $Z(t)=Y(t)-S(\si(t))$ are well defined for $t\ge 0$ as long as
 $\Vert Z(t)\Vert_{-\beta} < z_{-\beta}(\ov v,\ov\om)$. This is formalized by the following  definition.
\begin{defin}
$t_*$ is the ``exit time'':
\[
t_*=\sup \{t>0: \Vert Z(s)\Vert_{-\beta} < z_{-\beta}(\ov v,\ov\om),~~0\le s\le t\}.
\]
\end{defin}
For $0<t<t_*$, we  set $S(\si(t))={\bf \Pi} Y(t)$, which is equivalent to 
\be\la{orth}
\Om(Z(t),\tau_j(t))=0,\quad \tau_j(t)=\tau_j(v(t),{\omega}(t)),\quad j=1,\dots,6, ~~~~~~~0\le t<t_*. 
\ee
It would be convenient  to use some other parameters $\ti\sigma=(c,v,\vartheta,\omega)$ instead of $\si=(b,v,\gamma,\omega)$, where
$c(t)=b(t)-\ds\int^t_0 v(\tau)d\tau$ and $\vartheta(t)=\gamma(t)-\ds\int^t_0 \omega(\tau)d\tau$. One has
\be\la{vw}
\dot c(t)=\dot b(t)-v(t)=u(t)-v(t), \quad \dot\vartheta(t)=\dot\gamma(t)-{\omega}(t), \quad 0\le t<t_*.
\ee
\begin{lemma}\la{mod}
Let $Y(t)$ be a solution to the Cauchy problem for \eqref{mls3}, and let \eqref{dec}, \eqref{orth} hold. 
Then 
\be\la{parameq}
\dot{\ti\si}(t)=(\dot c(t), \dot v(t), \dot \vartheta(t), \dot{\omega}(t)) ={\cal O}(\Vert Z(t)\Vert_{-\beta}^2), ~~~~~~~0\le t<t_*
\ee
uniformly in $v$, $\omega$ and $Z$ with $|v|\le \ov v<1$,  $|\om|\le\ov\om$ and $\Vert Z\Vert_{-\beta}< z_{-\beta}(\ov v,\ov\om)$. 
\end{lemma}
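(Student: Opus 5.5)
The plan is the standard modulation argument: differentiate the orthogonality conditions \eqref{orth} in time and substitute the equation \eqref{lin} for $\dot Z$. For $0\le t<t_*$ and $j=1,\dots,6$,
\[
0=\frac{d}{dt}\Om(Z,\tau_j)=\Om(\dot Z,\tau_j)+\Om(Z,\dot\tau_j),\qquad \dot\tau_j=(\dot v\cdot\na_v+\dot\om\pa_\om)\tau_j .
\]
Into this we insert $\dot Z=L_{v,u,\om}Z+T(\si)+N(v,\om,Z)$ with $u=\dot b$, and treat the three resulting terms one at a time.

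\textbf{Linear term.} We use the skew-symmetry \eqref{com} and then Lemma \ref{ceig1}:
\[
\Om(L_{v,u,\om}Z,\tau_j)=-\Om(Z,L_{v,u,\om}\tau_j).
\]
By \eqref{Atanform}--\eqref{Atanform1}, $L_{v,u,\om}\tau_j$ equals $(u-v)\cdot\na\tau_j$ or $(u-v)\cdot\na\tau^-_{j}$, plus possibly another tangent vector $\tau_l$. The orthogonality \eqref{orth} kills the $\tau_l$ contributions. What survives is a term linear in $\dot c=u-v$ with coefficient $\Om(Z,\pa\tau)$, which is of size ${\cal O}(\Vert Z\Vert_{-\beta})$. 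This bound holds because $\na\tau_j,\na\tau_j^-$ lie in ${\cal E}_\beta$ for $\beta<3$, since the soliton fields decay like $|y|^{-4}$ and their derivatives faster.

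\textbf{Remaining terms.} The term $\Om(Z,\dot\tau_j)$ is likewise ${\cal O}(\Vert Z\Vert_{-\beta})\cdot(|\dot v|+|\dot\om|)$. The forcing term is linear in $\dot{\ti\si}$: reading off \eqref{T},
\[
T(\si)=-\dot c\cdot\tau_{1,2}-\dot v\cdot\tau_{3,4}-\dot\vartheta\,\tau_5-\dot\om\,\tau_6,
\]
where $\dot c=u-v$ by \eqref{vw}. Hence
\[
\Om(T,\tau_j)=-\sum_l {\bf\Om}_{lj}(v,\om)\,\dot{\ti\si}_l .
\]
The nonlinear term is bounded using \eqref{N-est}:
\[
|\Om(N,\tau_j)|\le \Vert N\Vert_\beta\Vert\tau_j\Vert_{-\beta}\le C\Vert Z\Vert_{-\beta}^2 .
\]
Here $\tau_j\in{\cal E}_{-\beta}$ (indeed ${\cal E}_\beta$), using the duality pairing between ${\cal E}_\beta$ and ${\cal E}_{-\beta}$ for the field components. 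The $A$-components require care: one pairs $\Lambda$ against $\Pi$-type components and vice versa, and since $\bJ$ swaps $A$ and $\Pi$, only $L^2$ pairings with the weighted fields appear. Where an $A$ is paired, one can move to Fourier/gradient form using $\hat\rho={\cal O}(|k|^4)$.

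\textbf{Linear system.} Collecting the terms gives
\[
\big({\bf\Om}(v,\om)^T+{\cal O}(\Vert Z\Vert_{-\beta})\big)\,\dot{\ti\si}={\cal O}(\Vert Z\Vert_{-\beta}^2).
\]
By Lemma \ref{Ome} and continuity, $\det{\bf\Om}$ is bounded below uniformly on the compact set $|v|\le\ov v$, $|\om|\le\ov\om$. Shrinking $z_{-\beta}(\ov v,\ov\om)$ if needed, the matrix is invertible with uniformly bounded inverse, and this yields \eqref{parameq}.

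The main obstacle is the uniformity of the bounds $\Vert\tau_j\Vert,\Vert\na\tau_j\Vert,\Vert\pa_{v,\om}\tau_j\Vert$ in the weighted spaces. These involve $\pa_v$ of $1/\hat D_0$ and weaker decay when $\om\ne0$. I would handle them with the same Fourier estimates as in \cite[Lemma 3.1]{KK2023}, using $|v|\le\ov v<1$.
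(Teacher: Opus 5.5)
Your proposal follows the paper's proof essentially step by step: differentiate \eqref{orth}, write $T=-\sum_l[\dot c_l\tau_l+\dot v_l\tau_{l+2}]-\dot\vartheta\tau_5-\dot\om\tau_6$, move $L_{v,u,\om}$ across by skew-symmetry and use Lemma \ref{ceig1} together with orthogonality so that only the $\dot c$-terms survive, and then invert ${\bf\Om}(v,\om)+{\cal O}(\Vert Z\Vert_{-\beta})$. Your remark that $\det{\bf\Om}$ is bounded below on the compact set $|v|\le\ov v$, $|\om|\le\ov\om$ is what makes the constant uniform, a point the paper leaves implicit.
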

\begin{proof} 
Differentiating  the orthogonality conditions  \eqref{orth}  with respect to  $t$, we obtain
\be\la{ortder}
0=\Omega(\dot Z,\tau_j)+\Omega(Z,\dot\tau_j)=\Omega(L_{v,u,\omega}Z+T+N,\tau_j)+\Omega(Z,\dot\tau_j), \quad j=1,\dots,6,\quad 0\le t<t_*.
\ee
By  \eqref{inb} and \eqref{T},
$T(t)=-\sum\limits_{l=1}^2[\dot c_l\tau_l+\dot v_l\tau_{l+2}]-\dot\vartheta\tau_5-\dot\omega\tau_6$.
Hence, $\Om(T,\tau)=\Om(v,\omega)\dot{\ti\si}$.\\
Further,  $\Omega(L_{v,u,\omega}Z,\tau_j)=-\Omega(Z,L_{v,u,\omega}\tau_j)$ by the skew-symmetry \eqref{com}.  
Hence,    \eqref{Atanform}--\eqref{Atanform1} imply
\beqn\nonumber
\Omega(L_{v,u,\omega}Z,\tau_j)&\!\!\!=\!\!\!&-\Omega(Z,\dot c\cdot\na\tau_j),\qquad
\Omega(L_{v,u,\omega}Z,\tau_{j+2})=-\Omega(Z,\dot c\cdot\na\tau^{-}_{j+2}),\quad j=1,2,\\
\nonumber
\Omega(L_{v,u,\omega}Z,\tau_{5})&\!\!\!=\!\!&0,~~\qquad\qquad\qquad\qquad
\Omega(L_{v,u,\omega}Z,\tau_{6})=-\Omega(Z,\dot c\cdot\na \tau^{-}_{6})
\eeqn
since $\Om(Z,\tau_j)=0$.   Finally, 
$\Omega(Z,\dot\tau_j)= \Om(Z,\dot v\cdot\na_v\tau_j+\dot{\omega}\pa_{\omega}\tau_j)$.
As a result,  \eqref{ortder} becomes
\be\la{modul}
0=\Om(v,\omega)\dot{\ti\si}+{\cal M}(\si,Z)\dot{\ti\si}+{\cal N}(\si,Z),
\ee
where  ${\cal M}(\sigma,Z)={\cal O}(\Vert Z\Vert_{-\beta})$, and ${\cal N}(\sigma,Z)={\cal O}\big(\Vert Z\Vert_{-\beta}^2\big)$
uniformly in $|v|\le\ov v<1$, $|{\omega}|\le\ov\om$ and  $\Vert Z\Vert_{-\beta}< z_{-\beta}(\ov v,\ov\om)$.
Since $\Om(v,\omega)$ is invertible by Lemma \ref{Ome}, and $\Vert Z\Vert_{-\beta}$ is small, we can resolve \eqref{modul}
with respect to $\dot{\ti\si}$ and obtain  \eqref{parameq}.
\end{proof}
Formula \eqref{T} and Lemma \ref{mod}  imply that
$\Vert T(t)\Vert_{\beta}\le\! C(\ov v,\ov\om)\Vert Z\Vert_{-\beta}^2$. 
Now we rewrite \eqref{lin} as
\be\la{reduced}
\dot Z(t)=L_{v(t),u(t),{\omega}(t)}Z(t)+N^{+}(t), \quad  N^{+}(t):=T(\sigma(t))+N(v(t),{\omega}(t),Z(t)),\quad 0\le t<t_*,
\ee
where
\be\la{redN}
\Vert N^{+}(t)\Vert_{\beta}\le C(\ov v,\ov\om)\Vert Z\Vert_{-\beta}^2,\qquad 0\le t<t_*.
\ee
\subsection{Frozen transversal dynamics}\label{Frozen}
The linear part of  equation \eqref{reduced} is non-autonomous, hence we cannot  directly apply known methods of scattering theory. 
Similarly to the approach of \cite{BKKS, BS, IKV2006}, we reduce the problem to the analysis of the frozen linear equation.
Namely, we fix an  arbitrary $t_1\in [0,t_*)$ and rewrite  equation \eqref{reduced} in  ``frozen form''
\be\la{froz}
\dot Z(t)=L_1Z(t)+(L_{v(t),u(t),{\omega}(t)}-L_1)Z(t)+N^{+}(t),\,\,\,~~~~0\le t<t_*,
\ee
where $L_1=L_{v(t_1),v(t_1), {\omega}(t_1)}$. The following trick allows us to get rid of  the ``bad terms''  $[u(t)-v(t_1)] \cdot\na$ 
in the operator $\mathfrak{L}(t,t_1):=L_{v(t),u(t),{\omega}(t)} -L_1$. We denote
\be\la{dd1}
\ell_1(t):=\int_{t_1}^t(u(s)-v(t_1))ds, ~~~~0\le t\le t_1,
\ee
 and  change   variables $(y,t)\mapsto (y_1,t)=(y+\ell_1(t),t)$.
Next we define 
\beqn\la{Z1}
\check\Lam(y_1,t):=\Lam(y,t)=\Lam(y_1-\ell_1(t),t),\quad \check\Psi(y_1,t):=\Psi(y,t)=\Psi(y_1-\ell_1(t),t). 
\eeqn
Now  for $0\le t\le t_1$  we obtain  the ``frozen''  transversal dynamics
\be\la{redy1}
\frac{d}{dt}\check Z(t)=L_1\check Z(t)+\mathfrak{L}(t,t_1)\check Z(t)+\check N^+(t,t_1),\quad 
\check Z(t)=(\check\Lam(t),\check\Psi(t),r(t),\pi(t),\psi(t),\chi(t)).
\ee
Here $\check N^+(t,t_1)$ is $N^+(t)$ expressed in terms of $y=y_1-\ell_1(t)$,
and the nonzero elements of the matrix operator $\mathfrak{L}=\mathfrak{L}(t,t_1)$  act on the corresponding components of $\check Z$ 
as follows (cf. formulas  \eqref{AA} and \eqref{B-op}-\eqref{j-al}): 
\beqn\nonumber
\mathfrak{L}_{41}\check\Lam&\!\!\!=\!\!\!&\frac{1}{m}\delta(\om P)J\langle\check\Lambda,\rho\rangle
+\delta(\om)\langle\na\cdot J\check\Lambda,\varrho\rangle
- \langle \delta (v)\cdot\check\Lam,\na\rho\rangle+\frac{1}{I}\delta(PJ v)\langle\check\Lam,J\varrho\rangle,\\
\nonumber
\mathfrak{L}_{23}r&\!\!\!=\!\!\!&-{\cal P}[\frac{\rho}{m}\delta(\om JP)r+\delta (v)(r\cdot\na\rho)]+\delta(\om)(r\cdot\na)J\varrho
+\frac{1}{I}(r\cdot \delta(PJ v))J\varrho,\\
\nonumber
\mathfrak{L}_{33}r&\!\!\!=\!\!\!&-\frac{1}{m}\delta(\om JP)r, \qquad \mathfrak{L}_{44}\pi=-\frac{1}{m}\delta(\om P)J\pi,\\
\nonumber 
\mathfrak{L}_{43}r&\!\!\!=\!\!\!&-\big[\frac{\delta(\om^2 P^2)}m+\delta(Q)+\delta(\om^2 F)\big]r
-\frac{1}{I}\big[(r\!\cdot\! P(t)Jv(t))\delta(PJv)+(r\!\cdot\! \delta(PJv))P(t_1)Jv(t_1)\big]\\
\mathfrak{L}_{46}\chi&\!\!\!=\!\!\!&\frac{1}{I}\delta(PJv)\chi,\qquad \mathfrak{L}_{53}r =\frac{1}{I}\delta(PJv)\cdot r
\nonumber
\eeqn
where we denote $\delta(f)=\delta (f,t,t_1)=f(t)-f(t_1)$. It is easy to show that 
\be\la{bB-est}
\Vert \mathfrak{L}(t,t_1)\check Z\Vert_{\beta}\le C(\ov v,\ov\om)\Vert \check Z\Vert_{-\beta}\big[|v(t)-v(t_1)|+|\omega(t)-\omega(t_1)|\big].
\ee
Denote $\ov \ell_1(s):=\sup_{0\le t\le s} |\ell_1(t)| $, $0\le s\le t_1$, and reduce the exit time:
\[
t_*'=\sup \{t\in[0,t_*):
\ov \ell_1(s)\le 1,~~0\le s\le t\}.
\]
Evidently, $\Vert \check Z(t)\Vert_{\al}\le C(\al)\Vert Z(t)\Vert_{\al}$ for $t\le t_1< t_*'$ and any $\al\in\R$. Hence,  \eqref{parameq} and \eqref{bB-est} imply
\be\la{B1Z1est}
\Vert  \mathfrak{L}(t,t_1)\check Z(t)\Vert_{\beta}\le  C(\ov v,\ov\om)\Vert Z(t)\Vert_{-\beta}\int_{t}^{t_1}  \Vert Z(s)\Vert_{-\beta}^2 ds, \quad  0\le t\le t_1< t_*'.
\ee
Similarly,  
\be\la{N1est}
\Vert\check N^+(t,t_1)\Vert_{\beta}\le C(\ov v,\ov\om) \Vert Z(t)\Vert_{-\beta}^2, \quad  0\le t\le t_1<t_*'.
\ee
\subsection{Decay of linearized dynamics}\label{LinDynDecay}
Note that even for the ``frozen'' linear equation, dispersive decay for all solutions does not hold without an orthogonality condition of type \eqref{orth}.
Namely, by \eqref{Atanformv}--\eqref{Atanformv1}, equation \eqref{line} with $u=v$ admits the secular solutions
\be\la{sec-sol}
X(t)=\sum\limits_{j=1,2,5}C_j\tau_j(v,\om)+\sum\limits_{j=1,2}D_j\big[t\tau_j(v,\om)+\tau_{j+2}(v,\om)\big]+D_5\big[t\tau_5(v,\om)+\tau_{6}(v,\om)\big].
\ee
Therefore, we will apply the corresponding symplectically orthogonal projection which eliminates  the ``runaway solutions'' \eqref{sec-sol}.
\begin{defin}
i)  For $v\in {\cal V}$ and ${\omega}\in\R$ denote by ${\bf\Pi}_{v,\omega}$ the symplectically orthogonal projection
of ${\cal E}$ onto the tangent space ${\cal T}_{S(\si)}{\cal S}$, and  ${\bf P}_{v,\omega}:={\bf I}-{\bf\Pi}_{v,\omega}$. 
\\
ii) Denote by ${\cal Z}_{v,\omega}={\bf P}_{v,\omega}{\cal E}$ the space symplectically orthogonal to ${\cal T}_{S(\si)}{\cal S}$.
\end{defin}
\begin{remark}\la{rPiv}
Note that, by  linearity, 
\be\la{Piv}
{\bf\Pi}_{v,\omega}Z=\sum{\bf\Pi}_{jl}(v,{\omega})\tau_j(v,{\omega})\Om(\tau_l(v,{\omega}),Z),\quad Z\in{\cal E},
\ee
with some smooth coefficients ${\bf\Pi}_{jl}(v,\omega)$. 
\end{remark}
Now  the symplectic orthogonality  \eqref{orth}  can be written in the following equivalent forms:
\be\la{PZ}
{\bf\Pi}_{v(t),{\omega}(t)} Z(t)=0,~~~~{\bf P}_{v(t),{\omega}(t)}Z(t)= Z(t),~~~~~~~~~0\le t<t_*'.
\ee
\begin{remark}\la{rZ}
{\rm i)
The tangent space ${\cal T}_{S(\si)}{\cal S}$ is invariant under the operator $L_{v,\omega}$ by Lemma \ref{ceig2}.  Hence,
the space  ${\cal Z}_{v,\omega}$ is also invariant: $L_{v,\omega}Z\in {\cal Z}_{v,\omega}$
for {\it sufficiently smooth}  $Z\in {\cal Z}_{v,\omega}$.\\
ii) The condition  $\Om(Z,\tau_5(v,\omega))=0$ and definition \eqref{inb} of $\tau_5$ imply that the  last component  $\chi$ of  $Z\in {\cal Z}_{v,\omega}$ vanishes.
}
\end{remark}
In Section \ref{X-decay} below we will  prove the following theorem.
\begin{theorem}\la{lindecay}
 Let all conditions of Theorem \ref{main} hold, $|v|\le\ov v<1$, $|\omega|\le \ov\om$, and $X_0\in{\cal Z}_{v,{\omega}}\cap\cE_\beta$ with $\beta>5/2$.
Then 
 $X(t)=e^{L_{v,\omega}t}X_0\in C(\R, {\cal Z}_{v,{\omega}}\cap {\cal E}_{-\beta})$, and the following decay holds:
\be\la{frozenest}
\Vert e^{L_{v,\omega}t}X_0\Vert_{-\beta}\le
C_{\beta}(\ov v,\ov\om)(1+|t|)^{-3/2}\Vert X_0\Vert_{\beta},~~~~~~~~
\,\,\,t\in\R.
\ee
\end{theorem}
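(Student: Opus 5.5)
We prove Theorem \ref{lindecay} through the resolvent and an inverse Laplace transform. We represent
\[
e^{L_{v,\om}t}X_0=\frac{1}{2\pi i}\int_{\rRe\lam=\ve}e^{\lam t}R(\lam)X_0\,d\lam,\qquad R(\lam)=(L_{v,\om}-\lam)^{-1}.
\]
This formula is justified for $\rRe\lam>0$ because $\cH_{v,\om}\ge 0$ by Lemma \ref{ceig2}. That bound controls the field part of the energy, so $e^{L_{v,\om}t}$ grows at most polynomially on $\cE$.

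The first step is to compute $R(\lam)$ explicitly. We eliminate $\Lam$ and $\Psi$ through the free resolvent of the moving-frame wave operator, whose Fourier symbol is
\[
\hat D(\lam,k)=k^2+(\lam-ik\cdot v)^2 ,
\]
restricted to divergence-free fields by ${\cal P}$. This reduces $(L_{v,\om}-\lam)X=F$ to a finite system for $(r,\pi,\psi,\chi)$ with a $6\times6$ matrix $M(\lam)$. The entries of $M(\lam)$ are integrals of the type $\int|\hat\rho|^2/\hat D(\lam,k)\,dk$ and $\int|\na\hat\rho|^2/\hat D(\lam,k)\,dk$. We then have $R(\lam)=R_0(\lam)+$ a finite-rank correction involving $M(\lam)^{-1}$.

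The second step locates the spectrum. For $\rRe\lam\ne0$ we show $\det M(\lam)\ne0$, using the energy positivity $\cH_{v,\om}\ge0$ and the skew-symmetry \eqref{com}; this is the standard argument that excludes nonreal eigenvalues. On the imaginary axis with $\lam\ne0$, the absence of embedded eigenvalues and resonances is exactly what the spectral condition \eqref{M-condition} is designed to give. At $\lam=0$, $\det M(\lam)$ has a zero of order $6$, and Lemmas \ref{ceig1} and \ref{ceig2} identify the corresponding generalized kernel with ${\cal T}_{S(\si)}{\cal S}$, spanned by the secular solutions \eqref{sec-sol}. Because $X_0\in{\cal Z}_{v,\om}$ and this space is invariant, the singular part of $R(\lam)$ at $0$ cancels on ${\cal Z}_{v,\om}$ after applying ${\bf P}_{v,\om}$. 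Throughout, the matrix ${\bf\Om}$ is invertible by Lemma \ref{Ome}.

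The third and main step is the threshold behaviour. In 2D the free resolvent, acting $L^2_\beta\to L^2_{-\beta}$, has a Puiseux/log expansion in $\lam$ near the branch points of $\hat D$; these are $\lam=0$ and, through the moving frame, the whole imaginary axis is continuous spectrum. The smoothness \eqref{zero2}, $\hat\rho={\cal O}(|k|^4)$, makes the $\rho$-dependent entries of $M(\lam)$ as regular as needed. We aim to show that on ${\cal Z}_{v,\om}$
\[
{\bf P}R(i\nu\pm0){\bf P}=C_0+C_1\nu+\nu^{3/2}\text{-type singular term}+{\cal O}(\nu^2)\quad\text{near }\nu=0 .
\]
We also need the large-$|\nu|$ behaviour, handled by the decay of $R_0$ (as in \cite{KK2023}). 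The weight $\beta>5/2$ is what makes the singular term of order $\nu^{1/2}$ or $\nu^{3/2}$, so that $R(i\nu)$ lies in $C^{1}$ with a $\nu^{1/2}$-Hölder derivative in ${\cal L}(\cE_\beta,\cE_{-\beta})$. Obtaining this reduction from the $\beta>7/2$ of \cite{KK2023} requires extending $R(\lam)$ to a domain beyond the cut, which is where the difficulty lies. The weaker decay of $A_{v,\om}$ when $\om\ne0$ has to be absorbed there.

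Finally, we shift the contour to the imaginary axis and split the integral with a smooth cutoff. The high-frequency part decays quickly by integrating by parts twice. For the low-frequency part, a Jensen--Kato-type lemma turns the $|\nu|^{1/2}$ singularity of $\pa_\nu R$ into the rate $t^{-3/2}$. Together these give \eqref{frozenest}, and continuity of $X(t)$ in $\cE_{-\beta}$ follows from the same representation.
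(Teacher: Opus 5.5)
Your outline has the same architecture as the paper: Fourier--Laplace transform, elimination of $(\Lam,\Psi)$ through $\hat D(\lam,k)$, a finite matrix whose invertibility on $i\R\setminus\{0\}$ is the spectral condition \eqref{M-condition}, threshold analysis using \eqref{zero2}, and the Jensen--Kato lemma. However, the step that carries the proof is only asserted, and the justification you give for it does not work. You argue that because ${\cal Z}_{v,\om}$ is invariant, the singular part of $R(\lam)$ at $\lam=0$ cancels after applying ${\bf P}_{v,\om}$. That is a Riesz-projection argument, and here $\lam=0$ is not an isolated eigenvalue. It is a threshold lying on the continuous spectrum $i\R$, so $R(i\mu+0)$ has no Laurent expansion at $\mu=0$. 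Moreover, the null space of $L_{v,\om}$ in ${\cal E}_{-\beta}$ is strictly larger than the span of $\tau_1,\tau_2,\tau_5$. For instance, let $M\ne0$ be a constant matrix with ${\rm tr}\,M=0$. Then $X=(Mx,-Mv,0,0,0,0)$ satisfies $L_{v,\om}X=0$ (all moments of $\rho$ of order $\le2$ vanish) and belongs to ${\cal E}_{-\beta}$ for every $\beta>1$. It is also symplectically orthogonal to every $\tau_j$, since the Fourier transforms of the soliton fields vanish at $k=0$ to sufficient order. Hence knowing that $e^{L_{v,\om}t}X_0$ stays in ${\cal Z}_{v,\om}$ cannot by itself rule out $\mu^{-2}$ and $\mu^{-1}$ terms in $R(i\mu+0)X_0$.

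The paper supplies this step by explicit computation. First, it uses $\chi=0$ and the absence of $\psi$ from the right-hand side to reduce to the $2\times2$ system ${\cal M}(\lam)\ti r={\cal K}_0(\lam)$. Second, it shows that ${\cal M}^{-1}(i\mu)$ has an exact $\mu^{-2}$ singularity with diagonal coefficients $-(\alpha_1-\gamma_1)^{-1}$ and $-\alpha_2^{-1}$; the inequality $\gamma_1<\alpha_1$ requires a Cauchy--Schwarz estimate. Third, it expands ${\cal K}_0(i\mu+0)=k_0+k_1\mu+k_2\mu^2+{\cal O}(|\mu|^{5/2})$ by Lemma \ref{Dr-as}, which is where $\beta>5/2$ enters. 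Fourth, it verifies directly (Appendix C) that $k_0=-(\Om(X_0,\tau_j))_{j=1,2}$, and that $k_1=0$ follows from $\Om(X_0,\tau_{j+2})=0$ once $\psi_0$ is eliminated via $\Om(X_0,\tau_6)=0$. The outcome is $\pa_\mu^k\ti r(i\mu+0)={\cal O}(|\mu|^{1/2-k})$ for $k=1,2$, i.e.\ a $|\mu|^{1/2}$ singularity, which the Jensen--Kato lemma converts into $t^{-3/2}$. Your announced $\nu^{3/2}$-type term ($C^1$ with H\"older derivative) is not what $\beta>5/2$ gives, and it would correspond to $t^{-5/2}$. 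The fields are then obtained not from the full resolvent but from Duhamel's formula with the free decay \eqref{dede} of $W_v(t)$. This sidesteps any threshold analysis of the field block, and hence of states like $X$ above; $\pi$ and $\psi$ then follow from \eqref{nu-def} and $\Om(X(t),\tau_6)=0$. Two side remarks. The paper never continues $R(\lam)$ across the cut; it uses only the boundary values $R(i\mu+0)$, with Lemma \ref{as-new} providing the large-$|\mu|$ bounds. And $\cH_{v,\om}\ge0$ does not give polynomial growth of $e^{L_{v,\om}t}$, since it vanishes on $\tau_1,\tau_2,\tau_5$ and does not control $r$. The paper relies instead on the bounded invertibility of ${\cal L}-\lam$ for $\rRe\lam>0$ (Lemma \ref{L-KK}).
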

\subsection{Decay of transversal component}\label{TransDecay}
In  Section \ref{sol-as}  we will derive  our main Theorem \ref{main}  from the following time decay of the transversal component $Z(t)$:
\begin{pro}\la{pdec}
 Let all conditions of Theorem \ref{main} hold. Then $t_*=\infty$, and
\be\la{Zdec}
\Vert Z(t)\Vert_{-\beta}\le {C(\ov v,\ov\om,d_\beta)}(1+t)^{-3/2},\qquad t\ge0,\quad\beta>5/2.
\ee
\end{pro}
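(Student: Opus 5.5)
We follow the scheme of \cite{BKKS, IKV2006, KK2023}. We work with the frozen equation \eqref{redy1} on an interval $[0,t_1]$ with $t_1<t_*'$. The main tool is the weighted majorant
\[
{\bf m}(t_1):=\sup_{0\le s\le t_1}(1+s)^{3/2}\Vert Z(s)\Vert_{-\beta},
\]
and the aim is a bound of the form ${\bf m}(t_1)\le C d_\beta+C{\bf m}(t_1)^2$, uniformly in $t_1<t_*'$.

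The first step is to apply the projection ${\bf P}_1:={\bf P}_{v(t_1),\om(t_1)}$ to \eqref{redy1}. We split $\check Z=\check Z_1+{\bf\Pi}_1\check Z$ with $\check Z_1:={\bf P}_1\check Z$. By \eqref{PZ}, ${\bf\Pi}_{v(t),\om(t)}Z(t)=0$. Together with the smoothness of the coefficients in \eqref{Piv} and the translation $\ell_1$ (bounded by $1$ on $[0,t_*')$), this gives
\[
\Vert {\bf\Pi}_1\check Z(t)\Vert_{-\beta}\le C\big[|v(t)-v(t_1)|+|\om(t)-\om(t_1)|\big]\Vert Z(t)\Vert_{-\beta}\le C\Vert Z(t)\Vert_{-\beta}\int_t^{t_1}\Vert Z(s)\Vert_{-\beta}^2ds
\]
by Lemma \ref{mod}. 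For the translated part we use $\Om(\check Z,\tau_j)-\Om(Z,\tau_j)$, which is controlled by $|\ell_1(t)|$ times a $\na\tau_j$ pairing, and $\ell_1$ is itself bounded by $\int_t^{t_1}\Vert Z\Vert_{-\beta}^2$. Since $L_1$ commutes with ${\bf P}_1$ (Remark \ref{rZ}), Duhamel's formula gives
\[
\check Z_1(t)=e^{L_1t}{\bf P}_1 \check Z(0)+\int_0^t e^{L_1(t-s)}{\bf P}_1\big[\mathfrak L(s,t_1)\check Z(s)+\check N^+(s,t_1)\big]ds .
\]
We then apply Theorem \ref{lindecay} to each term, using \eqref{B1Z1est} and \eqref{N1est} to bound the sources in ${\cal E}_\beta$, and the uniform boundedness of ${\bf P}_1$ on ${\cal E}_\beta$ for $|v|\le\ov v$, $|\om|\le\ov\om$.

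The second step is to bound everything by ${\bf m}(t_1)$. For $s\le t_1$,
\[
\int_s^{t_1}\Vert Z\Vert_{-\beta}^2\le {\bf m}^2(1+s)^{-2}.
\]
Hence the source norms are at most $C{\bf m}^2(1+s)^{-3}$ and $C{\bf m}^3(1+s)^{-7/2}$, while the initial term is at most $C(1+t)^{-3/2}d_\beta$. The convolution estimate $\int_0^t(1+t-s)^{-3/2}(1+s)^{-3}ds\le C(1+t)^{-3/2}$ then gives, after adding the ${\bf \Pi}_1$ part,
\[
(1+t)^{3/2}\Vert Z(t)\Vert_{-\beta}\le C d_\beta+C{\bf m}^2(t_1)+C{\bf m}^3(t_1),\qquad 0\le t\le t_1.
\]
Taking the supremum over $t\le t_1$ yields ${\bf m}(t_1)\le Cd_\beta+C{\bf m}^2+C{\bf m}^3$. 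Since ${\bf m}$ is continuous in $t_1$ and ${\bf m}(0)\le Cd_\beta$, the bound ${\bf m}(t_1)\le 2Cd_\beta$ holds for all $t_1<t_*'$ when $d_\beta\ll1$. Here we use that $\Vert Z(0)\Vert_{\beta}\le Cd_\beta$; this follows from \eqref{close}, the Corollary after Lemma \ref{skewpro}, and the uniform continuity of ${\bf\Pi}$.

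The final step is to remove the exit times. We have $|\ell_1(t)|\le\int_0^{t_1}|\dot c|\le C\int_0^\infty {\bf m}^2(1+s)^{-3}ds\le Cd_\beta^2<1$, and $\Vert Z\Vert_{-\beta}\le 2Cd_\beta<z_{-\beta}$. Also $|v(t)-v(0)|$ and $|\om(t)-\om(0)|$ are $O(d_\beta^2)$, so the parameters stay in $\Sigma(\ov v,\ov\om)$. A continuity argument therefore shows $t_*'=t_*=\infty$, and \eqref{Zdec} follows.

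The main obstacle is the first step: we must justify that Theorem \ref{lindecay} applies to the ${\bf P}_1$-projected sources, and we must handle the mismatch between ${\bf P}_{v(t),\om(t)}$ and ${\bf P}_1$ together with the translation $\ell_1$. The danger is an uncontrolled loss of weight, since $\nabla\tau_j$ must lie in ${\cal E}_\beta$ with $\beta<3$; this is guaranteed by the $|y|^{-4}$ decay of the soliton fields. We also need integrability of the source decay rate against the $(1+t)^{-3/2}$ kernel, which requires the $t^{-2}$ tail from Lemma \ref{mod}.
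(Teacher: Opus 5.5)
Your proposal is correct and follows essentially the same route as the paper. Both arguments use frozen dynamics on $[0,t_1]$, the ${\bf P}_1$-projected Duhamel formula with Theorem~\ref{lindecay}, the cubic majorant inequality with a bootstrap, and removal of the exit times, with your direct estimate of ${\bf\Pi}_1\check Z(t)$ via \eqref{PZ} playing the role of Lemma~\ref{Z1P1Z1}. The only slip is in bounding $\ell_1$: since $u(s)-v(t_1)=\dot c(s)+v(s)-v(t_1)$, you must also include $\int_t^{t_1}\!\int_s^{t_1}|\dot v(\tau)|\,d\tau\,ds\le C{\bf m}^2(t_1)$, not only $\int|\dot c|$ or a single integral of $\Vert Z\Vert_{-\beta}^2$. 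This term is of the same order, so it changes nothing in the argument.
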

\begin{proof}
{\it Step i)} Equation \eqref{redy1} implies
\[
{\bf P}_1 \check Z(t)=e^{L_1t}{\bf P}_1\check Z(0)+\int_0^te^{L_1(t-s)}{\bf P}_1\big(\mathfrak{L}(s,t_1)\check Z(s)+\check N^+(s,t_1)\big)ds.
\]
We have used here that the projection ${\bf P}_1:={\bf P}_{v(t_1),{\omega}(t_1)}$ commutes with the group $e^{L_1t}$, since the space 
${\cal Z}_1:={\bf P}_1{\cal E}$ is invariant with respect to $e^{L_1t}$ by Remark \ref{rZ}.
Applying \eqref{frozenest}, we obtain
\be\la{bPZ}
\Vert {\bf P}_1 \check Z(t)\Vert_{-\beta}
\le C\Big(\frac{\Vert {\bf P}_1 \check Z(0)\Vert_{\beta}}{(1+t)^{3/2}}
+\int_0^t\frac{\Vert {\bf P}_1[\mathfrak{L}(s,t_1)\check Z(s)+\check N^+(s,t_1)]\Vert_{\beta}}{(1+|t-s|)^{3/2}}ds\Big).
\ee
The operator ${\bf P}_1={\bf I}-{\bf\Pi}_1$ is continuous in ${\cal E}_\beta$ by \eqref{Piv}.
Hence, \eqref{B1Z1est},  \eqref{N1est}, and \eqref{bPZ}  imply that 
\be\la{duhest}
\Vert {\bf P}_1 \check Z(t)\Vert_{-\beta}
\le C\Big(\frac{\Vert Z(0)\Vert_{\beta}}{(1+t)^{3/2}}
+\int_0^t\frac{\Vert Z(s)\Vert_{-\beta}}{(1+|t-s|)^{3/2}}\big[\Vert Z(s)\Vert_{-\beta}+\int_s^{t_1}\Vert Z(\tau)\Vert_{-\beta}^2 d\tau\big]ds\Big)
\ee
 for   $t_1<t_*'$  and $0\le t\le t_1$. Finally,  we are going to  replace ${\bf P}_1 \check Z(t)$ by $Z(t)$ in the left-hand side of \eqref{duhest}.
For  this we introduce the ``majorant'' 
\be\la{maj1}
M(t):=
\sup_{s\in[0,t]}(1+s)^{3/2}\Vert Z(s)\Vert_{-\beta}
\ee
and  further reduce the exit time: 
 \[
t_*''=\sup \{t\in[0,t_*'):\ M(s)\le \ve,~~0\le s\le t\}.
\]
Here $\ve$ is a fixed positive number, which we will specify below.
\begin{lemma}\la{Z1P1Z1} 
For sufficiently small $\ve>0$ and  $t_1<t_*''$,
the following bound holds:
$$
\Vert Z(t)\Vert_{-\beta}\le C(\ov v,\ov\omega)\Vert {\bf P}_1 \check Z(t)\Vert_{-\beta},\quad0\le t \le t_1. 
$$
\end{lemma}
The proof is based on the symplectic orthogonality \eqref{PZ} and is similar to that of Lemma 9.2 in \cite{IKV2012}.
Lemma \ref{Z1P1Z1} together with \eqref{duhest}  implies
\[
\Vert Z(t)\Vert_{-\beta}
\le C\Big(\frac{\Vert Z(0)\Vert_{\beta}}{(1+t)^{3/2}}+\int\limits_0^t\frac{\Vert Z(s)\Vert_{-\beta}}{(1+t-\!s)^{3/2}}
\Big[\int\limits_s^{t_1}\Vert Z(\tau)\Vert_{-\beta}^2 d\tau+\Vert Z(s)\Vert_{-\beta}\Big]ds\Big),~~t\le t_1<t_*''.
\]
Multiplying both sides  by $(1+t)^{3/2}$ and taking the supremum in $t\in[0,t_1]$, we get:
\[
M(t_1) \le C\Big(\Vert Z(0)\Vert_{\beta}+\!
\sup_{t\in[0,t_1]}\int\limits_0^t \frac{(1+t)^{3/2}}{(1+t-\!s)^{3/2}}\Big[\frac{M^2(s)}{(1+s)^{3}}+\frac{M(s)}{(1+s)^{3/2}}
\int\limits_s^{t_1}\frac{M^2(\tau)d\tau}{(1+\tau)^{3}}\Big]ds\Big),~~ t_1<t_*''.
\]
Taking into account that $M(t)$ is a monotonically increasing function, we obtain that
\[
M(t_1)\le C\Big(\Vert Z(0)\Vert_{\beta}+\big(M^3(t_1)+M^2(t_1)\Big),\quad t_1<t_*''.
\]
This inequality implies that $M(t_1)$ is bounded for $t_1<t_*''$,
and moreover
\be\la{m2est}
M(t_1)\le C_1\Vert Z(0)\Vert_{\beta},~~~~~~~~~t_1<t_*''\,,
\ee
since $M(0)$ is sufficiently small by \eqref{close}.
\\
{\it Step ii)} The constant $C_1$ in  \eqref{m2est} does not depend on
$t_*$, $t_*'$, and $t_*''$ by Lemma \ref{Z1P1Z1}.
We choose $d_{\beta}$ in \eqref{close} so small that
$\Vert Z(0)\Vert_{\beta}<\ve/(2C_1)$.
Then  \eqref{m2est} implies that $t''_*=t'_*$, and therefore  \eqref{m2est} holds for all $t_1<t'_*$.
Further,   
$u(s)-v(t_{1})=\dot c(s)+\ds\int_s^{t_1}\dot v(\tau)d\tau$ by \eqref{vw}.
Hence, Lemma \ref{mod} and definition \eqref{dd1} imply that
\beqn\nonumber
 |\ell_{1}(t)|&\le& \int_{t}^{t_{1}}\left( |\dot{c}(s)|+\int_{s}^{t_{1}}|\dot{v}(\tau )|d\tau\right)ds\\
\nonumber
& \le& CM^2(t_1)\int_{t}^{t_{1}}\left(\frac{1}{(1+s)^3}+\int_{s}^{t_{1}}\frac{d\tau}{(1+\tau)^3}\right)ds\le CM^2(t_1)\le C_2\ve^2,\quad t<t_*'.
\eeqn 
  We choose $\ve$ so
small that $C_2\ve^2\le 1$. Then $t'_*=t_*$.
Therefore, \eqref{m2est} holds for all $t_1<t_*$.  Hence,
$\Vert Z(t)\Vert_{-\beta}<z_{-\beta}(\ov v,\ov\om)/2$ for $0\le t < t_*$
if $\Vert Z(0)\Vert_{\beta}$ is sufficiently small.  Finally, this implies that $t_*=\infty$, hence also $t''_*=t'_*=\infty$ and
\eqref{m2est} holds for all $t_1>0$ if $d_{\beta}$ is small enough. 
\end{proof}
\subsection{Proof of  Theorem \ref{main}}\label{sol-as}
{\it Step i)}  First, we obtain asymptotics \eqref{qq-as}--\eqref{wphi-as}  for the vector components.\\
From  \eqref{parameq} and \eqref{Zdec} it follows that
\[
|\dot c(t)|+|\dot v(t)|+|\dot\omega(t)|+|\dot\vartheta(t)|={\cal O}(t^{-3}), \quad  t\to\infty.
\]
Hence,
\be\la{cv-as} 
c(t)=c_+ +{\cal O}(t^{-2}),~~ v(t)=v_+ +{\cal O}(t^{-2}),~~ \omega(t)=\omega_{+} +{\cal O}(t^{-2}),~~
\vartheta(t)=\vartheta_+ +{\cal O}(t^{-2}),~~  t\to\infty.
\ee 
Now \eqref{solvom}, \eqref{dot-q}, \eqref{JAro}, \eqref{Zdec},  and \eqref{cv-as} imply
\be\la{dq}
\dot q(t)=v(t)+\frac 1m\big[\pi(t)-\langle\Lam(t),\rho\rangle-\omega(t)JPr(t)\big]+{\cal O}(\Vert Z(t)\Vert_{-\beta}^2)=v_++{\cal O}(t^{-3/2}),\quad  t\to\infty.
\ee
Moreover, \eqref{dec},  \eqref{vw}, \eqref{Zdec}, and  \eqref{cv-as}  give
\[
q(t)=b(t)+r(t)=c(t)+\int_0^tv(s)ds+{\cal O}(t^{-3/2})=a_++v_+t+{\cal O}(t^{-1}),\quad  t\to\infty.
\]
Hence,  \eqref{qq-as}  follows. 
Further,  \eqref{mls3}, \eqref{solvom}, \eqref{vw}, \eqref{Zdec}, and \eqref{cv-as}  imply that
\beqn\nonumber
I\dot\varphi(t)&\!\!\!=\!\!\!&M(t)-\langle A(x,t),J\varrho(x\!-\!q(t))\rangle
=M_{v(t),\omega(t)}+\chi(t)-\langle A_{v(t),\omega(t)}(y)+\Lam(y,t),J\varrho(y\!-\!r(t))\rangle\\
\nonumber
&\!\!\!=\!\!\!&I\omega(t)-\langle A_{v(t),\omega(t)},J(\varrho(y\!-\!r(t))-\!\varrho(y))\rangle
+ \chi(t)-\langle\Lambda(y,t),J\varrho(y\!-\!r(t))\rangle\\
\la{ww}
&\!\!\!=\!\!\!&I\omega_{+}+ {\cal O}(t^{-3/2}). 
\eeqn
Finally,   \eqref{dec}, \eqref{vw}, \eqref{Zdec}, and  \eqref{cv-as}  imply
$$
\varphi(t)=\gamma(t)+\psi(t)=\vartheta(t)+\int\limits_0^t {\omega}(s)ds+{\cal O}(t^{-3/2})=\theta_++{\omega}_+t+ {\cal O}(t^{-1}),\quad  t\to\infty.
$$
Hence, \eqref{wphi-as} follows.
\\
{\it Step ii)} Now we obtain asymptotics \eqref{AP-as}--\eqref{r-as}  for the fields.\\
For the field part $F(t)=(A(t), \Pi(t))$ of the solution,  we  define the accompanying soliton field as 
$F_{{\rm v}(t),{\rm w}(t)}(t)\! = (A_{{\rm v}(t),{\rm w}(t)}(x\!-\!q(t)), \Pi_{{\rm v}(t),{\rm w}(t)}(x\!-\!q(t)))$, where  ${\rm v} = \dot q$, ${\rm w}=\dot\varphi$. 
Then for the difference 
${\cal D}(t)=(\Lam(t),\Psi(t)) = F(t) -F_{{\rm v}(t),{\rm w}(t)}$,  the first two equations of  \eqref{mls3} imply 
\[
\dot {\cal D}(x,t)=\!\begin{pmatrix} 0& 1\\
\Delta&0
\end{pmatrix}{\cal D}(x,t)-R(t), ~~{\rm where}~~R(t):=(\dot {\rm v}(t)\cdot\!\na_{{\rm v}}+\dot{\rm w}(t)\na_{{\rm w}})F_{{\rm v}(t),{\rm w}(t)}(x-q(t)).
\]
One has 
\be\la{d-1}
{\cal D}(t)=W_0(t){\cal D}(0)-\int_0^t W_0(t-s)R(s)ds=W_0(t)\Phi_++r_+(t),
\ee
where we denote
$\Phi_+=\ds {\cal D}(0)-\int_0^{\infty}W_0(-s)R(s)ds$ and  $r_+(t):=\ds\int_t^{\infty}W_0(t-s)R(s)ds$.\\
From  \eqref{mls3}, \eqref{Hs5}, \eqref{Zdec}, and \eqref{dq}--\eqref{ww}  it follows that
\beqn\nonumber
m\dot {\rm v}(t)&=&-\langle \Pi(x,t),\rho(x\!-\!q(t))\rangle
+{\rm w}(t) \langle \na\cdot\!J A(x,t),\varrho(x\!-\!q(t))\rangle
+J{\rm v}(t)\langle \na\cdot \!J A(x,t),\rho(x\!-\!q(t))\rangle\\
\nonumber
&=&-\langle \Psi(y,t),\rho(y\!-\!r(t))\rangle+\langle ({\rm v}\cdot\na) A_{{\rm v},{\rm w}}(y),\rho(y\!-\!r(t))-\rho(y)\rangle\\
\nonumber
&+&{\rm w}(t)\langle \na\cdot J \Lam(y,t),\varrho(y\!-\!r(t))\rangle+{\rm w}(t)\langle \na\cdot J A_{{\rm v},{\rm w}}(y),\varrho(y\!-\!r(t))-\varrho(y)\rangle\\
\nonumber
&+&J{\rm v}(t)\langle \na\cdot J \Lam(y,t),\rho(y\!-\!r(t))\rangle
+J{\rm v}(t)\langle \na\cdot J A_{{\rm v},{\rm w}}(y),\rho(y\!-\!r(t))-\rho(y)\rangle={\cal O}(t^{-3/2})
\eeqn
as  $t\to\infty$, since by \eqref{solit3}, 
\beqn\nonumber
&&\langle ({\rm v}\cdot\na) A_{{\rm v},{\rm w}},\rho\rangle+{\rm w} \langle \na\cdot JA_{{\rm v},{\rm w}},  \varrho\rangle
+J{\rm v}\langle \na\cdot J A_{{\rm v},{\rm w}},\rho\rangle\\
\nonumber
&&
={\rm w}\Big[\langle  \frac{({\rm v}\cdot k)  J\na \hat\rho}{\hat D_0},\hat\rho\rangle
+\langle \frac{(k\cdot J{\rm v})\hat \rho(k)}{\hat D_0},  \na\hat\rho\rangle
-J{\rm v}\langle \frac{k\cdot \na\hat\rho}{\hat D_0},\hat\rho\rangle\Big]
={\rm w}{\rm v}\langle\frac{(k_1\na_2-k_2\na_1)\hat\rho}{\hat D_0},\hat\rho\rangle=0.
\eeqn
Similarly,  \eqref{mls3},  \eqref{Hs5}, \eqref{Zdec}, and \eqref{dq}--\eqref{ww}    imply 
\beqn\nonumber
I\dot{\rm w}(t)&\!\!\!=\!\!\!&\langle \Pi(x,t), J\varrho(x-q(t))\rangle+\langle  A(x,t),({\rm v}\cdot\na)J\varrho(x-q(t))\rangle\\
\nonumber
&\!\!\!=\!\!\!&-\langle ({\rm v}\cdot\na) A_{{\rm v},{\rm w}}(y),J\big[\varrho(y\!-\!r(t))-\varrho(y)\big]\rangle
+\langle  A_{{\rm v},{\rm w}}(y),({\rm v}\cdot\na)J\big[\varrho(y\!-\!r(t))-\varrho(y)\big]\rangle\\
\nonumber
&\!\!\!+\!\!\!&\langle \Psi(x,t),J\varrho(x\!-\!q(t))\rangle+\langle \Lam(x,t),({\rm v}\cdot\na)J\varrho(x\!-\!q(t))\rangle={\cal O}(t^{-3/2}),~~ t\to\infty.
\eeqn
Hence,   $\Vert R(t)\Vert_{\cal F} = {\cal O}(t^{-3/2})$.
Therefore, $\Phi_{+}\in{\cal F}$ and  $\Vert r_{+}(t)\Vert_{\cal F}= {\cal O}(t^{-1/2})$ due to the unitarity of $W_0(t)$. 
Finally, asymptotics \eqref{AP-as}--\eqref{r-as} follow from \eqref{d-1}.
\setcounter{equation}{0}
\section{Dispersive decay}\label{lin-dyn}
In the remaining sections we prove the dispersive decay for solution to the  linear equation \eqref{line}  with  $u=v$,  
as stated  in Theorem \ref{lindecay}.
The structure of the system \eqref{line}  allows us to obtain an equation  for $r(t)$, which does not contain other variables.  
Using the Fourier--Laplace  transform, we  solve this equation and prove the decay \eqref{frozenest} for $r(t)$. The decay for the remaining vector components 
 follows from their  expressions in terms of $r(t)$. To prove the decay of  fields, we apply the Duhamel representation and 
the  known dispersive decay for the 2D wave equation obtained in  \cite{K2010, KK2023}.
\subsection{Reduced equation}\label{redu}
First, we change variables in  equation  \eqref{line}  to simplify its structure. Denote 
\be\la{nu-def}
\phi=\frac{1}{m}\big(\pi-\langle \Lam,\rho\rangle\big),\qquad \nu=\frac{1}{I}\big(\langle \Lam, J\varrho\rangle-r\cdot PJv\big).
\ee
If we prove  decay for $\phi$ and $\Lam$, then $\pi$  has the corresponding decay as well.
By \eqref{AA} and \eqref{B-op},
\[
\dot\phi=\frac{1}{m}\big[\dot\pi-\langle \dot\Lam,\rho\rangle\big]
=\frac{1}{m}\Big[-\langle v\cdot \Lam,\na\rho\rangle+{\omega}\langle \na\cdot J\Lam,\varrho\rangle-{\omega} PJ\phi
+\nu PJv -Gr-\langle \Psi+v\cdot\na\Lam,\rho\rangle\Big].
\]
Recall that  $\chi=0$ (see Remark \ref{rZ}).  Moreover, the RHS of \eqref{line} does not depend on $\psi$ by  \eqref{AA}.
Thus, it  is natural to consider the reduced equation  for
${\cal X}=(\Lambda, \Psi, r, \phi)$.  Namely, equation  \eqref{line} implies
\be\la{bfA}
\dot{\cal X}\!={\cal L}{\cal X}=\begin{pmatrix}
\Psi+v\cdot\na\Lambda \\
\De\Lambda+v\cdot\na\Psi+ {\cal P}[\rho\phi - \frac {\omega}m\rho JPr-\!v(r\cdot\na\rho)]+\big({\omega}(r\cdot \na)-\nu\big)J\varrho \\
\phi-\frac{\omega}m JPr\\
-\frac{1}m\Big(\langle v\cdot \Lam,\na\rho\rangle-{\omega}\langle \na\cdot \!J\Lam,\varrho\rangle+{\omega} PJ \phi-\nu PJv+Gr
+\langle \Psi+v\cdot\na\Lam,\rho\rangle\Big)
\end{pmatrix}.
\ee
Applying the Fourier--Laplace transform
$\ti {\cal X}(\lam):=\int_0^\infty e^{-\lam t}{\cal X}(t)dt$, $\rRe\lam>0$, we obtain
\be\la{eq-main}
\left\{\!
\ba{l}
\ti\Psi+v\cdot\na\ti\Lambda -\lam\ti\Lambda=-\Lambda_0\\
\De\ti\Lambda+v\cdot\!\na\ti\Psi+{\cal P}\big[\big(\lam\ti r-v(\ti r\cdot\na)\big)\rho\big]
+\big( {\omega}(\ti r\cdot\na)\!-\ti\nu\big)J\varrho-\lam\ti\Psi=-\Psi_0+{\cal P}[r_0\rho]\\
\ti\phi-\frac{\omega}m JP\ti r-\lam \ti r=-r_0\\
\langle v\cdot \ti\Lam,\na\rho\rangle+({\omega} PJ+m\lam)(\lam+\frac{\omega}m JP)\ti r-{\omega}\langle \na\cdot \!J\ti\Lam,\varrho\rangle
+G\ti r-\ti\nu PJv+\lam\langle \ti\Lambda,\rho\rangle\\
\quad \quad =\pi_0+({\omega} PJ\!+m\lam)r_0
\ea\right.
\ee
In Fourier space, the first two equations  become
\[
\left\{
\ba{rcl}
-(i(v\cdot k)+\lam)\hat{\ti\Lam}(\lam,k)+ \hat{\ti\Psi}(\lam,k)&=&-\hat\Lam_0(k)
\\
-k^2\hat{\ti \Lam}(\lam,k)-(i(v\cdot k)+\lam)\hat{\ti\Psi}(\lam,k)&=&-\hat\Psi_0(k)+\widehat{{\cal P}r_0\rho}(k)-\Phi(\lam,k),
\ea\right.
\]
where 
\be\la{K}
\Phi(\lam,k):=\lam\widehat{{\cal P}\ti r\rho}(k)+i(k\cdot\ti r)\widehat{{\cal P} v\rho}(k)+\big(i\ti \nu-{\omega} (k\cdot \ti r)\big)J\na\hat\rho.
\ee
Hence,
\be\la{hat-Lam}
\hat{\ti \Lam}(\lam,k)=\frac{\Phi_0(\lam,k)}{\hat D(\lam,k)}+\frac{\Phi(\lam,k)}{\hat D(\lam,k)}, \qquad\rRe\lam>0,
\ee
where
\be\la{KK}
\hat D(\lam,k):=(i(v\cdot k)+\lam)^2+k^2,\qquad
\Phi_0(\lam,k):=(i(v\cdot k)+\lam)\hat\Lam_0(k)+\hat\Psi_0(k)-\widehat{{\cal P}r_0\rho}(k).
\ee
Substituting \eqref{hat-Lam} into the fourth equation of \eqref{eq-main}, we get
\be\la{eq-Pi}
i\langle \frac{v\cdot\! \Phi}{\hat D(\lam)},k\hat\rho\rangle-{\omega}\langle \frac{k\cdot\! J \Phi}{\hat D(\lam)},\na\hat\rho\rangle
+\big(Q+{\omega}^2 F\!+{\omega}\lam(PJ\!+\!JP)+m\lam^2\big)\ti r-\ti\nu PJv+\lam\langle \frac{\Phi}{\hat D(\lam)},\hat\rho\rangle\!=K_0,
\ee
where
\be\la{F1}
K_0=K_0(\lam)=\pi_0+({\omega} PJ+m\lam)r_0-i\langle\frac{ v\cdot \Phi_{0}}{\hat D(\lam)},k\hat\rho\rangle
+{\omega}\langle\frac{ k\cdot J\Phi_0}{\hat D(\lam)},\na\hat\rho\rangle-\lam\langle \frac{\Phi_0}{\hat D(\lam)},\hat\rho\rangle.
\ee
Let $F(\lam)$, $V(\lam)$, $N(\lam)$,  $P(\lam)$, $Q(\lam)$,  and $U(\lam)$ be  matrices with  entries 
\beqn\nonumber
\!\!\!\!\!\!\!\!\!\!&&\!\!\!\!\! F_{jl}(\lam)=\!\int\! \frac{k_jk_l|\na\hat\rho|^2}{\hat D(\lam)} dk, ~~ V_{jl}(\lam)=\!\int\!\frac{ (v\cdot\! Jk)|\hat\rho|^2k_j k_l}{k^2\hat D(\lam)} dk,\quad
N_{jl}(\lam)=\!\int \!\frac{(v\cdot \!Jk)\hat\rho k_j\na_l\hat\rho}{\hat D(\lam)} dk,\\ 
\la {PQF}
\!\!\!\!\!\!\!\!\!\!&&\!\!\!\!\! P_{jl}(\lam)= \!\int\! \frac{k_l\na_j\hat\rho(k)}{\hat D(\lam)}\hat\rho(k) dk,\quad
Q_{jl}(\lam)=\!\int \!\big[v^2k^2\!-\!(v\cdot k)^2\big]\frac{k_lk_j|\hat\rho(k)|^2dk}{k^2\hat D(\lam)}, ~~j,l=1,2.\\
\la{UU}
\!\!\!\!\!\!\!\!\!\!&&\!\!\!\!\! U_{12}(\lam)=U_{21}(\lam)=-\int\frac{k_1k_2|\hat\rho|^2dk}{k^2\hat D(\lam)}, \quad
U_{jj}(\lam)=\int\frac{(k^2-k_j^2)|\hat\rho|^2dk}{k^2\hat D(\lam)},~~ j=1,2.
\eeqn
Note that $P(0)=P$, $Q(0)=Q$, $F(0)=F$, where $P,Q,F$ are given by \eqref{c-ro}. 
We also define the functions
\be\la{fh-def}
f(\lam)=\int\frac{\hat\rho }{\hat D(\lam)}J\na\hat\rho\, dk, \quad
h(\lam)=\int\frac{|\na\hat\rho|^2}{\hat D(\lam)}k\, dk,\quad \varkappa(\lam)=\int\frac{|\na\hat\rho|^2}{\hat D(\lam)}dk.
\ee
Then \eqref{Pi-e} and \eqref{K} imply
\beqn\la{Pi-1}
\langle\frac {\Phi}{\hat D(\lam)},\hat\rho\rangle
&\!\!\!=\!\!\!&\big(\lam U(\lam)-{\omega} JP(\lam) +i JV(\lam)\big)\ti r+i\ti \nu f(\lam),\\
\label{Pi-2}
\langle\frac{ k\cdot\! J\Phi}{\hat D(\lam)},\na\hat\rho\rangle
&\!\!\!=\!\!\!&\big(\lam P(\lam)J+{\omega} F(\lam)-iN(\lam)\big)\ti r-i\ti \nu h(\lam),\\
\label{Pi-3}
\langle \frac {v\cdot\! \Phi}{\hat D(\lam)},k\hat\rho\rangle
&\!\!\!=\!\!\!&\big(iQ(\lam)-{\omega} N(\lam)-\lambda V(\lam)J\big)\ti r-i\ti\nu P(\lam)Jv,\\
\la{Pi-4}
\langle \frac {\Phi}{\hat D(\lam)}, J\hat\varrho\rangle
&\!\!\!=\!\!\!&\ti r\cdot \big(P(\lam)Jv+i\lam f(\lam)-i{\omega} h(\lam)\big)-\ti\nu\varkappa(\lam).
\eeqn
Denote $\breve P(\lam):=P-P(\lam)$, $\breve Q(\lam):=Q-Q(\lam)$, $\breve F(\lam):=F-F(\lam)$.   Using  \eqref{nu-def},  \eqref{hat-Lam}, and  \eqref{Pi-1}, we get
\be\la{nu-rep}
\ti\nu=\frac{1}{\kappa(\lam)}\Big[\ti r\cdot \big(i\lam f(\lam)-i{\omega} h(\lam)-\breve P(\lam)Jv\big)+\langle \frac{\Phi_0}{\hat D(\lam)},J\hat\varrho\rangle\Big],
~~~{\rm where} ~~\kappa(\lam):=I+\varkappa(\lam).
\ee
Evidently, $\kappa(0)>0$. Moreover, $\kappa(\lam)\ne 0$ for $\rRe\lam>0$.
Substituting \eqref{Pi-1}--\eqref{Pi-4} and \eqref{nu-rep}  into the LHS of \eqref{eq-Pi}, we obtain
\beqn \nonumber
&&\Big[\breve Q+{\omega}^2\breve F+\lam^2 U(\lam)+{\omega}\lam\big(\breve PJ\!+\!J\breve P\big)+i\lam(JV(\lam)-V(\lam)J)+m\lam^2\Big]\ti r\\
\la{M13}
&&+\ti r\cdot\frac{i\lam f(\lam)-i{\omega} h(\lam)-\breve PJv}{\kappa(\lam)}\Big[i\lam f(\lam)+i{\omega} h(\lam)-\breve PJv\Big]={\cal K}_0(\lam), 
\eeqn
where
\be\la{F1+}
{\cal K}_0(\lam):=K_0(\lam)-\frac{1}{\kappa(\lam)}\langle \frac{\Phi_0}{\hat D(\lam)},J\hat\varrho\rangle\Big[i\lam f(\lam)+i{\omega} h(\lam)-\breve PJv\Big].
\ee
We rewrite \eqref{M13}  as 
\be\la{M-def}
{\cal M}(\lam)\ti r(\lam)={\cal K}_0(\lam),~~ {\rm where}
\quad {\cal M}(\lam)=\begin{pmatrix}a_{11}(\lam)+b_{11}(\lam)&a_{12}(\lam)+b_{12}(\lam)\\
a_{21}(\lam)+b_{21}(\lam)&a_{22}(\lam)+b_{22}(\lam)
\end{pmatrix},
\ee
where 
\beqn\nonumber
a_{jl}&=&\breve Q_{jl}+{\omega}^2\breve F_{jl}+\lam^2(m\delta_{jl}+U_{jl})+{\omega}\lam\big(\breve PJ\!+\!J\breve P\big)_{jl}+i\lam(JV-VJ)_{jl},\\
\label{aabb}
b_{jl}&=&\frac{1}{\kappa}\big(i\lam f_j-i{\omega} h_j-(\breve PJv)_j\big)\big(i\lam f_l+i{\omega} h_l-(\breve PJv)_l\big).
\eeqn
The invertibility of the matrix ${\cal M}(\lam)$ for $\rRe\lam>0$ follows from the next lemma. 
\begin{lemma}\la{L-KK} 
The operator ${\cal L}-\lam:{\cal F}\oplus \R^4\to{\cal F}\oplus \R^4$ has a bounded inverse  for $\rRe\lam>0$.
\end{lemma}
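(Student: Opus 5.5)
The plan is the standard energy argument for Hamiltonian generators with nonnegative energy, combined with a Fredholm/compactness reduction. Write ${\cal L}={\cal L}_0+{\cal K}$, where ${\cal L}_0$ is the ``free'' part
\[
{\cal L}_0(\Lam,\Psi,r,\phi)=(\Psi+v\cdot\na\Lam,\ \De\Lam+v\cdot\na\Psi,\ 0,\ 0),
\]
and ${\cal K}$ collects all couplings through $\rho$, $\varrho$ and the finite matrices. Since $\rho\in C_0^\infty$, every entry of ${\cal K}$ is of finite rank: it either pairs with a fixed smooth compactly supported function or produces such a function times a vector in $\R^4$. Hence ${\cal K}$ is compact on ${\cal F}\oplus\R^4$.

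For ${\cal L}_0-\lam$ with $\rRe\lam>0$, I would invert explicitly in Fourier space. The field part is solved by the symbol $1/\hat D(\lam,k)$, exactly as in \eqref{hat-Lam}. Since $|v|<1$, one has $|\hat D(\lam,k)|\ge c(\lam)(k^2+1)$ for $\rRe\lam>0$: the roots of $\hat D=0$ are $\lam=-i(v\cdot k)\pm i|k|$, which are purely imaginary. This gives boundedness on ${\cal F}$ and trivial inversion on $\R^4$. Therefore ${\cal L}-\lam=({\cal L}_0-\lam)(I+({\cal L}_0-\lam)^{-1}{\cal K})$ is Fredholm of index zero, and it suffices to show $\ker({\cal L}-\lam)=0$.

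For injectivity, suppose ${\cal L}{\cal X}=\lam{\cal X}$ with $\rRe\lam>0$ and ${\cal X}\ne0$. Lift ${\cal X}$ back to $X=(\Lam,\Psi,r,\pi,\psi,\chi)$ with $\chi=0$ and $\pi=m\phi+\langle\Lam,\rho\rangle$. The $\psi$-component solves the fifth row of \eqref{line}, $\lam\psi=\nu$, which is possible since $\lam\ne0$. Then $X(t)=e^{\lam t}X$ solves \eqref{line} with $u=v$ (complexified). By Lemma \ref{haml}~iii), $\Om(L_{v,\om}X,\ov X)=-\Om(X,L_{v,\om}\ov X)$, which yields $(\lam+\ov\lam)\Om(X,\ov X)=0$, hence $\Om(X,\ov X)=0$.

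The energy is conserved by \eqref{enec}: the Hermitian form ${\cal H}_{v,\om}(X(t))=e^{2\rRe\lam\, t}{\cal H}_{v,\om}(X)$ is constant, so ${\cal H}_{v,\om}(X)=0$. By Lemma \ref{ceig2}~ii), ${\cal H}_{v,\om}\ge0$ as a real quadratic form, applied separately to the real and imaginary parts. So $X$ is a null vector of a nonnegative form and lies in its kernel, i.e.\ ${\cal D}{\cal H}_{v,\om}(X)=0$. Then $L_{v,\om}X=\bJ{\cal D}{\cal H}_{v,\om}(X)=0$, contradicting $\lam\ne0$ unless $X=0$. For the kernel argument I will use the sum-of-squares representation from the proof of Lemma \ref{ceig2}: vanishing of each square gives $\Psi=-(v\cdot\na)\Lam$ together with explicit Fourier formulas for $\Lam$ in terms of $r$. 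Substituting these into the eigenvalue equation forces $\lam\Lam=0$, and similarly for the finite-dimensional components.

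The main obstacle is the regularity and functional setting. The space ${\cal F}$ controls only $\na\Lam$, not $\Lam$, and the eigenvector $X$ lies merely in ${\cal F}\oplus\R^4$, so the energy identity has to be justified. I would handle this by noting that, for $\rRe\lam>0$, the eigenvalue equation and the bound on $1/\hat D$ give $\hat\Lam=\Phi/\hat D$ with $\Phi$ smooth and rapidly decaying, which provides enough smoothness and decay to differentiate ${\cal H}_{v,\om}(e^{\lam t}X)$ directly. If the ``null vector of a nonnegative form'' step runs into trouble in the degenerate directions (the $\Lam$'s with $\hat D_0\hat\Lam$ determined by $r$), I would fall back on the tangent vectors. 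A kernel element of ${\cal H}_{v,\om}$ must be a combination of $\tau_1,\tau_2,\tau_5$, by the same computation as in the proof of Lemma \ref{ceig1}. These vectors satisfy $L_{v,\om}\tau_j=0$, which is again incompatible with $\lam\ne0$.
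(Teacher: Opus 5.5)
Your argument is correct, but it is not the route the paper points to. The paper gives no proof of its own, only a reference to Lemma 14.1 of \cite{KK2023}. In this series of papers (cf.\ \cite{IKV2006}) the route is typically dynamical:
\begin{itemize}
\item Conservation of ${\cal H}_{v,\om}$ and its sum-of-squares form from the proof of Lemma \ref{ceig2} give an a priori bound $\Vert e^{{\cal L}t}{\cal X}_0\Vert\le C(1+|t|)\Vert {\cal X}_0\Vert$.
\item Here the square $\frac1{2m}(\pi-\langle\Lam,\rho\rangle-\om JPr)^2=\frac m2|\dot r|^2$ controls $\dot r$, so $|r(t)|\le C(1+|t|)$. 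The remaining squares control $\na\Lam$ and $\Psi$ in terms of ${\cal H}_{v,\om}$ and $|r|$.
\item The inverse is then the Laplace transform $({\cal L}-\lam)^{-1}=-\int_0^\infty e^{-\lam t}e^{{\cal L}t}\,dt$ for $\rRe\lam>0$.
\end{itemize}

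You argue spectrally instead. ${\cal L}-\lam$ is a bounded finite-rank perturbation of the explicitly invertible ${\cal L}_0-\lam$. Your bound $|\hat D(\lam,k)|\ge c(\lam)(1+k^2)$ is right, and the coupling functionals are bounded on ${\cal F}$ by \eqref{zero2}. So it suffices to exclude eigenvalues. For an eigenvector, the energy identity gives $2\rRe\lam\,[{\cal H}_{v,\om}(\rRe X)+{\cal H}_{v,\om}(\rIm X)]=0$. By ${\cal H}_{v,\om}\ge 0$, both parts lie in the radical of the form, hence $\lam X=L_{v,\om}X=0$.

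This needs only nonnegativity of ${\cal H}_{v,\om}$, not its finer coercivity modulo $r$. Several of your precautions are dispensable:
\begin{itemize}
\item The eigenvector is automatically smooth, since $\hat\Lam=\Phi/\hat D$, so the regularity worry does not arise.
\item The Cauchy--Schwarz argument for a nonnegative form handles the degenerate directions by itself, so the fallback via $\tau_1,\tau_2,\tau_5$ is not needed.
\item The identity $\Omega(X,\ov X)=0$ is never used.
\end{itemize}

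What your route does not give is the growth bound, which the dynamical route delivers for free. That bound is what actually justifies the Fourier--Laplace transform $\int_0^\infty e^{-\lam t}{\cal X}(t)\,dt$ for every $\rRe\lam>0$, used just before the lemma. It also yields the quantitative estimate $\Vert({\cal L}-\lam)^{-1}\Vert\le C\big((\rRe\lam)^{-1}+(\rRe\lam)^{-2}\big)$.
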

The proof is similar to that of  Lemma 14.1 in \cite{KK2023}. By \eqref{M-def},
\be\la{detM}
{\cal M}^{-1}=\frac{1}{{\rm det}\, {\cal M}}\begin{pmatrix} 
a_{22} +b_{22}& -(a_{12}+b_{12})\\
a_{12}+b_{12}&  a_{11} +b_{11}
\end{pmatrix},\quad  {\rm det}\, {\cal M}=(a_{11} +b_{11})(a_{22} +b_{22})+(a_{12}+b_{12})^2.
\ee
Lemma \ref{L-KK} together with  \eqref{M-def} implies
\be\la{r-sol}
\ti r(\lam)={\cal M}^{-1}(\lam){\cal K}_0(\lam),\quad \rRe\lam>0. 
\ee
\subsection{Regularity  on the imaginary axis}\label{det-as}
First, let us show that the limit matrix ${\cal M}(i\mu)={\cal M}(i\mu+0)$ exists for $\mu\in\R$.
For this, we return to the coordinate representation. In the case $v=(|v|,0)$, formula \eqref{KK} implies
\[
\hat D(\lam,k)=k^2+(i|v|k_1+\lam)^2=\frac{1}{\ga^2}(k_1+i|v|\lam\gamma^2)^2+k_2^2+\ga^2\lam^2,\qquad \ga:=1/\sqrt{1-v^2}.
\]
Hence, for $\rRe\lam>0$, we obtain
\be\la{g-rep}
{\cal R}(\lam,z):=\frac{1}{2\pi}\int \frac{e^{-ikz}dk}{\hat D(\lam,k)}
=\gamma e^{-|v|\lam\gamma \tilde z_1}\int \frac{e^{-ik\tilde z}dk}{k_1^2+k_2^2+\ga^2\lam^2}=\gamma e^{-|v|\lam\gamma\tilde z_1}R(-\gamma^2\lam^2,\tilde z)
,\quad \tilde z=(\gamma z_1,z_2).
\ee
where $R(\zeta, x-y)$ is the integral kernel of the resolvent $R(\zeta)=(-\Delta-\zeta)^{-1}$ of the 2D Laplacian.
Let ${\cal R}(\lam)$ be the integral operator with kernel ${\cal R}(\lam,x-y)$. Then
the well-known properties of $R(\zeta)$ (see, for example, \cite{A, KK2012}) imply the following.\\
1) In $H_{\al}^0\to H_{-\al}^{2}$ with $\al>1/2$,
the convergence holds:
\be\la{LAP}
{\cal R}(i\mu+\ve)\to {\cal R}(i\mu+ 0),\quad \ve\to +0,
\ee
where  ${\cal R}(i\mu+0)$ is the integral operator with kernel
\be\la{g-rep1} 
{\cal R}(i\mu+0,z)=\gamma e^{-i|v|\mu\gamma\tilde z_1}R(\gamma^2\mu^2\mp 0,\ti z)
=\gamma e^{-i|v|\mu\gamma\tilde z_1}R_{\mp}(\gamma^2\mu^2,\ti z),\quad  \pm\mu>0.
\ee
2) The following asymptotics hold as $\mu\to\infty$:
\be\la{R-as}
\Vert {\cal R}^{(k)}(i\mu+0)\Vert_{H^0_\al \to H^{l}_{-\al}}
 ={\cal O}(|\mu|^{-1+l}),\quad k=0,1,\dots, \quad \al>1/2+k, \quad l=-1,0,1,2.
 \ee
Asymptotics \eqref{R-as}  also hold in the case $k=1,2,\dots$ and  $l=-2$: 
\begin{lemma}\la{as-new}
\be\la{R-as1}
\Vert {\cal R}^{(k)}(i\mu+0)\Vert_{H^0_\al \to H^{-2}_{-\al}}
={\cal O}(|\mu|^{-3}),\quad \mu\to\infty, \quad k=1,2, \dots, \quad \al>1/2+k.
\ee
\end{lemma}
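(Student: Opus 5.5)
The plan is to reduce \eqref{R-as1} to the standard bound \eqref{R-as} with $l=0$ by splitting the resolvent into a piece that is uniformly bounded into $H^0$ and a piece whose $\lam$-dependence is explicit. We work in Fourier space via \eqref{KK}: ${\cal R}(\lam)$ is the Fourier multiplier $1/\hat D(\lam,k)$ with $\hat D(\lam,k)=(i(v\cdot k)+\lam)^2+k^2$. Since $\pa_\lam \hat D=2(i(v\cdot k)+\lam)$, we get
\[
{\cal R}'(\lam)=-2(\lam+i v\cdot k){\cal R}^2(\lam),
\]
and more generally ${\cal R}^{(k)}$ is a finite sum of terms $(\lam+iv\cdot k)^{j}{\cal R}^{k+1+j'}(\lam)$ with homogeneity counting: each term has total order $-(k+2)$ in $(\lam,k)$ jointly.

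The key algebraic step is to trade powers of $\mu$ for the operator $\hat D$ itself. The identity
\[
\hat D(\lam,k)-\lam^2=k^2+2i\lam (v\cdot k)-(v\cdot k)^2=:\hat D_1(\lam,k)
\]
gives ${\cal R}(\lam)=\lam^{-2}\big(1-\hat D_1{\cal R}(\lam)\big)$, where $\hat D_1$ is a second-order differential operator with coefficients linear in $\lam$. Substituting this into each factor ${\cal R}$ appearing in ${\cal R}^{(k)}$ and differentiating, we obtain a representation
\[
{\cal R}^{(k)}(\lam)=\lam^{-2}\,\pa_\lam^k\big[-\hat D_1{\cal R}(\lam)\big]+\pa_\lam^k(\lam^{-2}),
\]
and the last term is a multiple of the identity of size $|\mu|^{-k-2}\le C|\mu|^{-3}$ for $k\ge1$; this is exactly why $k\ge1$ is needed (for $k=0$ the identity piece $\lam^{-2}$ would only give $\mu^{-2}$). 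The first term expands via Leibniz into $\lam^{-2}\hat D_1{\cal R}^{(k)}$ plus $\lam^{-2}(\pa_\lam\hat D_1){\cal R}^{(k-1)}$, with $\pa_\lam \hat D_1=2i(v\cdot k)$ first order and $\pa^2_\lam\hat D_1=0$.

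Now we estimate in $H^0_\al\to H^{-2}_{-\al}$: multiplication by the polynomial symbol $\hat D_1$ maps $H^{0}_{-\al}\to H^{-2}_{-\al}$ with norm $O(|\mu|)$ (the $\lam$-linear term is first order, and for that one we use $H^1_{-\al}$ mapping), and weighted spaces are preserved by differential operators with constant coefficients. Using \eqref{R-as} with $l=0$ gives $\|\hat D_1{\cal R}^{(k)}\|\le C(\|{\cal R}^{(k)}\|_{H^0_\al\to H^{2}_{-\al}}+|\mu|\|{\cal R}^{(k)}\|_{H^0_\al\to H^{1}_{-\al}})=O(|\mu|)$, while $\|(v\cdot\na){\cal R}^{(k-1)}\|_{\to H^{-2}_{-\al}}\le \|{\cal R}^{(k-1)}\|_{\to H^{-1}_{-\al}}=O(|\mu|^{-2})$ by \eqref{R-as} with $l=-1$ (valid since $\al>1/2+k>1/2+(k-1)$). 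Multiplying by $|\lam|^{-2}$ gives $O(|\mu|^{-1})$, which is too weak, so the main obstacle is the term $\lam^{-2}\hat D_1{\cal R}^{(k)}$. To handle it I would iterate the identity once more: apply the substitution inside ${\cal R}^{(k)}$ again, which produces $\lam^{-4}\hat D_1^2{\cal R}^{(k)}$ plus terms already of order $|\mu|^{-3}$, and then rather than iterating, use \eqref{R-as} with $l=2$ together with the gain from the derivative: ${\cal R}^{(k)}$ maps into $H^2_{-\al}$ with norm $O(|\mu|)$ but its $L^2$-part is $O(|\mu|^{-1})$, and the point is to show that in $H^{-2}$ the dominant $O(|\mu|)$ piece of $\hat D_1 {\cal R}^{(k)}$ cancels, i.e.\ to work directly with the explicit kernel \eqref{g-rep1} — differentiate $\gamma e^{-i|v|\mu\gamma\ti z_1}R_\mp(\gamma^2\mu^2,\ti z)$ in $\mu$ and use the known high-energy behavior of $\pa_\zeta^j R_\mp(\zeta)$ in $H^0_\al\to H^{-2}_{-\al}$ (each $\zeta$-derivative costs $|\zeta|^{-1/2}$, and $H^{-2}$ gains $|\zeta|^{-1}$ over $H^0$) to close the bound $O(|\mu|^{-3})$, with the condition $\al>1/2+k$ absorbing the factors $\ti z_1^j$ produced by differentiating the exponential.
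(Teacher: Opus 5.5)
There is a genuine gap. You abandon your first route because of a miscounted Sobolev index, and the argument you finish with assumes what has to be proved.

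Your bound $C(\Vert{\cal R}^{(k)}\Vert_{H^0_\al\to H^{2}_{-\al}}+|\mu|\Vert{\cal R}^{(k)}\Vert_{H^0_\al\to H^{1}_{-\al}})$ controls $\hat D_1{\cal R}^{(k)}$ into $H^0_{-\al}$, not into $H^{-2}_{-\al}$. Even your cruder remark that $\hat D_1$ maps $H^0_{-\al}\to H^{-2}_{-\al}$ with norm ${\cal O}(|\mu|)$ would only give ${\cal O}(|\mu|^{-2})$ after the factor $\lam^{-2}$. The right move is to split $\hat D_1(i\mu,k)=k^2-(v\cdot k)^2-2\mu(v\cdot k)$ into its $\mu$-independent second-order part and its $\mu$-linear first-order part. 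Then use \eqref{R-as} with $l=0$ and $l=-1$ respectively:
\[
\Vert\hat D_1{\cal R}^{(k)}(i\mu+0)\Vert_{H^0_\al\to H^{-2}_{-\al}}\le C\,\Vert{\cal R}^{(k)}(i\mu+0)\Vert_{H^0_\al\to H^{0}_{-\al}}+C|\mu|\,\Vert{\cal R}^{(k)}(i\mu+0)\Vert_{H^0_\al\to H^{-1}_{-\al}}={\cal O}(|\mu|^{-1}).
\]
So $\lam^{-2}\hat D_1{\cal R}^{(k)}={\cal O}(|\mu|^{-3})$, and the ``main obstacle'' disappears. Your representation formula also needs the full Leibniz rule, ${\cal R}^{(k)}=\sum_{j=0}^{k}\binom kj\pa_\lam^j(\lam^{-2})\,\pa_\lam^{k-j}\big(1-\hat D_1{\cal R}\big)$. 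The terms with $1\le j\le k$ that you dropped are ${\cal O}(|\mu|^{-2-j})$ times $\pa_\lam^{k-j}(\hat D_1{\cal R})$. The latter is ${\cal O}(|\mu|^{-1})$ in $H^0_\al\to H^{-2}_{-\al}$ by the same splitting, since $\pa_\lam^2\hat D_1=0$. These terms are therefore harmless. The piece $k\lam^{-2}(\pa_\lam\hat D_1){\cal R}^{(k-1)}$ is ${\cal O}(|\mu|^{-4})$, as you say.

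The kernel argument you close with is not a proof. The rule that each $\zeta$-derivative of $R_\mp(\zeta)$ gains an extra $|\zeta|^{-1}$ in $H^0_\al\to H^{-2}_{-\al}$ over $H^0_\al\to H^0_{-\al}$ is exactly \eqref{R-as2}, the Laplacian version of the lemma. The paper proves it from \eqref{R-as3} via $R(\zeta)=\big(-1+(1-\Delta)R(\zeta)\big)/(1+\zeta)$; you merely invoke it. The claimed cancellation of the ``dominant ${\cal O}(|\mu|)$ piece'' is never substantiated. The rule also fails with no $\zeta$-derivative: $R_\mp$ itself is only ${\cal O}(|\zeta|^{-1})$ into $H^{-2}_{-\al}$, because of the identity term. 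Hence absorbing the factors $\ti z_1^j$ from $e^{-i|v|\mu\gamma\ti z_1}$ into the weights, as you propose, bounds the term $\ti z_1^kR_\mp$ only by ${\cal O}(|\mu|^{-2})$; you would need the vanishing of $\ti z_1^k$ on the diagonal.

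Once the estimate above is corrected, your identity ${\cal R}(\lam)=\lam^{-2}\big(1-\hat D_1{\cal R}(\lam)\big)$ is the same device as the paper's $(1-\Delta)R=1+(1+\zeta)R$, applied directly to ${\cal R}$ and using only \eqref{R-as}. That spares you the transfer from $R_\pm$ back to ${\cal R}$. The paper leaves this transfer implicit, and in $H^{-2}$ norms it is not a mere change of variables, since the oscillating factor shifts frequencies by ${\cal O}(\mu)$.
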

\begin{proof}
It suffices to prove  that for $R_{\pm}=R(\zeta\pm i0)$ the following asymptotics hold:
\be\la{R-as2}
\Vert R^{(k)}_{\pm}(\zeta)\Vert_{H^0_\al \to H^{-2}_{-\al}}
={\cal O}(\zeta^{-\frac{3+k}2}),\quad \zeta\to+\infty, \quad k=1,2,\dots, \quad \al>1/2+k.
\ee
These  asymptotics  follow from the well-known  asymptotics 
\be\la{R-as3}
\Vert R^{(k)}_{\pm}(\zeta)\Vert_{H^0_\al \to H^{0}_{-\al}}
={\cal O}(\zeta^{-\frac{1+k}2}),\quad \zeta\to +\infty, \quad\zeta>0,\quad k=0, 1,2,\dots, \quad \al>1/2+k.
\ee
Indeed, the relation $(1-\Delta)R(\zeta)=1+(1+\zeta)R(\zeta)$ implies
$$
R(\zeta)=\frac{-1+(1-\Delta)R(\zeta)}{1+\zeta},\quad  
R'(\zeta)=\frac{1-(1-\Delta)R(\zeta)}{(1+\zeta)^2}+\frac{(1-\Delta)R'(\zeta)}{1+\zeta}.
$$
Hence, \eqref{R-as3} implies
$$
\Vert R'_{\pm}(\zeta)\Vert_{H^0_\al \to H^{-2}_{-\al}}\le \frac{1+\Vert R_{\pm}(\zeta)\Vert_{H^0_\al \to H^{0}_{-\al}}}{(1+\zeta)^2}
+\frac{\Vert R'_{\pm}(\zeta)\Vert_{H^0_\al \to H^{0}_{-\al}}}{(1+\zeta)}={\cal O}(\zeta^{-2}),\quad \zeta\to +\infty, \quad\al>3/2. 
$$
Similarly, for $\al>5/2$,
$$
\Vert R''_{\pm}(\zeta)\Vert_{H^0_\al \to H^{-2}_{-\al}}\le\frac{1+\Vert R_{\pm}(\zeta)\Vert_{H^0_\al \to H^{0}_{-\al}}}{(1+\zeta)^3}
+2\frac{\Vert R'_{\pm}(\zeta)\Vert_{H^0_\al \to H^{0}_{-\al}}}{(1+\zeta)^2}
+\frac{\Vert R''_{\pm}(\zeta)\Vert_{H^0_\al \to H^{0}_{-\al}}}{1+\zeta}={\cal O}(\zeta^{-5/2}) 
$$
as $\zeta\to +\infty$. Therefore, \eqref{R-as2} with $k=1,2$ holds.  For $k=3,4,\dots$ the proof is similar.
\end{proof}
 The convergence \eqref{LAP} and formulas \eqref{M-def}--\eqref{aabb} imply that the matrix ${\cal M}(i\mu)$
exists and its entries are analytic functions of $\mu\in\R$.
Let us show that ${\cal M}(i\mu)$ is invertible for sufficiently large $|\mu|$.
 \begin{lemma}\la{Lem-q-est}
Let condition \eqref{rosym} holds. Then ${\cal M}(i\mu)$ is invertible for sufficiently large $|\mu|$, and 
\be\la{M-as-}
{\cal M}^{-1}(i\mu)=\begin{bmatrix} 
-\frac{1}{m\mu^2}+{\cal O}(|\mu|^{-4})&{\cal O}(|\mu|^{-3})\\
{\cal O}(|\mu|^{-3})&-\frac{1}{m\mu^2}+ {\cal O}(|\mu|^{-4})
\end{bmatrix},\quad \mu\to\infty.
\ee
The asymptotics can be differentiated twice. 
 \end{lemma}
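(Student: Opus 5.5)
The plan is to show that, for $\lam=i\mu$ with $|\mu|$ large, ${\cal M}(i\mu)$ is a small perturbation of the leading term $m\lam^2 I_2=-m\mu^2 I_2$, and then invert by a Neumann series. Write ${\cal M}(i\mu)=-m\mu^2\big(I_2-E(\mu)\big)$, where
\[
E(\mu)=\frac{1}{m\mu^2}\big({\cal M}(i\mu)+m\mu^2 I_2\big).
\]
It then suffices to prove two bounds: the diagonal entries of $E$ are ${\cal O}(|\mu|^{-2})$, and the off-diagonal entries of $E$ are ${\cal O}(|\mu|^{-1})$, with the same bounds after up to two $\mu$-derivatives. With these, $(I_2-E)^{-1}=I_2+E+{\cal O}(|E|^2)$. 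This gives the diagonal entries $-\frac1{m\mu^2}+{\cal O}(|\mu|^{-4})$ and the off-diagonal entries ${\cal O}(|\mu|^{-3})$ of \eqref{M-as-}, once we also check that the ${\cal O}(|E|^2)$ contributions to the diagonal are ${\cal O}(|\mu|^{-2})$, i.e. that the products of off-diagonal entries are small enough.

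The entries \eqref{aabb} are built from the functions $U,V,P,Q,F,f,h,\varkappa$ in \eqref{PQF}--\eqref{fh-def}. Each is a pairing $\langle {\cal R}(\lam)g_1,g_2\rangle$ with $g_j$ among $\rho,\na\rho,\varrho,\na\varrho$ or their ${\cal P}$-projections (in Fourier space, $\hat\rho$, $\na\hat\rho$, $k_j\hat\rho$, etc.). These lie in $H^s_\al$ for all $s,\al$ by \eqref{rosym}, and the factor $1/k^2$ in $U,V,Q$ is harmless by \eqref{zero2}. From \eqref{R-as} with $l=-1$ or $l=-2$ and Lemma \ref{as-new}, every such coefficient is ${\cal O}(|\mu|^{-2})$, and its derivatives in $\mu$ are ${\cal O}(|\mu|^{-3})$. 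Using smoothness of $g_j$, I would pair in $H^{2}_{\al}\times H^{-2}_{-\al}$; this is exactly why Lemma \ref{as-new} was proved. The constant parts $\breve P=P-P(\lam)$, $\breve Q$, $\breve F$ are bounded. Substituting into \eqref{aabb}:
\begin{itemize}
\item $a_{jj}=-m\mu^2+{\cal O}(1)$, since $\lam^2 U={\cal O}(1)$, $\lam\breve P={\cal O}(|\mu|)$ and $i\lam(JV-VJ)={\cal O}(|\mu|^{-1})$;
\item $b_{jl}={\cal O}(1)$, because $\kappa(i\mu)\to I>0$ while $\lam f$ and $h$ are ${\cal O}(|\mu|^{-1})$.
\end{itemize}

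The delicate term is $\om\lam(\breve PJ+J\breve P)$, which is ${\cal O}(|\mu|)$. On the diagonal it vanishes, because $P$ and $P(\lam)$ are symmetric and hence $(\breve PJ+J\breve P)_{jj}=0$: for symmetric $S$, $(SJ)_{11}=-S_{12}$ and $(JS)_{11}=S_{21}$. So the diagonal of $E$ is ${\cal O}(|\mu|^{-2})$, while the off-diagonal entries of ${\cal M}$ are ${\cal O}(|\mu|)$, giving the ${\cal O}(|\mu|^{-1})$ off-diagonal bound on $E$. Actually by radial symmetry $P=p_0I_2$ is likely scalar up to $v$-dependence, which would make this term even smaller, but the symmetry argument suffices. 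Then $\det{\cal M}=m^2\mu^4+{\cal O}(\mu^2)$, which yields invertibility and \eqref{M-as-} directly from \eqref{detM}: for instance, $a_{12}+b_{12}={\cal O}(|\mu|)$ divided by $\mu^4$ gives ${\cal O}(|\mu|^{-3})$.

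For differentiability, I would differentiate the cofactor formula \eqref{detM}. Each $\mu$-derivative of a coefficient gains at least one power of decay, by \eqref{R-as} for $k=1,2$ and Lemma \ref{as-new}. The required weights $\al>5/2$ are available because the $g_j$ decay rapidly. The derivatives of the polynomial parts $m\lam^2$ and $\om\lam\breve P$ are explicit. I expect the main obstacle to be the careful bookkeeping that the off-diagonal ${\cal O}(|\mu|)$ term does not spoil the diagonal remainder ${\cal O}(|\mu|^{-4})$, including after differentiation. This rests on the diagonal cancellation above together with $\det{\cal M}\sim m^2\mu^4$.
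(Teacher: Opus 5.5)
Your proposal is correct and follows essentially the paper's proof. You bound every coefficient function by \eqref{R-as} and Lemma \ref{as-new}, which gives $a_{jj}+b_{jj}=-m\mu^2+{\cal O}(1)$ and off-diagonal entries ${\cal O}(|\mu|)$ coming from $\om\lam(\breve PJ+J\breve P)$, and then you invert by cofactors. The one difference is minor: the paper rotates to $v=(|v|,0)$, where $\breve P$ is diagonal, so $\breve PJ+J\breve P$ is visibly off-diagonal and the antisymmetric form \eqref{detM} applies literally, whereas you obtain the diagonal cancellation from the symmetry of $\breve P$. Without that rotation you should invoke the general $2\times 2$ cofactor formula, since $a_{21}+b_{21}=-(a_{12}+b_{12})$ need not hold; the orders of magnitude are the same.
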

 \begin{proof}
From now on we assume that $v=(|v|,0)$.
 In this case formulas \eqref{PQF}--\eqref{fh-def} imply
\be\la{VP}
V_{11}(\lam)=V_{22}(\lam)=f_1(\lam)=h_2(\lam)=0,\qquad  P_{jl}(\lam)=Q_{jl}(\lam)=F_{jl}(\lam)=U_{jl}(\lam)=0,\quad j\ne l.
\ee
\beqn\nonumber
\!\!\!\!\!\!\!\!\!\!\!\!\!\!\!\!\!\!&&V_{12}(\lam)=\int\frac{|v|k_2^2k_1|\hat\rho|^2dk}{k^2\hat D(\lam)},~~
P_{jj}(\lam)= \!\int\! \frac{k_j\na_j\hat\rho(k)}{\hat D(\lam)}\hat\rho(k) dk,~~F_{jj}(\lam)=\!\int\! \frac{k_j^2|\na\hat\rho|^2dk}{\hat D(\lam)},
\\
\la{PQFSR}
\!\!\!\!\!\!\!\!\!\!\!\!\!\!\!\!\!\!&&Q_{jj}(\lam)=v^2\!\int \!\frac{k_j^2k_2^2|\hat\rho(k)|^2dk}{k^2\hat D(\lam)},\quad
U_{jj}(\lam)=\int\frac{(k^2\!-\!k_j^2) |\hat\rho|^2dk}{k^2\hat D(\lam)},\quad j=1,2.
\eeqn
Hence,
\be\la{PPJ}
 \breve PJ+J\breve P
=\begin{pmatrix} 0 & \breve P_{11} +\breve P_{22} \\ -\breve P_{11} -\breve P_{22}& 0 \end{pmatrix},\quad
 JV-VJ=\begin{pmatrix} 2V_{12} & 0 \\ 0 & -2V_{12} \end{pmatrix}.
\ee
From \eqref{aabb} and \eqref{PPJ}  it follows that
\beqn\nonumber
a_{jj}&=&\breve Q_{jj}+{\omega}^2\breve F_{jj}+\lam^2(m+U_{jj})+2i\lam(-1)^{j+1}V_{12},\qquad a_{12}=-a_{21}={\omega}\lam(  \breve P_{11} +\breve P_{22}),\\
\la{bb}
b_{11}&=&\frac{ {\omega}^2}{\kappa} h_1^2,\quad  b_{22}=\frac{1}{\kappa}\big(i\lam f_2+\breve P_{22}|v|\big)^2,\qquad
b_{12}=-b_{21}=-\frac{i\omega}{\kappa} h_1\big(i\lam f_2+\breve P_{22}|v|\big).
\eeqn
 Applying \eqref{R-as} and \eqref{R-as1}, we obtain  for  $\beta>5/2$ and $k=1,2$, 
 \beqn\nonumber
|\varkappa(i\mu+0)|&\!\!=\!\!&|\langle {\cal R}(i\mu+0)\varrho,\varrho\rangle|
\le C\Vert {\cal R}(i\mu+0)\varrho\Vert_{H^{-1}_{-\beta}}\Vert\varrho\Vert_{H^1_{\beta}}
\le C|\mu|^{-2}\Vert\varrho\Vert_{L^2_{\beta}}\Vert\varrho\Vert_{H^1_{\beta}}\le C(\rho) |\mu|^{-2},\\
\nonumber
|\varkappa^{(k)}(i\mu+0)|&\!\!=\!\!&|\langle {\cal R}^{(k)}(i\mu+0)\varrho,\varrho\rangle|
\le C\Vert {\cal R}^{(k)}(i\mu+0)\varrho\Vert_{H^{-2}_{-\beta}}\Vert\varrho\Vert_{H^2_{\beta}}\\
\la{q-as}
&\!\!\le \!\!&C|\mu|^{-3}\Vert\varrho\Vert_{L^2_{\beta}}\Vert\varrho\Vert_{H^2_{\beta}}\le C(\rho) |\mu|^{-3},\quad  \mu\to\infty.
\eeqn
Similar bounds hold for  $Q_{jj}$, $F_{jj}$,   $P_{jj}$, $U_{jj}$, $V_{12}$, $f_2$,  and $h_1$.
Therefore, \eqref{bb}   implies
$$
a_{jj}(i\mu+0)+b_{jj}(i\mu+0)= -m\mu^{2}+d_{jj}(\mu), \quad 
a_{12}(i\mu+0)+b_{12}(i\mu+0)=i\om\mu(P_{11} +P_{22})+d_{12}(\mu),
$$
where 
$d_{jj}(\mu)= {\cal O}(1)$, $d_{12}(\mu)={\cal O}(|\mu|^{-1})$, $d_{jj}^{(k)}(\mu)= {\cal O}(|\mu|^{-1})$,  and    
$d_{12}^{(k)}(\mu)={\cal O}(|\mu|^{-2})$ as  $\mu\to\infty$  for $k=1,2$. Hence,  \eqref{M-as-} follows. 
\end{proof}
Now we   obtain asymptotics of  ${\cal M}(i\mu)$  as $\mu\to 0$. We will use the following lemma
 \begin{lemma}\la{Dr-as}
 Let $j\in \N$, and let  $p\in C_0^\infty(\R^2)$ satisfy   $\hat p(k)=O(|k|^{j+2})$ as $k\to 0$.  
Then for  any   $\digamma\in L^2_{\al}$ with $\al>j+1/2$ the asymptotics hold 
\beqn\nonumber
\langle \hat\digamma, \frac{\hat p}{\hat D(i\mu+0)}\rangle&=&\langle \hat\digamma ,\frac{\hat p}{\hat D_0}\rangle
+\mu \langle \hat\digamma ,\frac{2|v|k_1\hat p}{\hat D_0^2}\rangle+\mu^2\langle \hat\digamma,\frac{(k^2\!+3v^2k_1^2)\hat p}{\hat D_0^3}\rangle+\dots\\
\la{int-rho-as}
&\dots&+\mu^j\langle \hat\digamma,\frac{P_{j}(k)\hat p}{\hat D_0^{j+1}}\rangle+{\cal O}(|\mu|^{j+1/2}),~~\mu\to 0.
\eeqn
Here $P_j(k)$  is a polynomial of degree $j$. These asymptotics can be differentiated  twice.
\end{lemma}
Formally, the expansion  can be obtained by differentiation. The rigorous  proof is given in  Appendix A.
\begin{remark}
Note that asymptotics \eqref{int-rho-as} hold for any $\digamma\in C_0^{\infty}$ 
and the number of terms in the asymptotics depends only on the regularity of $p$ at $k=0$. 
For example, for $\rho$ satisfying \eqref{zero2} we obtain
\be\la{int-rho-as2}
\int \frac{|\hat\rho|^2 dk}{\hat D(i\mu+0)}=\langle |k|^4 \hat \rho, \frac{\hat\rho/|k|^4}{\hat D(i\mu+0)}\rangle
=\int \frac{|\hat\rho|^2 dk}{\hat D_0}+\sum\limits_{\ell=1}^{6} \mu^{\ell}\int \frac{P_{\ell}(k)|\hat\rho|^2 dk}{\hat D_0^{\ell}}
+{\cal O}(|\mu|^{7}),~~\mu\to 0.
\ee
\end{remark}
\begin{lemma}
Let condition \eqref{rosym}  hold.  Then ${\cal M}(i\mu)$ is invertible for sufficiently small $\mu\ne 0$, and 
\be\la{M-as}
{\cal M}^{-1}(i\mu)
=\begin{pmatrix} \ell_{11}\mu^{-2}+m_{11}+{\cal O}(|\mu|^2) & \ell_{12}\mu^{-1}+m_{12}\mu+ {\cal O}(|\mu|^3)\\
-\ell_{12}\mu^{-1}-m_{12}\mu+ {\cal O}(|\mu|^3)& \ell_{22}\mu^{-2}+m_{22}+{\cal O}(|\mu|^2)
\end{pmatrix},~~\mu\to 0,
\ee
where $\ell_{jj}<0$, $j=1,2$. The asymptotics  can be differentiated twice. 
\end{lemma}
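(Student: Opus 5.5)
The approach is to expand every entry of ${\cal M}(i\mu)$ in powers of $\mu$ with Lemma \ref{Dr-as}, then invert the $2\times 2$ matrix explicitly using \eqref{detM}. We work with $v=(|v|,0)$, so that \eqref{VP} and \eqref{bb} hold. Each of the functions $Q_{jj}$, $F_{jj}$, $P_{jj}$, $U_{jj}$, $V_{12}$, $f_2$, $h_1$, $\varkappa$ has the form $\langle \hat\digamma,\hat p/\hat D(i\mu+0)\rangle$, with $\hat p$ vanishing to high order at $k=0$ by \eqref{zero2}. Lemma \ref{Dr-as} therefore gives Taylor expansions up to order $\mu^3$ with remainder ${\cal O}(|\mu|^{7/2})$ or better, and these can be differentiated twice.

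The leading coefficients are read off from the $k$-parity. Since $\hat D_0$ is even and $\hat\rho$ is radial, the terms odd in $\mu$ carry an extra factor $k_1$. Hence $\breve Q_{jj}$, $\breve F_{jj}$, $\breve P_{jj}$ are ${\cal O}(\mu^2)$, while $V_{12}(i\mu)$, $f_2(i\mu)$ and $h_1(i\mu)$ are ${\cal O}(\mu)$: their $\mu^0$ terms vanish by oddness, as in \eqref{VP} at $\lam=0$. Substituting $\lam=i\mu$ into \eqref{bb}:
\begin{itemize}
\item $a_{jj}=-\mu^2\big(m+U_{jj}(0)\big)+\mu^2\,c_{jj}+{\cal O}(\mu^3)$, where $c_{jj}$ gathers the $\mu^2$ coefficients of $\breve Q_{jj}+\omega^2\breve F_{jj}$ and the contribution $2i\lam V_{12}=-2\mu\cdot\mu V_{12}'(0)$;
\item $a_{12}=i\omega\mu(\breve P_{11}+\breve P_{22})={\cal O}(\mu^3)$;
\item $b_{11}=\omega^2h_1^2/\kappa={\cal O}(\mu^2)$ and $b_{22}={\cal O}(\mu^2)$;
\item $b_{12}={\cal O}(\mu^2)$.
\end{itemize}
It follows that ${\cal M}(i\mu)=\mu^2 M_2+\mu^3M_3+{\cal O}(\mu^4)$. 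The diagonal entries of $M_2$ are real, and the off-diagonal entries are antisymmetric. For this we need the exact parities in $\mu$ of each entry, which come from the symmetry $k_1\mapsto -k_1$ combined with the reality of the coefficients of Lemma \ref{Dr-as}. These parities are what produce the pattern in \eqref{M-as}: even diagonal entries and odd antisymmetric off-diagonal entries.

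The key step, and the main obstacle, is showing that $M_2$ is invertible with the required signs. Nonvanishing of $\det M_2$ is presumably exactly the spectral condition \eqref{M-condition}, so this step may have to be imposed as a hypothesis rather than proved. The signs $\ell_{jj}<0$ are more subtle. I would try to relate $M_2$ to the Hamiltonian ${\cal H}_{v,\omega}\ge0$ of Lemma \ref{ceig2}, or to the positive quantities $\Om_{13}$ and $\Om_{24}$ of Lemma \ref{Ome}. The idea is that the $\mu^2$ coefficient should be $-$(an effective mass matrix), for instance $-(\Om_{13}\pm\ldots)$. This is consistent with the large-$\mu$ behaviour $-m\mu^2$ in Lemma \ref{Lem-q-est}, and it reflects the fact that the zero eigenvalue has Jordan blocks of length two built from $\tau_j,\tau_{j+2}$. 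Concretely, one would identify $-m-U_{jj}(0)+c_{jj}-b''_{jj}(0)/2$ with $-\Om_{13}$ or $-\Om_{24}$ (or with their counterparts after eliminating the $\omega$-coupling via $\Om_{16},\Om_{35},\Om_{56}$), and then use the inequalities proved in Lemma \ref{Ome}.

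Given invertibility, ${\cal M}^{-1}=\mu^{-2}\big(M_2+\mu M_3+\mu^2M_4+{\cal O}(\mu^3)\big)^{-1}$ is expanded as a Neumann series. Combined with the parity structure, this yields the form \eqref{M-as}. The off-diagonal leading terms $\ell_{12}\mu^{-1}$ arise from $-M_2^{-1}M_3M_2^{-1}$. Twice-differentiability follows because every expansion from Lemma \ref{Dr-as} may be differentiated twice and matrix inversion is smooth in the entries.
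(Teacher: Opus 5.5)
\textbf{There is a genuine gap at the decisive step.} Your overall plan is the paper's: expand each entry of ${\cal M}(i\mu)$ with Lemma \ref{Dr-as}, use the $k_1\mapsto -k_1$ parity, isolate the $\mu^2$ coefficient $M_2$, and invert via \eqref{detM}. But you never prove that $M_2$ is invertible with negative diagonal entries, and your proposed fallback is not legitimate. The lemma assumes only \eqref{rosym}, so invertibility for small $\mu\neq0$ and the signs $\ell_{jj}<0$ must be derived; they cannot be imposed as a hypothesis. Nor is $\det M_2\ne0$ the spectral condition \eqref{M-condition}. That condition concerns $\det{\cal M}(i\mu)$ at fixed $\mu\neq0$, and it is introduced afterwards precisely to cover the intermediate range of $\mu$ left open by this lemma and Lemma \ref{Lem-q-est}. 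The limit $\mu\to0$ is what this lemma must settle unconditionally. As written, your argument gives neither invertibility nor $\ell_{jj}<0$.

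\textbf{A second gap: the form \eqref{M-as} needs $M_2$ diagonal, and your bounds do not give this.} You assign $b_{12},b_{22}$ the order ${\cal O}(\mu^2)$ and let $M_2$ have antisymmetric off-diagonal entries. If those entries were nonzero, ${\cal M}^{-1}=\mu^{-2}M_2^{-1}+{\cal O}(\mu^{-1})$ would have off-diagonal entries of exact order $\mu^{-2}$, not $\ell_{12}\mu^{-1}$.

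The fix uses facts you already listed. Since $h_1,f_2={\cal O}(\mu)$ and $\breve P_{22}={\cal O}(\mu^2)$, you get $i\lam f_2+|v|\breve P_{22}={\cal O}(\mu^2)$. Hence $b_{22}={\cal O}(\mu^4)$ and $b_{12}={\cal O}(\mu^3)$, and with $a_{12}={\cal O}(\mu^3)$ this gives $M_2={\rm diag}(-(\alpha_1-\gamma_1),-\alpha_2)$. Here $a_{jj}=-\alpha_j\mu^2+{\cal O}(\mu^4)$, $b_{11}=\gamma_1\mu^2+{\cal O}(\mu^4)$, and $\gamma_1=4\omega^2v^2\kappa(0)^{-1}\big(\int k_1^2|\na\hat\rho|^2\hat D_0^{-2}dk\big)^2$.

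The paper then writes $\alpha_1,\alpha_2$ explicitly as $m$ plus integrals. It proves $\alpha_2>0$ via the pointwise bound $4v^2k_1^2k_2^2/\hat D_0\le k_1^2+v^2k_2^4(k^2+3v^2k_1^2)/\hat D_0^2$. It proves $\gamma_1<\alpha_1$ by Cauchy--Schwarz, $\big(\int k_1^2|\na\hat\rho|^2\hat D_0^{-2}dk\big)^2\le\varkappa(0)\int k_1^4|\na\hat\rho|^2\hat D_0^{-3}dk$. This gives $\det{\cal M}(i\mu)=(\alpha_1-\gamma_1)\alpha_2\mu^4+{\cal O}(\mu^6)$, with $\ell_{11}=-(\alpha_1-\gamma_1)^{-1}$ and $\ell_{22}=-\alpha_2^{-1}$.

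\textbf{Your symplectic heuristic is actually on target.} Comparing the explicit formulas, one finds $\alpha_1=\Om_{13}$, $\alpha_2=\Om_{24}$ and $\gamma_1=-\Om_{16}\Om_{35}/\Om_{56}$. So $\alpha_1-\gamma_1=(\Om_{13}\Om_{56}+\Om_{16}\Om_{35})/\Om_{56}>0$ is exactly the inequality proved in Lemma \ref{Ome}. But this identification has to be computed, not guessed.

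\textbf{A smaller point on expansion order.} To reach the remainders ${\cal O}(\mu^2)$ and ${\cal O}(\mu^3)$ in \eqref{M-as}, you need the diagonal entries of ${\cal M}$ up to ${\cal O}(\mu^6)$ and the off-diagonal entries up to ${\cal O}(\mu^7)$. These are available because $\hat\rho={\cal O}(|k|^4)$, cf. \eqref{int-rho-as2}. Expansions only through order $\mu^3$ are not enough.
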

\begin{proof}
By \eqref{PQFSR},  $V_{12}^{(k)}(0)=V_{21}^{(k)}(0)=0$ for  $k=0,2,4$, and 
$P_{jj}^{(k)}(0)=Q_{jj}^{(k)}(0)=F_{jj}^{(k)}(0)=U_{jj}^{(k)}(0)=0$ for  $k=1,3,5$ and  $j=1,2$.
Hence, condition \eqref{zero2}, formulas  \eqref{PQFSR}, \eqref{bb} and asymptotics \eqref{int-rho-as} imply
\be\la{a-j}
a_{jj}(i\mu+0)=-\alpha_{j}\mu^2+\beta_{j}\mu^4+{\cal O}(|\mu|^6),\quad 
a_{12}(i\mu+0)=\alpha_{12}\mu^3+\beta_{12}\mu^5+{\cal O}(|\mu|^7),\quad \mu\to 0,
\ee
where 
\beqn\nonumber
\alpha_{1}&=&\int\Big[\frac{v^2k_2^2k_1^2(k^2+3v^2k_1^2)}{\hat D_0^2}+k_2^2+\frac{4v^2k_1^2k_2^2}{\hat D_0}\Big]\frac{|\hat\rho|^2 dk}{k^2\hat D_0}
+{\omega}^2\int\frac{k_1^2|\na\hat\rho|^2(k^2+3v^2k_1^2)}{\hat D_0^3}dk+m>0,\\
\nonumber
\alpha_{2}&=&\int\Big[\frac{v^2k_2^4(k^2+3v^2k_1^2)}{\hat D_0^2}+k_1^2-\frac{4v^2k_1^2k_2^2}{\hat D_0}\Big]\frac{|\hat\rho|^2 dk}{k^2\hat D_0}
+{\omega}^2\int\frac{k_2^2|\na\hat\rho|^2(k^2+3v^2k_1^2)}{\hat D_0^3}dk+m>0,
\eeqn
since 
\[
4\frac{v^2k_1^2k_2^2}{\hat D_0}=2k_1\frac{2k_2^2k_1v^2}{D_0}\le  k_1^2+\frac{4k_2^4k_1^2v^4}{\hat D_0^2}
\le  k_1^2+\frac{k_2^4k_1^2v^2(1+3v^2)}{\hat D_0^2}\le  k_1^2+\frac{v^2k_2^4(k^2+3v^2k_1^2)}{\hat D_0^2}.
\]
Further,  formulas \eqref{fh-def}, \eqref{bb}, \eqref{PQFSR} and asymptotics \eqref{int-rho-as}  imply
\be\la{bb-ex}
b_{11}=\gamma_{1}\mu^2+\nu_{1}\mu^4+{\cal O}(|\mu|^6), 
\quad  b_{22}=\nu_{2}\mu^4+{\cal O}(|\mu|^6),\quad b_{12}=\gamma_{12}\mu^3+\nu_{12}\mu^5+{\cal O}(|\mu|^7),\quad \mu\to 0,
\ee
where
\beqn\nonumber
\gamma_{1}&=&\frac{{\omega}^2}{\kappa(0)}(h_1'(0))^2=
4\frac{{\omega}^2v^2}{I+\varkappa(0)}\Big(\int\frac{k_1^2|\na\hat\rho|^2}{\hat D_0^2}dk\Big)^2\\
\nonumber
&\le& \frac{4{\omega}^2 \varkappa(0)}{I+\varkappa(0)}\int\frac{k_1^4|\na\hat\rho|^2}{\hat D_0^3}dk 
\le {\omega}^2\int\frac{k_1^2|\na\hat\rho|^2(k^2+3v^2k_1^2)}{\hat D_0^3}dk<\alpha_{1}
\eeqn
by the Cauchy--Schwarz inequality and definition \eqref{fh-def} of $h$ and $\varkappa$.
Now \eqref{detM}  together with \eqref{a-j}--\eqref{bb-ex} imply that
$$
\det{\cal M}(i\mu)=(\al_{1}-\gamma_{1})\al_{2}\mu^4+((\beta_2+\nu_2)(\gamma_1-\al_1)-(\beta_1+\nu_1)\al_2+\al_{12}^2)\mu^6+{\cal O}(|\mu|^8),\quad \mu\to 0,
$$
where $(\alpha_{1}-\gamma_{1})\al_{2}>0$. Hence, ${\cal M}(i\mu)$ is invertible for sufficiently small $\mu\ne 0$.
Finally, \eqref{M-as} with  $\ell_{11}=-(\alpha_{1}-\gamma_{1})^{-1}$  and  $\ell_{22}=-\alpha_{2}^{-1}$   follows from  \eqref{detM}. 
\end{proof}
In order to have invertibility of ${\cal M}(i\mu)$ for all $|\mu|> 0$, we assume  the following  spectral condition:  
\be\la{M-condition} 
{\rm det}\, {\cal M}(i\mu)=[a_{11}(i\mu)+b_{11}(i\mu)][a_{22}(i\mu)+b_{22}(i\mu)]+[a_{12}(i\mu)+b_{12}(i\mu)]^2\ne 0~~{\rm for}~~\mu\ne 0.
\ee
In Appendix B we show that this condition holds i) for $v\ne 0$,  small $\omega$, and  large $I$, 
ii) for $v=0$ and small $\omega$ under the  Wiener  condition. 
\subsection{Asymptotics of ${\cal K}_0$ on the  imaginary axis}\label{pr-as}
Here we obtain asymptotics of the RHS ${\cal K}_0(i\mu+0)$ of the equation \eqref{M-def} as $\mu\to 0$ and $\mu\to\infty$.
\begin{lemma}\la{F-as}
Let $\rho$ satisfy  \eqref{rosym}, and let $(\Lam_0,\Psi_0)\in {\cal F}_{\beta}$ with $\beta>5/2$. Then
\be\la{F12-as}
{\cal K}_0(i\mu+0)=k_{2}\mu^2+{\cal O}(|\mu|^{5/2}),\quad \mu\to 0.
\ee
The asymptotics can be differentiated twice.
\end{lemma}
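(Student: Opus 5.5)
The plan is to expand each term of ${\cal K}_0(\lam)$, defined in \eqref{F1+} via \eqref{F1}, at $\lam=i\mu+0$ using Lemma \ref{Dr-as}. The key structural fact is that every pairing in ${\cal K}_0$ has the form $\langle \hat\digamma, \hat p/\hat D(i\mu+0)\rangle$. Here $\hat\digamma$ is built from the data $(\Lam_0,\Psi_0,r_0)$, and $\hat p$ is one of $\hat\rho$, $k\hat\rho$, $J\na\hat\rho$ (from $J\hat\varrho$), or $k\cdot J(\cdot)\na\hat\rho$, all of which vanish to order at least $|k|^3$ at $k=0$ by \eqref{zero2}.

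First, I will write $\Phi_0(i\mu,k)=(i(v\cdot k)+i\mu)\hat\Lam_0+\hat\Psi_0-\widehat{{\cal P}r_0\rho}$. The term $\hat\Psi_0$ lies in $L^2_\beta$, while $\hat\Lam_0$ only has $\na\Lam_0\in L^2_\beta$, so I will always keep $\hat\Lam_0$ paired with a factor $k$. The contribution $i\mu\hat\Lam_0$ I will treat as $\mu\,\langle k\hat\Lam_0,\hat p/(|k|\,\hat D)\rangle$, using that $\hat p/|k|$ still vanishes to high order. I will split ${\cal K}_0$ into:
\begin{itemize}
\item[(a)] the explicit polynomial part $\pi_0+(\om PJ+m\lam)r_0$;
\item[(b)] the three pairings in \eqref{F1} with $\Phi_0$;
\item[(c)] the correction term in \eqref{F1+}, a product of $\langle\Phi_0/\hat D,J\hat\varrho\rangle/\kappa$ with $i\lam f+i\om h-\breve PJv$.
\end{itemize}
For (b) and (c) I will apply \eqref{int-rho-as} with $j=2$. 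This needs $\digamma\in L^2_\al$, $\al>5/2$, which matches $\beta>5/2$, and $\hat p={\cal O}(|k|^4)$, which is available from \eqref{zero2}, possibly after absorbing one power of $k$ into $\digamma$. The result is an expansion $c_0+c_1\mu+c_2\mu^2+{\cal O}(|\mu|^{5/2})$, differentiable twice. For the factor $i\lam f+i\om h-\breve PJv$ in (c), the same lemma with $\digamma=\rho$ (smooth, compactly supported) gives a vanishing constant term, since $\breve P(0)=0$.

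The main step is to show that the coefficients of $\mu^0$ and $\mu^1$ vanish, so that the first surviving term is $k_2\mu^2$. Here $k_2$ denotes a constant vector, not to be confused with the Fourier variable. This cancellation must come from the symplectic orthogonality $X_0\in{\cal Z}_{v,\om}$, that is, $\Om(X_0,\tau_j)=0$ for $j=1,\dots,6$, together with $\chi_0=0$.

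I would first compute the constant term. Setting $\lam=0$ turns $\hat D$ into $\hat D_0$, and the pairings become $\langle\cdot,\hat\rho/\hat D_0\rangle$ and similar, which by \eqref{solit3} are pairings of the data with the soliton quantities $A_{v,\om}$, $\pa_jA_{v,\om}$, $\pa_\om A_{v,\om}$. I expect $c_0$ to reduce to a linear combination of $\Om(X_0,\tau_j)$, $j=1,2$, possibly also $\tau_6$. The $\mu^1$ coefficient involves $2|v|k_1/\hat D_0^2$ kernels, i.e.\ $v$-derivatives of the soliton, and should reduce to $\Om(X_0,\tau_{j+2})$ and $\Om(X_0,\tau_6)$.

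The obstacle is checking these identities concretely. I would verify them using \eqref{paA2}--\eqref{Mvw} and \eqref{d4}, taking $v=(|v|,0)$, and matching term by term with $\Om(X_0,\tau_l)=\langle X_0,\bJ\tau_l\rangle$. If a direct match fails, the fallback is the Hamiltonian structure. By Lemma \ref{ceig2}, the vanishing of low-order terms at $\lam=0$ is equivalent to $\ti r(\lam)$ having no $\lam^{-2},\lam^{-1}$ singularities coming from the Jordan block, and that is exactly what the projection ${\bf P}_{v,\om}$ removes. Finally, I will record differentiability twice from the corresponding statement in Lemma \ref{Dr-as}.
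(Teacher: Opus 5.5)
Your plan matches the paper's proof. You split ${\cal K}_0(i\mu+0)$ into the explicit part, the pairings with $\Phi_0$, and the $1/\kappa$ correction, and expand each by Lemma~\ref{Dr-as} into $k_0+k_1\mu+k_2\mu^2+{\cal O}(|\mu|^{5/2})$. You then kill $k_0$ with $\Om(X_0,\tau_j)=0$, $j=1,2$, and kill $k_1$ with $\Om(X_0,\tau_6)=0$ (used to eliminate $\psi_0$) together with $\Om(X_0,\tau_{j+2})=0$; this is exactly the computation carried out in Appendix~C.

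Three minor points:
\begin{itemize}
\item Some pairings have symbols vanishing only to order $|k|^3$ or less, e.g.\ $\na\hat\rho$, or $k\hat\rho/|k|^2$ after moving a $k$ onto $\hat\Lam_0$. These always carry an extra factor $\mu$, so a lower-order case of Lemma~\ref{Dr-as} suffices for them. You cannot instead move a power of $k$ onto $\Psi_0$.
\item The vanishing constant term of the correction also needs $h(0)=0$, which holds by parity, not only $\breve P(0)=0$.
\item The Hamiltonian ``fallback'' is unnecessary and would not by itself constitute a proof.
\end{itemize}
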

\begin{proof}
We will write ${\cal K}_0(i\mu)$, $\hat D(i\mu)$, etc.  instead of ${\cal K}_0(i\mu+i0)$, $\hat D(i\mu+i0)$.  
Formulas  \eqref{KK}, \eqref{F1}, and  \eqref{F1+} imply that
\beqn\nonumber
{\cal K}_0(i\mu)&=&\pi_0+[{\omega} PJ+im\mu]r_0+K_{01}(i\mu)+\mu K_{02}(i\mu)-\mu^2 \langle \hat\Lam_{0},\frac{\hat\rho}{\hat D(i\mu)}\rangle+K_{03}(i\mu)\\
\la{F-split}
&-&\frac{1}{\kappa(i\mu)}(K_{04}(i\mu)+K_{05}(i\mu))\big(i{\omega} h(i\mu)-\mu f(i\mu)-\breve P(i\mu)Jv\big),
\eeqn
where 
\beqn\nonumber
K_{01}(i\mu)&=&|v|^2\langle k\hat\Lam_{01},\frac{ k_1\hat\rho}{\hat D(i\mu)}\rangle-i|v|\langle \hat\Psi_{01},\frac{k\hat\rho}{\hat D(i\mu)}\rangle
+i\om|v|\langle k\cdot\! J\hat\Lam_{0},\frac{k_1\na\hat\rho}{\hat D(i\mu)}\rangle+\om\langle k\cdot\! J\hat\Psi_{0},\frac{\na\hat\rho}{\hat D(i\mu)}\rangle,\\
\nonumber
K_{02}(i\mu)&=&|v| \langle k\hat\Lam_{01},\frac{\hat\rho}{\hat D(i\mu)}\rangle+i{\omega}\langle k\cdot\! J\hat\Lam_{0},\frac{\na\hat\rho}{\hat D(i\mu)}\rangle
-|v| \langle k_1\hat\Lam_{0},\frac{\hat\rho}{\hat D(i\mu)}\rangle-i \langle \hat\Psi_{0}, \frac{\hat\rho}{\hat D(i\mu)}\rangle,\\
\nonumber
K_{03}(i\mu)&=&i|v|\langle\frac{(\widehat{{\cal P}r_0\rho})_1}{\hat D(i\mu)}, k\hat\rho\rangle+\om\langle\frac{Jk\cdot\! \widehat{{\cal P}r_0\rho}}{\hat D(i\mu)},\na\hat\rho\rangle
+i\mu\langle\frac{\widehat{{\cal P}r_0\rho}}{\hat D(i\mu)},\hat\rho\rangle,\\
\nonumber
K_{04}(i\mu)&=&i|v|\langle \frac{k_1\hat\Lam_0}{\hat D(i\mu)},J\hat\varrho\rangle+\mu\langle \frac{\hat\Lam_0}{\hat D(i\mu)},J\hat\varrho\rangle
+\langle \frac{\hat\Psi_0}{\hat D(i\mu)},J\hat\varrho\rangle,\quad 
K_{05}(i\mu)=-\langle \frac{\widehat{{\cal P}r_0\rho}}{\hat D(i\mu)},J\hat\varrho \rangle.
\eeqn
Applying  \eqref{int-rho-as}  with $\hat p=k_j\hat \rho$, $\hat p=k_j\na_{\ell}\hat \rho$, $\hat p=\hat\rho$, $\hat p=\na_j\hat\rho$  and $\hat p=\frac{k_j}{|k|^2}\hat\rho$  
and taking into account \eqref{zero2}, we  obtain
\beqn\la{High-as}
\!\!\!\!\!&&K_{01}(i\mu)=k_{10}+k_{11}\mu + k_{12}\mu^2 +{\cal O}(|\mu|^{5/2}),\quad K_{02}(i\mu)=k_{20}+k_{21}\mu +{\cal O}(|\mu|^{3/2}),\\
\la{High-as1}
\!\!\!\!\!&&\langle \hat\Lam_{0}, \frac{\hat\rho}{\hat D(i\mu)}\rangle=\sum\limits_{l=1,2}
\langle k_l\hat\Lam_{0},\frac{k_l\hat\rho}{|k|^2\hat D(i\mu)}\rangle=k_{02}+{\cal O}(|\mu|^{1/2}),\quad \mu\to 0.
\eeqn
Similarly to \eqref{High-as}--\eqref{High-as1}, we obtain that 
$K_{04}(i\mu)=k_{40}+k_{41}\mu +{\cal O}(|\mu|^{3/2})$.
Further, \eqref{Pi-e} and \eqref{int-rho-as2} imply
\be\la{High-as2}
K_{03}(i\mu)=k_{30}+k_{31}\mu +k_{32}\mu ^2+k_{33}\mu^3+{\cal O}(|\mu|^{4}).
\ee
Asymptotics  of type \eqref{High-as2} also hold for $K_{05}(i\mu)$, $f(i\mu)$, $h(i\mu)$, $P(i\mu)$, $\kappa(i\mu)$.
Hence, 
 \be\la{F12-as-full}
{\cal K}_0(i\mu+0)=k_0+k_1\mu+k_{2}\mu^2+{\cal O}(|\mu|^{5/2}),\quad \mu\to 0.
\ee
In Appendix C, we prove that the symplectic orthogonality conditions $\Omega(X_0,\tau_j)=0$ imply 
\be\la{F00}
k_0=k_1=0.
\ee
Combining \eqref{F12-as-full} with \eqref{F00}, we obtain \eqref{F12-as}. 
\end{proof}
\begin{lemma}\la{K0-as} 
Let $\rho$ satisfy  \eqref{rosym}, and let $(\Lam_0,\Psi_0)\in {\cal F}_{\beta}$ with $\beta>5/2$.  Then
\be\la{K0-rep}
{\cal K}_0(i\mu+0)=\pi_0+[{\omega} PJ+im\mu]r_0+\mathfrak{K}(i\mu+0),
\ee
where
\be\la{K-as}
|\mathfrak{K}(i\mu+0)|\le C(1+\Vert(\Lam_0,\Psi_0)\Vert_{{\cal F}_\beta}),~~
|\mathfrak{K}^{(k)}(i\mu+0)|\le \frac C{|\mu|}(1+\Vert(\Lam_0,\Psi_0)\Vert_{{\cal F}_\beta}),
\quad \mu\to\infty, \quad k=1,2.
\ee
\end{lemma}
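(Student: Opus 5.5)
Define $\mathfrak{K}$ through the splitting \eqref{F-split}:
\[
\mathfrak{K}=K_{01}+\mu K_{02}-\mu^2\langle\hat\Lam_0,\hat\rho/\hat D(i\mu)\rangle+K_{03}-\kappa^{-1}(K_{04}+K_{05})\big(i\om h-\mu f-\breve P Jv\big).
\]
The plan is to estimate each term separately as $\mu\to\infty$. Every pairing has the form $\langle {\cal R}(i\mu+0)g_1,g_2\rangle$. Here one factor is built from $\rho$, $\na\rho$, $\varrho$ (compactly supported and smooth). The other factor comes from the initial data: it is $\Psi_0$, $\na\Lam_0$, or $\Lam_0$ in the form $k_l\hat\Lam_0$, or $r_0$ through ${\cal P}[r_0\rho]$. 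The powers of $k$ in the multipliers are moved onto whichever factor keeps everything in $L^2_\beta$, $\beta>5/2$.

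We then apply \eqref{R-as} and \eqref{R-as1} with $\al=\beta$, which needs $\beta>1/2+k$ for $k\le 2$. This gives
\[
|\langle {\cal R}^{(k)}g_1,g_2\rangle|\le \Vert {\cal R}^{(k)}g_1\Vert_{H^{-l}_{-\beta}}\Vert g_2\Vert_{H^{l}_\beta},
\]
where the smooth $\rho$-factor absorbs the $H^l$ norm. The bounds are ${\cal O}(|\mu|^{-2})$ for $k=0$ when we use $l=1$, and ${\cal O}(|\mu|^{-3})$ for $k\ge1$ when we use $l=2$. The derivative $\partial_\mu$ acting on ${\cal R}(i\mu+0)$ is $i{\cal R}'$, so the same bounds apply. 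The linear growth in the data norm comes from $\Vert\Psi_0\Vert_{L^2_\beta}$ and $\Vert\na\Lam_0\Vert_{L^2_\beta}$. The $r_0$ terms are bounded by $C|r_0|$; since $|r_0|$ is not part of $\Vert(\Lam_0,\Psi_0)\Vert_{{\cal F}_\beta}$, I would either absorb them into the constant or add $|r_0|$ to the right-hand side.

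\textbf{Term by term.} $K_{01}$, $K_{03}$, $K_{04}$, $K_{05}$ and $h,f,\breve P$ are all ${\cal O}(1)$ or better, and their $\mu$-derivatives are ${\cal O}(|\mu|^{-3})$. The same holds for $\kappa(i\mu)=I+\varkappa$: by \eqref{q-as} it tends to $I>0$, so $\kappa^{-1}$ is bounded with derivatives ${\cal O}(|\mu|^{-3})$. The terms with explicit powers of $\mu$ need more care:
\[
\mu K_{02},\qquad \mu^2\langle\hat\Lam_0,\hat\rho/\hat D\rangle,\qquad i\mu\langle\widehat{{\cal P}r_0\rho}/\hat D,\hat\rho\rangle,\qquad \mu f,\qquad \mu\langle\hat\Lam_0/\hat D,J\hat\varrho\rangle.
\]
For $\mu K_{02}$ and $\mu\langle\ldots\rangle$, the $k=0$ bound gives $|\mu|\cdot|\mu|^{-2}\to0$. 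Differentiating produces $|\mu|^{-2}+|\mu|\cdot|\mu|^{-3}={\cal O}(|\mu|^{-2})$ for the first derivative, and similarly for the second.

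The worst term is $\mu^2\langle\hat\Lam_0,\hat\rho/\hat D\rangle$. Since only $\na\Lam_0$ lies in $L^2_\beta$, I would write it as in \eqref{High-as1}, $\sum_l\langle k_l\hat\Lam_0, k_l\hat\rho/(|k|^2\hat D)\rangle$, where $k_l\hat\rho/|k|^2$ is smooth and rapidly decaying by \eqref{zero2}. The bounds then give $\mu^2\cdot{\cal O}(|\mu|^{-2})={\cal O}(1)$. For the first derivative we get
\[
2\mu\cdot{\cal O}(|\mu|^{-2})+\mu^2\cdot{\cal O}(|\mu|^{-3})={\cal O}(|\mu|^{-1}),
\]
and the second derivative also gives $\mu^{-2}+\mu^{-1}+\mu^{-1}$, which is ${\cal O}(|\mu|^{-1})$. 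This is exactly the claimed rate, and it is why the derivative bound in \eqref{K-as} is only $C/|\mu|$.

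\textbf{Main obstacle.} The delicate step is this $\mu^2$ term. Its $k=0$ bound must be the sharp ${\cal O}(|\mu|^{-2})$ in $H^0_\beta\to H^{-1}_{-\beta}$ (case $l=-1$ of \eqref{R-as}), so the smooth factor must carry one derivative. Its derivative bounds need \eqref{R-as1} with $l=-2$, which is precisely where Lemma~\ref{as-new} and the condition $\beta>5/2$ (for $k=2$) enter. I would also check that $k_l\hat\rho/|k|^2$ corresponds to an $H^2_\beta$ function. It is smooth away from $0$ and ${\cal O}(|k|^3)$ near $0$ by \eqref{zero2}, so its inverse Fourier transform decays like $|x|^{-5}$ or faster and lies in $L^2_\beta$ for $\beta<4$. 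That is sufficient here.
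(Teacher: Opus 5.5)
Your proposal is correct and is essentially the paper's own argument: split ${\cal K}_0$ via \eqref{F-split}, then bound each pairing by \eqref{R-as} (into $H^{-1}_{-\beta}$) and by \eqref{R-as1} (into $H^{-2}_{-\beta}$, which is where $\beta>5/2$ is needed). As in the paper, the smooth $\rho$-factor carries the Sobolev derivatives, and $\Lam_0$ is handled through $\na\Lam_0$ and $\eta_j=F^{-1}(k_j\hat p/|k|^2)$ as in \eqref{Phi-as1}. Your explicit tracking of the $\mu$-powers, which identifies $\mu^2\langle\hat\Lam_0,\hat\rho/\hat D\rangle$ as the term limiting the derivative bound to $C/|\mu|$, is just a more detailed version of the paper's one-line conclusion.

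One small addition: the term $\mu\langle\hat\Lam_0/\hat D,J\hat\varrho\rangle$ in $K_{04}$ also contains $\Lam_0$ without a factor of $k$, so it needs the same $\eta_j$ rewriting. This is legitimate because radial symmetry and the vanishing moments in \eqref{rosym} give $\hat\rho(k)=|k|^4g(|k|^2)$ with $g$ smooth. Hence $k_j\hat\rho/|k|^2$ and $k_j\na\hat\rho/|k|^2$ are smooth and rapidly decreasing, which is stronger than the $|x|^{-5}$ decay you argued for.
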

\begin{proof}
Similarly to \eqref{q-as}, we obtain
\[
|\langle\frac{(\widehat{{\cal P}r_0\rho})_1}{\hat D(i\mu)}, k\hat\rho\rangle|\le C(\rho)|\mu|^{-2},\quad 
|\pa^j_{\mu}\langle\frac{(\widehat{{\cal P}r_0\rho})_1}{\hat D(i\mu)}, k\hat\rho\rangle|\le C(\rho)|\mu|^{-3},\quad
\mu\to\infty,\quad j=1,2.
\]
The same estimates also hold for $\langle\frac{Jk\cdot\! \widehat{{\cal P}r_0\rho}}{\hat D(i\mu)},\na\hat\rho\rangle$,
$\langle\frac{\widehat{{\cal P}r_0\rho}}{\hat D(i\mu)},\hat\rho\rangle$, and  
$\langle \frac{\widehat{{\cal P}r_0\rho}}{\hat D(i\mu)},J\hat\varrho\rangle$.
Further,  \eqref{R-as}  and \eqref{R-as1}  imply 
\beqn\nonumber
\!\!\!\!\!\!\!\!\!\!\!\!\!\!\!\!\!\!&&|\langle {\cal R}(i\mu+\!0)\Psi_0,p\rangle|\le C\Vert {\cal R}(i\mu+\!0)\Psi_0\Vert_{H^{-1}_{-\beta}}\Vert p\Vert_{H^1_{\beta}}
\le C |\mu|^{-2}\Vert\Psi_0\Vert_{L^{2}_{\beta}},\\
\la{Phi-as}
\!\!\!\!\!\!\!\!\!\!\!\!\!\!\!\!\!\!&&|\langle {\cal R}^{(k)}(i\mu+\!0)\Psi_0,p\rangle|\le C\Vert {\cal R}^{(k)}(i\mu+\!0)\Psi_0\Vert_{H^{-2}_{-\beta}}\Vert p\Vert_{H^2_{\beta}}
\le C |\mu|^{-3}\Vert\Psi_0\Vert_{L^{2}_{\beta}},~~\mu\to\infty,~k\!=\!1,2,
\eeqn
where $p$ is one of the functions $\rho$, $\na_j\rho$,  $y_l\rho$, $y_l\na_j\rho$, $j,l=1,2$.
The same estimates hold with  $\na_k\Lam_0$ instead of $\Psi_0$. Moreover, \eqref{R-as}  and \eqref{R-as1}  imply  for $k=1,2$
 \beqn\nonumber
\!\!\!\!\!\!\!\!\!\!\!\!\!\!\!\!\!\!&&|\langle {\cal R}(i\mu+\!0)\Lam_0,p\rangle|
\le C\sum\limits_{j=1,2}\Vert {\cal R}(i\mu+\!0)\na_j\Lam_0\Vert_{H^{-1}_{-\beta}}\Vert \eta_j\Vert_{H^1_{\beta}}
\le C |\mu|^{-2}\Vert\na\Lam_0\Vert_{L^{2}_{\beta}},\\
\la{Phi-as1}
\!\!\!\!\!\!\!\!\!\!\!\!\!\!\!\!\!\!&&|\langle {\cal R}^{(k)}(i\mu+\!0)\Lam_0,p\rangle|
\le C\sum\limits_{j=1,2}\Vert {\cal R}^{(k)}(i\mu+\!0)\na_j\Lam_0\Vert_{H^{-2}_{-\beta}}\Vert \eta_j\Vert_{H^2_{\beta}}
\le C |\mu|^{-3}\Vert\na\Lam_0\Vert_{L^{2}_{\beta}},~~\mu\to\infty.
\eeqn
Here $\eta_j=F^{-1}_{k\to x}\frac {k_j\hat p(k)}{|k|^2}$.
Hence, \eqref{K-as}  follows from  \eqref{F-split}. 
\end{proof}
\subsection{Proof of Theorem \ref{lindecay}}\label{X-decay}
{\bf Time decay of the vector components}\\
Applying the inverse Fourier--Laplace transform, we obtain
\beqn\nonumber
r(t)&\!\!=\!\!&\frac{1}{2\pi}\int e^{i\mu t} \ti r(i\mu+\!0)d\mu
=\frac{1}{2\pi}\int e^{i\mu t}\zeta(\mu) \ti r(i\mu+\!0)d\mu+\frac{1}{2\pi}\int e^{i\mu t}(1\!-\!\zeta(\mu)) \ti r(i\mu+\!0)d\mu\\
\la{F-int}
&\!\!=\!\!&I_1(t)+I_2(t).
\eeqn
Here $\zeta\in C_0^{\infty}(\R)$,  ${\rm supp}\,\zeta\subset (-2,2)$, $\zeta(\mu)=1$ for $|\mu| \le 1$.
Using  \eqref{M-as-} and \eqref{K0-rep}--\eqref{K-as}, we obtain 
\be\la{tr-decay}
\ti r(i\mu+\!0)={\cal M}^{-1}(i\mu+0){\cal K}_0(i\mu+0)=\frac{r_0}{i\mu}+{\cal O}(|\mu|^{-2}),\quad 
\ti r^{(k)}(i\mu+\!0)={\cal O}(|\mu|^{-1-k}), ~~ \mu\to\infty
\ee
for $k=1,2$. Moreover, $\ti r(i\mu+\!0)\in C^2(\R\setminus 0)$ by \eqref{M-condition}.
Therefore, double integration  by parts  gives
\be\la{I2}
I_2(t)={\cal O}(t^{-2}),\quad t\to \infty.
\ee
Now consider $I_1(t)$. Asymptotics \eqref{M-as} and \eqref{F12-as} imply
\be\la{zyg}
\ti r^{(k)}(i\mu+\!0)={\cal O}(|\mu|^{1/2-k}),\quad\mu\to 0,\qquad k=0,1,2. 
\ee
Applying  the known result on ``one-and-a-half  integration by parts'' (see \cite[Lemma 10.2]{JK79}, \cite[Lemma 22.5]{KK2012}), we  obtain
\be\la{I1}
I_1(t)={\cal O}(t^{-3/2}),\quad t\to \infty.
\ee
Finally, \eqref{I2} and \eqref{I1} imply
\be\la{r-decay}
r(t)={\cal O}(t^{-3/2}),\quad t\to \infty.
\ee
Further, the third equation of \eqref{eq-main} together with \eqref{tr-decay} gives
\[
\ti\phi(i\mu+0)=\frac{\omega}m JP\ti r+i\mu\ti r-r_0=\frac{\omega}{im\mu} JPr_0+{\cal O}(|\mu|^{-2}),\quad \mu\to \infty.
\]
Moreover,  $\ti\phi^{(k)}(i\mu+0)={\cal O}(|\mu|^{1/2-k})$ as $\mu\to 0$ for $k=0,1,2$, by \eqref{eq-main} and \eqref{zyg}. Hence, similarly to \eqref{r-decay},
\be\la{Phi-decay}
\phi(t)=\frac{1}{2\pi}\int e^{i\mu t} \ti\phi(i\mu+0) d\mu={\cal O}(t^{-3/2}), \quad t\to \infty.
\ee
Below we also need a decay for $\nu(t)$. Definition \eqref{nu-rep} of $\ti\nu$, asymptotics of type \eqref{q-as} for $\varkappa$, $f$, $h$,  and $P$, and asymptotics  \eqref{Phi-as}--\eqref{Phi-as1} imply 
 $$
 \ti\nu(i\mu+0)={\cal O}(|\mu|^{-1}),\quad \ti\nu^{(k)}(i\mu+0)={\cal O}(|\mu|^{-2}),\qquad \mu\to\infty,\quad k=1,2.
 $$ 
Moreover,  $\ti \nu^{(k)}(i\mu+\!0)={\cal O}(|\mu|^{1/2-k})$ as $\mu\to 0$ for $k=0,1,2$ 
by \eqref{nu-rep}, \eqref{int-rho-as2}, \eqref{High-as}, and \eqref{zyg}. Hence,
\be\la{nu-decay}
\nu(t)=\frac{1}{2\pi}\int e^{i\mu t} \ti\nu(i\mu+0) d\mu={\cal O}(t^{-3/2}), \quad t\to \infty.
\ee
{\bf Time decay of the fields}\\
Denote ${\cal F}(t)=(\Lam(t),\Psi(t))$ and rewrite the first two equations of \eqref{bfA} as
\be\la{F-eq}
\dot{\cal F}(t)=\begin{bmatrix} v\cdot\!\na &1\\ 
\Delta& v\cdot\!\na\end{bmatrix}{\cal F}(t)+\begin{bmatrix} 0\\ Q(t)\end{bmatrix}, \quad
Q={\cal P}[\rho\phi - \frac {\omega}m\rho J Pr-\!vr\cdot\!\na\rho]+\big({\omega}r\cdot\! \na-\!\nu\big)J\varrho.
\ee
Applying the Duhamel representation, we get
\be\la{Duhamel}
{\cal F}(t)=W_v(t){\cal F}_0+\int_0^t W_v(t-s)\begin{bmatrix} 0\\ Q(s)\end{bmatrix} ds,\quad t\ge 0,
\ee
where $W_v(t)$ is the dynamical group of the modified wave equation (equation \eqref{F-eq} with $Q(t)=0$).
For the group  $W_v(t)$ the dispersive decay holds \cite{K2010, KK2023}:
\be\la{dede}
\Vert W_v(t){\cal F}_0\Vert_{{\cal F}_{-\beta}}\le C(1+t)^{-2}\Vert{\cal F}_0\Vert_{{\cal F}_\beta},\quad \beta>2,\quad |v|<1,\quad t\ge 0.
\ee
 Applying  \eqref{dede} to \eqref{Duhamel} and  using   \eqref{r-decay}--\eqref{nu-decay}, we obtain \eqref{frozenest} 
 for the field components.\\
It remains to prove the decay \eqref{frozenest} for $\pi(t)$ and $\psi(t)$.  The decay for  $\pi(t)$ follows from 
\eqref{nu-def}. Finally, symplectic  orthogonality condition
$\Om(Z(t),\tau_6(v,\omega))=0$ together with \eqref{Mvw} gives
$$
\psi(t)=\frac{1}{\kappa(0)}\big[\langle\Lam(t),\pa_{\omega}\Pi_{v,{\omega}}\rangle-\langle\Psi(t),\pa_{\omega}A_{v,\omega}\rangle\big]
={\cal O}(t^{-3/2}),\quad t\to \infty.
$$
\setcounter{equation}{0}
\section{Appendix}
\subsection*{A. Proof of Lemma \ref{Dr-as}}
By \eqref{g-rep1},
\be\la{g-pm}
{\cal R}(i\mu+0,z)=\gamma e^{-i|v|\nu \tilde z_1}R_\mp(\nu^2,\tilde z),\quad \pm\mu>0,\quad z=x-y, \quad \nu=\gamma\mu.
\ee
Recall that (cf. \cite{K2010, S2005}) 
$$
R_{\pm}(\zeta^2,z)=\pm\frac{i}{4} H_0^{\pm}(\zeta|z|),\quad\zeta>0,
$$
where $H_0(y)=J_0(y)+iY_0(y)$ is the Hankel function.
For $|\nu z|\le 1$, the  well-known asymptotics of $H_0^{\pm}$ (cf. \cite[Formula 10.8]{O}) imply   that
\be\la{g-as-s}
R_{\pm}(\nu^2,\tilde z)=-\frac{\log (|\nu \tilde z|)}{2\pi} + h_0^{\pm}+\frac{(\nu|\tilde z|)^2}{8\pi}\log(\nu |\ti z|)
+h_1^{\pm}(\nu|\tilde z|)^2-\frac{(\nu|\tilde z|)^{4}\log (|\nu \tilde z|)}{128\pi}+\dots,
\ee
where $h_j^{\pm}=h_j^{\pm}(\gamma)$  are  some  constants. 
Moreover, 
\be\la{g-as-s+}
e^{-i|v|\nu \tilde z_1}=1-i |v|\nu\tilde z_1-\frac{v^2}2\nu^2\tilde z_1^2+i\frac{|v|^3}6\nu^3\tilde z_1^3+\frac{ v^4}{24}\nu^4\tilde z_1^4+\dots
\ee
Hence,  \eqref{g-pm}--\eqref{g-as-s+}  imply, for $|\nu \tilde z|\le 1$,
\beqn\nonumber
\!\!\!&&\frac{1}{\gamma}{\cal R}(i\mu+0,z)=-\frac{1}{2\pi}\log (|\nu\tilde z|) +\tilde h_0^{\pm}+\frac{i |v|}{2\pi}\nu\tilde z_1\log (|\nu\tilde z|)
+\tilde h_1^{\pm}\nu\tilde z_1+\frac{1}{8\pi}\nu^2\log (|\nu\tilde z|)\big(2v^2\tilde z_1^2+|\tilde z|^2\big)\\
\nonumber
\!\!\!&&+\nu^2(\tilde h_{21}^{\pm}\tilde z_1^2+\tilde h_{22}^{\pm}|\tilde z|^2)
-\frac{i |v|}{24\pi}\nu^3\log(|\nu\tilde z|)\big(3\tilde z_1|\tilde z|^2+2v^2\tilde z_1^3\big)
+\nu^3(\tilde h_{31}^{\pm}\tilde z_1|\tilde z|^2+\tilde h_{32}^{\pm}\tilde z_1^3)+\dots\\
\la{g+}
\!\!\!&&= P_j^{\pm}(\mu, z)+Q_j^{\pm}(\mu,z),\quad \pm\mu>0,\quad j\ge 2.
\eeqn
Here  $P_j^{\pm}(\mu, z)$ is the sum of terms with powers of $\nu$ less than $j$ and $Q_j^{\pm}(\mu, z)$  is the remainder.
In particular,
$P_2^{\pm}(\mu, z)=-\frac{1}{2\pi}\log (|\nu\tilde z|) +\tilde h_0^{\pm}+\frac{i|v|}{2\pi}\nu\tilde z_1\log (|\nu\tilde z|)+\tilde h_1^{\pm}\nu\tilde z_1$.
The remainder $Q_j^{\pm}(\mu,z)$  is estimated as
\be\la{Q-est}
|\pa_{\mu}^{k}Q_j^{\pm}(\mu, z)|\le C_j|\ti z|^k(|\nu\ti z|)^{j-k}|\log(|\nu\tilde z|)|
\le C_j|z|^{j-\frac 12}|\mu|^{j-k-\frac 12},\quad  |\nu\ti z|\le 1,\quad k=0,1,2.
\ee
Further, for $|\nu \tilde z|\ge 1$, one has
$$
|\pa_{\nu}^k R_{\pm}(\nu^2,\tilde z)|=\frac 14|\pa_{\nu}^k H_0^{\pm}(|\nu\tilde z|)|\le \frac{C|\tilde z|^k}{\sqrt{|\nu \tilde z|}},\qquad
|\pa_{\nu}^k e^{-i|v|\nu \tilde z_1}|\le C|\tilde z_1|^k,\quad k=0,1,2.
$$
Hence, for $|\nu \tilde z|\ge 1$ and $j\ge 2$, 
\be\la{g-as-s1}
|\pa_{\mu}^k {\cal R}(i\mu\!+0,z)|\le \frac{C|\tilde z|^k}{\sqrt{|\nu \tilde z|}}\le C_j |\ti z|^k(|\nu \ti z|)^{j-k-\frac 12}=C_j |z|^{j-\frac 12}|\mu|^{j-k-\frac 12}, 
~~ k=0,1,2.
\ee
Moreover, for  $|\nu \tilde z|\ge 1$ and  $j\ge 2$,
\be\la{P-est}
|\pa_{\mu}^kP_j^{\pm}(\mu,z)|\le C |\pa_{\nu}^k \big((\nu\ti z)^{j-1}\log (|\nu\tilde z|)\big)|\ 
\le C_{j} |z|^{j-\frac 12}|\mu|^{j-k-\frac 12},\quad k=0,1,2.
\ee
Finally,  \eqref{g+}--\eqref{P-est}  imply  that 
\be\la{fin}
|\pa_{\mu}^k Q_j^{\pm}(\mu,z)|\le C_j |z|^{j-\frac 12}|\mu|^{j-k-\frac 12}, \quad k=0,1,2,\quad j\ge 2.
\ee
 We prove   Lemma \ref{Dr-as} in the case $j=2$ only.  The other cases $j\ge 3$ are considered similarly.
We assume that $p\in C_0^\infty(\R^2)$ satisfies  $\hat p(k)={\cal O}(|k|^4)$ as $k\to 0$. 
Then \eqref{g+} implies that
\be\la{gp}
 {\cal R}(i\mu+0)p=g_0p+\mu g_1 p +\mu^2 g_2 p+\gamma Q_3^{\pm}(\mu)p.
\ee
Here $g_j$ are integral operators with  kernels 
$$
g_0(x,y)=-\frac{\gamma}{2\pi}\log|\tilde z|, \quad
g_1(z)=\frac{i\gamma^2 |v|(\tilde z_1)\log|\tilde z|}{2\pi},\quad 
g_2(z)=-\frac{\gamma^3 \log|\tilde z|}{8\pi}(2v^2\ti z_1^2+|\ti z|^2),\quad  \ti z=\ti x-\ti y.
$$
In \cite{KK2023} it was proved that 
$$
\widehat {g_0 p}(k)=\frac{\hat p(k)}{\hat D_0(k)},\quad \widehat {g_1p}(k)=\frac{2|v|k_1\hat p(k)}{\hat D_0^2(k)},
\quad  \widehat {g_2p}(k)=\frac{(k^2+3v^2k_1^2)\hat p(k)}{\hat D_0^3(k)},
$$
and these Fourier transforms are bounded at $k=0$.
Denote by $q$ the function with  $\hat q(k)=\frac{i\hat p(k)}{k_1}$, so that   $p(x)=\na_1 q(x)$.
One has
$$
[Q_3^{\pm}(\mu)p](x) 
=\int Q_3^{\pm}(\mu, x-y)\na_1 q(y)dy=-\int \na_1 Q_3^{\pm}(\mu,x-y) q(y)dy.
$$
Note that   differentiation reduces the powers of $z=x-y$ in $Q_3^{\pm}(\mu, z)$.
Namely,
$$
|\pa_{\mu}^k(\na_1 Q_3(\mu, z))|\le C|z|^{3/2}|\mu|^{5/2-k}.
$$ 
Hence,  \eqref{int-rho-as} follows for  any  $\digamma\in L^2_{\beta}$ with $\beta>5/2$.
\subsection*{B. Feasibility  of the spectral condition}
Here we provide examples illustrating that condition \eqref{M-condition}  can be satisfied.
First we prove an auxiliary lemma.
\begin{lemma}\la{ek1}
 Let $P(k)$ be a real polynomial, $P(k)\ge 0$, $k\!\in\R^2$ and $P(k)\not\equiv 0$.   Then for  $0\!<|v|<\!1$,
\be\la{W}
\rIm \Big[\int\frac{P(k)|\hat\rho(k)|^2\,dk}{k^2\hat D(i\mu+0)}\Big]\lessgtr 0~~{\rm if} ~~ \mu \gtrless 0.
\ee
\end{lemma}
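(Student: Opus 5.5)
The plan is to compute the imaginary part of $1/\hat D(i\mu+\ve,k)$ explicitly and pass to the limit $\ve\to+0$, where it becomes a delta-function supported on the characteristic curve $\hat D(i\mu,k)=0$. For $\lam=i\mu+\ve$ we have $\hat D(\lam,k)=k^2-(\mu+v\cdot k)^2+2i\ve(\mu+v\cdot k)+{\cal O}(\ve^2)$. By the Sokhotski--Plemelj formula,
\[
\rIm\frac{1}{\hat D(i\mu+0,k)}=-\pi\,{\rm sgn}(\mu+v\cdot k)\,\delta\big(k^2-(\mu+v\cdot k)^2\big).
\]
The convergence in the sense of the weighted integrals is guaranteed by the limiting absorption principle \eqref{LAP}, since $P|\hat\rho|^2/k^2$ is smooth and decays rapidly; it is smooth at $k=0$ by \eqref{zero2}. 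Hence
\[
\rIm\int\frac{P|\hat\rho|^2dk}{k^2\hat D(i\mu+0)}=-\pi\int_{\{|k|=|\mu+v\cdot k|\}}{\rm sgn}(\mu+v\cdot k)\frac{P(k)|\hat\rho(k)|^2}{k^2\,|\na_k(k^2-(\mu+v\cdot k)^2)|}\,dS.
\]

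The key geometric step is to identify the curve $\Gamma_\mu=\{|k|=|\mu+v\cdot k|\}$ and the sign of $\mu+v\cdot k$ on it. Since $|v|<1$, we have $|v\cdot k|<|k|$ for $k\ne0$. Consequently $\mu+v\cdot k=\pm|k|$ forces ${\rm sgn}(\mu+v\cdot k)={\rm sgn}\,\mu$. Indeed, if $\mu>0$ and $\mu+v\cdot k=-|k|$, then $\mu=-|k|-v\cdot k<0$, a contradiction; the case $\mu<0$ is symmetric. So $\Gamma_\mu$ is the ellipse $|k|=\mu+v\cdot k$ for $\mu>0$ (respectively $|k|=-\mu-v\cdot k$ for $\mu<0$). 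This is a compact smooth curve around the origin with $|k|\sim|\mu|$. On it the gradient $2k-2(\mu+v\cdot k)v$ does not vanish, because $|k|=|\mu+v\cdot k|>|\mu+v\cdot k||v|$. Therefore the integrand on $\Gamma_\mu$ is nonnegative and the sign of the whole expression is $-{\rm sgn}\,\mu$. That is exactly \eqref{W}, provided the integral is nonzero.

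The main obstacle is strict nonvanishing. The integrand $P|\hat\rho|^2$ is nonnegative, and I need it not identically zero on the ellipse $\Gamma_\mu$. Here I would use that $\hat\rho$ is real-analytic (as $\rho\in C_0^\infty$) and $\rho\not\equiv0$, so $\hat\rho$ vanishes only on a real-analytic set. That set cannot contain a full analytic curve for \emph{every} $\mu$ without contradiction, since radiality gives $\hat\rho(k)=\hat\rho_{rad}(|k|)$ and the ellipse meets a continuum of radii $|k|$. The same holds for the polynomial $P\not\equiv0$.

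The subtle point is whether the claim is meant for every $\mu\ne0$. On $\Gamma_\mu$ the radius $|k|$ ranges over the nondegenerate interval $[|\mu|/(1+|v|),|\mu|/(1-|v|)]$ because $0<|v|$. Since $\hat\rho_{rad}$ is a nonzero analytic function of $|k|$, its zeros are isolated, so it cannot vanish on that whole interval. Likewise $P$ restricted to an ellipse vanishes at only finitely many points unless it contains the ellipse's defining quadratic as a factor. Even then $P\ge0$ together with sign considerations makes this excludable, or I would simply note that $P$ could vanish on at most finitely many such ellipses. If needed, I would state the result for all $\mu\ne0$ outside a discrete set, though I expect $P$ nonvanishing a.e. on $\Gamma_\mu$ to suffice for the cases used in the paper.
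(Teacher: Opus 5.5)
Your argument follows the paper's proof. The Plemelj formula turns the imaginary part into $-{\rm sgn}(\mu)\,\pi\int_{K_v(\mu)}P|\hat\rho|^2\big/\big(k^2|\na_k\hat D(i\mu,k)|\big)\,dS$ over the ellipse $\hat D(i\mu,k)=0$. Strict positivity then comes from radial symmetry plus analyticity: for $v\ne0$ this ellipse meets every circle $|k|=s$ with $|\mu|/(1+|v|)\le s\le|\mu|/(1-|v|)$. Your explicit check that ${\rm sgn}(\mu+v\cdot k)={\rm sgn}\,\mu$ on the curve is a useful detail the paper leaves implicit. The per-$\mu$ argument in your last paragraph, not the ``for every $\mu$'' phrasing before it, is the one that is needed.

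One claim is false: $P\ge0$ does \emph{not} rule out $\hat D(i\mu_0,\cdot)\mid P$. Take $P(k)=\big(k^2-(\mu_0+v\cdot k)^2\big)^2$, which is real, nonnegative and nonzero. At $\mu=\mu_0$ the quotient $P/\hat D(i\mu_0+\ve)$ converges to the real function $k^2-(\mu_0+v\cdot k)^2$, with domination since $|\mu_0+v\cdot k|$ stays away from $0$ near the ellipse. Hence the imaginary part is exactly $0$.

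So your fallback is necessary, not optional. The statement holds only under the extra condition that $P\not\equiv0$ on $K_v(\mu)$. For a fixed $P$ this excludes at most finitely many $\mu$, namely those for which the irreducible conic $\hat D(i\mu,\cdot)$ divides $P$.

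The paper's proof has the same omission, since it only checks $\hat\rho\not\equiv0$ on $K_v(\mu)$. This is harmless in Appendix B. There $P=k_2^2(|v|k_1+\mu)^2$ vanishes on $K_v(\mu)$ only at the two points with $k_2=0$, because $\mu+v\cdot k=\pm|k|\ne0$ on the ellipse. Likewise $P=(|v|k_2^2-\mu k_1)^2$ vanishes there at no more than four points.
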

\begin{proof}
We assume that  $v=(|v|,0)$. In this case,   $\hat D(i\mu+0)=k^2+(i|v|k_1+(i\mu+0))^2$. Denote
$K_v(\mu)=\{k\in\R^2: (k_1-\mu |v|\gamma^2)^2+k_2^2\gamma^2=\mu^2\gamma^4\}$, $\gamma=1/\sqrt{1-v^2}$. By the Plemelj formula,
\be\la{PF}
\rIm \Big(\int\frac{P(k)|\hat\rho(k)|^2\,dk}{k^2\hat D(i\mu+0)}\Big)=-\frac{\mu}{|\mu|}\pi\int_{K_v(\mu)} \frac{P(k)|\hat\rho(k)|^2}{k^2|\na_k\hat D(i\mu+0)|}dS
\ee
(see, for example, \cite[Lemma 15.3]{IKV2006}).
Hence, it suffices to show that  $\hat\rho(k) \not{\!\!\equiv}\, 0$ on $K_v(\mu)$ for $\mu\ne 0$.
Assume, on the contrary, $\hat\rho(k) \equiv 0$ on $K_v(\mu)$.
Then  $\hat\rho(k) \equiv 0$ in the ring  $ |\mu|(1-|v|)\gamma^2 \le |k|\le |\mu|(1+|v|)\gamma^2$  by  (\ref{rosym}). Hence,
$\hat\rho(k) \equiv 0$ in $\C^2$, due to the analyticity of $\hat\rho(k)$,
in contradiction  to the last condition in \eqref{rosym}.
\end{proof}
Let $0\!<|v|<\!1$. By Lemma  \ref{ek1},  $\rIm\varkappa(i\mu+0)=\rIm \int\frac{|\na\hat\rho|^2}{\hat D(i\mu+0)}dk\ne 0$ for $\mu\ne 0$.
Let us  also  assume  that $I=\infty$ and  $\omega=0$. Then 
${\rm \det}\,{\cal M}(i\mu+0)=a_{11}(i\mu+0)a_{22}(i\mu+0)$, where
\[
\rIm a_{11}=\rIm \!\int\!\frac{(|v|k_1k_2+\mu k_2)^2|\hat\rho|^2}{k^2\hat D(i\mu+0)}dk\ne 0,
\quad \rIm a_{22}=\rIm \!\int\!\frac{(|v|k_2^2-\mu k_1)^2|\hat\rho|^2}{k^2\hat D(i\mu+0)}dk\ne 0,\quad \mu\ne 0.
\]
by \eqref{PF}. Thus, condition \eqref{M-condition}  holds  for  $0\!<|v|<\!1$,   $I=\infty$,  and  $\omega=0$.  
Hence, it holds for  large $I$ and   small $|\omega|$ by the continuity of $\det{\cal M}(i\mu+0)$.
\smallskip\\
In the case  $v=0$,  formulas  \eqref{fh-def}, \eqref{PQFSR},  and \eqref{bb} imply that   $a_{22}(i\mu+0)=a_{11}(i\mu+0)$, and 
\be\la{W2}
{\rm det}\, {\cal M}(i\mu+0)=\big(a_{11}(i\mu+0)+ia_{12}(i\mu+0)\big)\big(a_{11}(i\mu+0)-ia_{12}(i\mu+0)\big),
\ee
where
\beqn\nonumber
a_{11}(i\mu)\pm ia_{12}(i\mu)&=&-\mu^2\big(m+\frac 12\int\! \frac{|\hat\rho(k)|^2dk}{k^2+(i\mu+0)^2}\big)
+\frac {\om^2}2\big(\int\! \frac{k^2|\na\hat\rho(k)|^2dk}{k^2}-\int\! \frac{k^2|\na\hat\rho(k)|^2dk}{k^2+(i\mu+0)^2}\big)\\
\la{W3}
&\pm&\om\mu\big(\int\! \frac{\hat\rho(k) (k\cdot\na)\hat\rho(k)dk}{k^2}-\int\! \frac{\hat\rho(k) (k\cdot\na)\hat\rho(k)dk}{k^2+(i\mu+0)^2}\big).
\eeqn
The Wiener condition and the Plemelj formula imply
$$
\rIm(a_{11}(i\mu)\pm ia_{12}(i\mu))=-\frac{\mu^2}2 \rIm \int\! \frac{|\hat\rho(k)|^2dk}{k^2+(i\mu+0)^2}\ne 0~~{\rm for}~~\om=0~~{\rm and}~~ \mu\ne 0.
$$
Hence, in the case $v=0$, ${\rm det}\, {\cal M}(i\mu)\ne 0$ for $\mu\ne 0$ and all sufficiently small $|\omega|$.
\subsection*{C. Symplectic orthogonality conditions}
Here we prove \eqref{F00}. First we prove that  $k_0={\cal K}_0(0)=0$.  By \eqref{inb} and \eqref{APi}, 
\beqn\nonumber
0&\!\!\!=\!\!\!&\Omega(X_0,\tau_j)=-\langle\Lam_0,\pa_j \Pi\rangle+\langle\Psi_0,\pa_j A\rangle-\pi_0\cdot e_j
=-v^2\!\int\! \hat\Lam_{01} \frac{k_jk_1\rho}{\hat D_0}dk+i|v|\!\int\! \hat\Psi_{01}\frac{k_j\rho}{\hat D_0} dk\\
\la{so-con}
&\!\!\!+\!\!\!&i{\omega}|v|\!\int\! (\hat\Lam_0\cdot \!J\na\hat\rho) \frac{k_jk_1}{\hat D_0} dk
+{\omega}\!\int\! (\hat\Psi_0\cdot\!  J\na\hat\rho)\frac{k_j}{\hat D_0} dk -\pi_{0j},\quad j=1,2.
\eeqn
On the other hand,  \eqref{F1} and \eqref{F-split}   imply
\beqn\nonumber
&&{\cal K}_0(0)=\pi_0+{\omega} PJr_0+v^2\int\frac{\hat\Lam_{01}k_1\hat\rho}{\hat D_0} k dk-i|v|\int\frac{\hat\Psi_{01}\hat\rho}{\hat D_0} k dk\\
\la{soc1}
&&+i|v|{\omega}\int\frac {k_1(k\cdot J\hat \Lam_0)}{\hat D_0} \na\hat\rho dk
+{\omega}\int\frac {k\cdot J\hat \Psi_0}{\hat D_0} \na\hat\rho dk+\om\langle\frac{Jk\cdot\! \widehat{{\cal P}r_0\rho}}{\hat D_0},\na\hat\rho\rangle.
\eeqn
Comparing \eqref{so-con} and \eqref{soc1}, we obtain that  $k_{0j}=-\Omega(X_0,\tau_j)=0$, $j=1,2$, since
$$
\langle\frac{Jk\cdot\! \widehat{{\cal P}r_0\rho}}{\hat D_0},\na\hat\rho\rangle=\langle\frac{(r_0\cdot Jk)\rho}{\hat D_0},\na\hat\rho\rangle=- PJr_0
$$
by \eqref{Pi-e} and \eqref{c-ro}.
It remains to prove that $k_{1}={\cal K}_0'(0)=0$. By \eqref{inb} and \eqref{Mvw},
\[
\Omega(X_0,\tau_{6})=\langle\Lam_0,\pa_{\om}\Pi\rangle-\langle\Psi_0,\pa_{\om}A\rangle+\psi_0\pa_{\om}M
=-\langle\Lam_0,\frac{i(vk)J\hat\varrho}{\hat D_0}\rangle
+\langle\Psi_0,\frac{J\hat\varrho}{\hat D_0}\rangle+\psi_0(I+\varkappa(0))=0.
\]
Hence,
\be\la{soc}
\psi_0=-\frac{1}{I+\varkappa(0)}\langle \frac{\Phi_0}{\hat D_0}, J\hat\varrho\rangle.
\ee
Further, \eqref{solY} and \eqref{inb} imply
\be\la{X0-2}
0=\Omega(X_0,\tau_{j+2})=\langle\Lam_0, \pa_{v_j} \Pi\rangle-\langle\Psi_0,\pa_{v_j} A\rangle+r_0\cdot (me_j+\langle\pa_{v_j} A,\rho\rangle)
+\psi_0\pa_{v_j}M,\quad j=1,2.
\ee
By  \eqref{solit3} and \eqref{solit32+}--\eqref{A1A1++}, 
\beqn\nonumber
\!\!\!\!\!&&\!\!\!\!\!\langle\Psi_0,\pa_{v_j} A\rangle=\int\Big(\frac{\hat\Psi_{0j}}{\hat D_0}+2v^2\frac{ \hat\Psi_{01}k_1k_j}{\hat D_0^2}\Big)\hat\rho dk
-2i|v|{\omega}\int \frac{k_1k_j(\hat\Psi_0\cdot J\na\hat\rho)}{\hat D_0^2}dk,\\
\nonumber
\!\!\!\!\!&&\!\!\!\!\!\langle\Lam_0, \pa_{v_j} \Pi\rangle=-i|v|\!\int\!\Big[\frac{\hat\Lam_{01}k_j}{\hat D_0}
+\frac{\hat\Lam_{0j}k_1}{\hat D_0}+2v^2\frac{\hat\Lam_{01}k_1^2k_j}{\hat D_0^2}\Big]\hat\rho dk
-{\omega}\!\int \!\frac{k_j}{\hat D_0}\Big[1+\frac{2v^2k_1^2}{\hat D_0}\Big](\hat\Lam_0\cdot J\na\hat\rho)dk,
\eeqn
\[
\langle \pa_{v_1}A,\rho\rangle=e_1\int\frac{(v^2k_1^2+k^2)k_2^2|\hat\rho|^2}{k^2\hat D_0^2}dk, \quad
\langle \pa_{v_2}A,\rho\rangle=-e_2\int\frac{(v^2k_2^2+k^2(v^2-1))k_1^2|\hat\rho|^2}{k^2\hat D_0^2}dk.
\]
Substituting these equations   into \eqref{X0-2} and taking into account \eqref{Mvw} and \eqref{soc}, we obtain
\beqn\nonumber
&&\!\!\!\!\!\!\!0=-i|v|\!\int\! \frac{\hat\Lam_{01}(k^2+v^2k_1^2)\hat\rho}{\hat D_0^2}k dk-i|v|\!\int\! \frac{k_1\hat\rho}{\hat D_0}\Lam_{0} dk
-{\omega}\!\int\! (\hat\Lam_0\cdot J\na\hat\rho) \frac{k^2+v^2k_1^2}{\hat D_0^2}kdk-\!\int\! \frac{\hat\rho}{\hat D_0}\Psi_{0} dk\\
\nonumber
&&\!\!\!\!\!\!\!-2v^2\!\int \frac{ \hat\Psi_{01}k_1\hat\rho}{\hat D_0^2}kdk+2i|v|{\omega}\!\int (\hat\Psi_0\cdot J\na\hat\rho) \frac{k_1}{\hat D_0^2}kdk
+m r_{0}+r_{01}e_1\int\frac{(v^2k_1^2+k^2)k_2^2|\hat\rho|^2}{k^2\hat D_0^2}dk\\
\la{soc2}
&&\!\!\!\!\!\!\!
-r_{02}e_2\int\frac{(v^2k_2^2+k^2(v^2-1))k_1^2|\hat\rho|^2}{k^2\hat D_0^2}dk
-2e_1\frac{|v|{\omega}}{I+\varkappa(0)}\langle \frac{\Phi_0}{\hat D_0}, J\hat\varrho\rangle \!\int\frac{k_1^2|\na\hat\rho|^2}{\hat D_0^2} dk. 
\eeqn
Denote 
$
T_0:=\pa_\lam\Big(\frac{\Phi_0(\lam)}{\hat D(\lam)}\Big)\Big|_{\lam=0}=\ds\frac{(k^2+v^2k_1^2)\Lam_{0}-2i|v|k_1\hat\Psi_0} {\hat D^2_0}+\frac{2i|v|k_1\hat\rho(r_0\cdot Jk)Jk} {k^2\hat D^2_0}.
$
Then  \eqref{F1},  \eqref{fh-def}, \eqref{F-split} and \eqref{soc2}  imply
\beqn\nonumber
k_1={\cal K}_0'(0)\!\!&\!\!=\!\!&\!\!-i|v|\langle T_0,k\hat\rho\rangle
+{\omega}\langle k\cdot J T_0 ,\na\hat\rho\rangle-\langle \Phi_0,\hat\rho\rangle
+mr_0-i{\omega}\frac{\langle \Phi_0(0),J\hat\varrho\rangle}{1+\varkappa(0)}h'(0)\\
\nonumber
\!\!&\!\!=\!\!&\!\!-i|v|\langle\frac{(k^2\!+\!v^2k_1^2)\Lam_{01}} {\hat D^2_0},k\hat\rho\rangle
-2v^2\langle\frac{k_1\Psi_{01}} {\hat D^2_0},k\hat\rho\rangle
+2v^2\langle \frac{k_1k_2\hat\rho(r_0\cdot Jk)} {k^2\hat D^2_0},k\hat\rho\rangle+mr_0\\
\nonumber
\!\!&\!\!+\!\!&\!\!{\omega}\Big[\langle \frac{(k^2+v^2k_1^2)(k\cdot J\Lam_0)}{\hat D^2_0},\na\hat\rho\rangle
-2i|v|\langle \frac{k_1(k\cdot J\Psi_0)}{\hat D^2_0},\na\hat\rho\rangle-2i|v|\langle \frac{k_1\hat\rho(r_0\cdot Jk)}{\hat D^2_0},\na\hat\rho\rangle\Big]\\
\nonumber
\!\!&\!\!-\!\!&\!\!i|v|\langle \frac{k_1\Lam_0}{\hat D_0},\hat\rho\rangle\! -\!\langle \frac{\Psi_0}{\hat D_0},\hat\rho\rangle
\!+\!\langle\frac{\hat\rho(r_0\cdot Jk)}{k^2\hat D_0}Jk,\hat\rho\rangle
-2e_1{\omega}|v|\frac{\langle \Phi_0(0),J\hat\varrho\rangle}{1+\varkappa(0)}\!\!\int\!\! \frac{k_1^2|\na_1\rho|^2}{\hat D_0^2}dk=0,
\eeqn
since
\beqn\nonumber
&&2v^2\langle\frac{ k_1k_2\hat\rho(r_0\cdot Jk)}{k^2\hat D_0^2},k\hat\rho\rangle+\langle\frac{\hat\rho(r_0\cdot Jk)}{k^2\hat D_0}Jk,\hat\rho\rangle\\
\nonumber
&&~~~~=r_{01}e_1\int\frac{(v^2k_1^2+k^2)k_2^2|\hat\rho|^2}{k^2\hat D_0^2}dk
-r_{02}e_2\int\frac{(v^2k_2^2+k^2(v^2-1))k_1^2|\hat\rho|^2}{k^2\hat D_0^2}dk.
\eeqn



\begin{thebibliography}{99}
\bibitem{A}
S. Agmon, Spectral properties of Schr\"odinger operator and scattering theory,
{\em Ann.~Scuola Norm.~Sup.~Pisa}, Ser.~IV {\bf 2}, 151-218 (1975).
\vspace{-3 mm}
\bibitem{BKKS}
V. Buslaev, A. Komech, E. Kopylova, D. Stuart,
On Asymptotic Stability of Solitary Waves in Schr\"odinger Equation Coupled to Nonlinear Oscillator,
{\em Comm. Partial Diff. Eqns.}  {\bf 33} (2008), no. 4, 669--705.
\vspace{-3 mm}
\bibitem{BS}
V.S. Buslaev, C. Sulem, On asymptotic stability of solitary waves for nonlinear Schr\"odinger equations,
{\em Ann. Inst. Henri Poincar\'e, Anal. Non Lin\'eaire} {\bf 20}(2003), no.3, 419-475.
\vspace{-3 mm}
\bibitem{CKK2023}
A. Comech, A. Komech, E. Kopylova,  Attractors of Hamiltonian nonlinear partial differential equations. 
Chapter in: Partial Differential Equations and Functional Analysis, Birkhauser, 2023.
\vspace{-3 mm}
\bibitem{IKM2004}
V. Imaykin, A. Komech, N. Mauser, 
Soliton-type asymptotics for the coupled Maxwell--Lorentz equations, {\em Ann. Inst. Poincar\'e  Phys. Theor.} {\bf  5} (2004), 1117--1135.
\vspace{-3 mm}
\bibitem{IKS2004}
V. Imaykin, A. Komech, H. Spohn, 
Rotating charge coupled to the Maxwell field: scattering theory and adiabatic limit, 
{\em Monatsh. Math.} {\bf 142} (2004), no. 1--2,  143--156.
\vspace{-3 mm} 
 \bibitem{IKS2011}
V. Imaykin, A. Komech, H. Spohn, 
Scattering asymptotics for a charged particle coupled to the Maxwell field, 
{\em J. Math. Phys.} {\bf 52} (2011), no. 4,  042701.
\vspace{-3 mm}
\bibitem{IKV2006}
V. Imaykin, A. Komech, B. Vainberg,
On scattering of solitons for the Klein--Gordon equation coupled to a particle
{\em Commun. Math. Phys.} {\bf 268} (2006), 321--367. 
\vspace{-3 mm}\bibitem{IKV2012}
V. Imaykin, A. Komech, B. Vainberg, 
 Scattering of solitons for coupled wave-particle equations,  {\em J. Math. Analysis and Appl.} {\bf 389} (2012), no. 2, 713--740.
\vspace{-3 mm}
\bibitem{JK79}
Jensen A., Kato T., 
Spectral properties of Schr\"odinger  operators and time-decay of the wave functions, 
{\em Duke Math. J.} {\bf 46} (1979), 583--611.
\vspace{-3 mm}
\bibitem{KS2000}
A. Komech, H. Spohn, 
Long-time asymptotics for the coupled Maxwell- Lorentz equations, 
{\em Comm. Partial Diff. Eqs.} {\bf 25} (2000), no. 3/4, 559--584.
\vspace{-3 mm}
\bibitem{KK2006}
A. Komech, E.A. Kopylova, 
Scattering of solitons for Schr\"odinger equation coupled to a particle, 
{\em Russian J. Math. Phys.} {\bf 50} (2006), no. 2, 158--187. 
\vspace{-3 mm}
\bibitem{KK2012}
A. Komech, E. Kopylova, 
Dispersion decay and scattering theory. John Willey \& Sons, Hoboken, New Jersey, 2012.
\vspace{-3 mm}
\bibitem{KK2020} 
A. Komech, E. Kopylova, 
Attractors of nonlinear Hamiltonian partial differential equations, 
{\em Russ. Math. Surv.} {\bf 75} (2020), no. 1, 1--87.
\vspace{-3 mm}
\bibitem{KK2022}
A. Komech, E. Kopylova, 
Attractors of Hamiltonian nonlinear partial differential equations, Cambridge University Press, Cambridge, 2022.
\vspace{-3 mm}
\bibitem{KKS2011}
A. Komech, E. Kopylova, H. Spohn, 
Scattering of solitons for Dirac equation coupled to a particle, 
{\em J. Math. Analysis and Appl.} {\bf 383} (2011), no. 2, 265--290
\vspace{-3 mm}
\bibitem{KKS2018}
A. Komech, E. Kopylova, H. Spohn, 
On global attractors and radiation damping for nonrelativistic particle coupled to scalar field. 
{\em St. Petersburg Math. J.} {\bf 29} (2018), no. 2, 249--266.
\vspace{-8mm}
\bibitem{KK2023}
E. Kopylova,  A. Komech,
On asymptotic stability of solitons for 2D Maxwell--Lorentz equations,  {\em J. Math. Phys.} {\bf 64}  (2023), 101504.
\vspace{-3 mm}
\bibitem{KK2023+}
E. Kopylova,  A. Komech,
Global attraction to solitons for 2D Maxwell--Lorentz equations with spinning particle,
{\em St. Petersburg Math. J.} {\bf 35} (2024), 827--838. 
\vspace{-3 mm}
\bibitem{KK2024}
A. Komech, E. Kopylova, 
On orbital stability of solitons for 2D Maxwell--Lorentz equations. 
{\em Comm. Pure and  Appl. Anal.}  {\bf 24} (2024), 325--338.
\vspace{-3 mm}
\bibitem{K2010}
E. Kopylova,
Dispersive estimates for the 2D wave equation, {\em Russian J. Math. Phys.} {\bf 17} (2010), no. 2, 226-239.
\vspace{-3 mm}
\bibitem{O}
F. W. J. Olver \& al., NIST Handbook of Mathematical Functions, Cambridge University Press, Cambridge, 2010.
\vspace{-3 mm}
\bibitem{S2005} 
W Schlag, Dispersive estimates for Schr\"odinger operators in dimension two. 
{\em Comm. Math. Phys.} {\bf 257} (2005), no. 1, 87--117.
\vspace{-3 mm}
\bibitem{S2004}
 H. Spohn, Dynamics of Charged Particles and Their Radiation Field, Cambridge
University Press, Cambridge, 2004.

\end{thebibliography}
\end{document}